\newtheorem{assumption}{Assumption}
\begin{document}
\title{Scalable Asynchronous Federated Modeling for Spatial Data}

\author{
\hspace{-3mm}  \name Jianwei Shi$^1$ \email jianwei.shi@kaust.edu.sa 
  \AND
  \name Sameh Abdulah$^2$ \email sameh.abdulah@kaust.edu.sa 
  \AND
  \name Ying Sun$^1$ \email ying.sun@kaust.edu.sa 
  \AND
  \name Marc G. Genton$^1$ \email marc.genton@kaust.edu.sa \\
  \addr $^1$Statistics Program \\
  \addr $^2$Applied Mathematics and Computational Sciences Program\\
  King Abdullah University of Science and Technology, 
  Thuwal, 23955, Saudi Arabia
}

\editor{}

  \maketitle

\begin{abstract}
Spatial data are central to applications such as environmental monitoring and urban planning, but are often distributed across devices where privacy and communication constraints limit direct sharing. Federated modeling offers a practical solution that preserves data privacy while enabling global modeling across distributed data sources. For instance, environmental sensor networks are privacy- and bandwidth-constrained, motivating federated spatial modeling that shares only privacy-preserving summaries to produce timely, high-resolution pollution maps without centralizing raw data. However, existing federated modeling approaches either ignore spatial dependence or rely on synchronous updates that suffer from stragglers in heterogeneous environments. This work proposes an asynchronous federated modeling framework for spatial data based on low-rank Gaussian process approximations. The method employs block-wise optimization and introduces strategies for gradient correction, adaptive aggregation, and stabilized updates. We establish linear convergence with explicit dependence on staleness, a result of 
{standalone theoretical significance}. Moreover, numerical experiments demonstrate that the asynchronous algorithm achieves synchronous performance under balanced resource allocation and significantly outperforms it in heterogeneous settings, showcasing superior robustness and scalability.
\end{abstract}

\begin{keywords}
Asynchronous federated learning, distributed spatial modeling, Gaussian processes, low-rank approximation, block-wise optimization.
\end{keywords}

  \section{Introduction}

  Spatial data consist of information associated with specific locations and play a crucial role in various applications, including environmental monitoring \citep{haining1993spatial}, climate modeling \citep{daly2006guidelines}, public health  \citep{waller2004applied}, urban planning \citep{duhr2012role}, and transportation analysis \citep{miller1999potential}. These data are often distributed across multiple locations and collected from diverse devices, including satellites, sensors, and mobile devices. As a result, privacy and communication constraints usually restrict the direct sharing of raw data \citep{mckenna2018role}. Federated learning \citep{konevcny2016federated,mcmahan2017communication, graser2022role}, first introduced in the Machine Learning (ML) community, provides a promising solution by enabling collaborative analysis while preserving data privacy and minimizing the need for data exchange between different workers. Typically, a central server coordinates the learning process, while local workers (devices) perform computations on their data and communicate only some summary statistics, rather than sharing the raw data itself, 
  {thereby preserving privacy.}

{Gaussian processes (GPs) are widely used in modeling spatial data due to their flexibility in capturing spatial dependence \citep{stein1999interpolation}. We use the term federated modeling to describe the collaborative estimation of a GP model across multiple local spatial data sources without sharing raw observations.}
  A key challenge in federated 
  {modeling} for spatial data 
  arises from the strong dependence among spatial observations. 
Unlike standard federated learning setups, which assume independent local samples among workers, spatial datasets exhibit cross-worker correlation that must be explicitly accounted for. Consequently, the log-likelihood function cannot be decomposed into a simple sum of local functions computed from each worker’s data, as is typically required for federated optimization.

  To address this challenge, the literature provides two main classes of approaches. The first class \citep{yin2020fedloc,achituve2021personalized,yu2022federated,guo2022federated,yue2024federated, kontoudis2024scalable} ignores spatial dependence, treating the global log-likelihood as the sum of local log-likelihoods from different workers. 
  To better handle the spatial dependence, the second class leverages low-rank approximations  
to approximate either the prior dependence \citep{shi2025decentralized,katzfuss2017parallel} or posterior dependence \citep{hoang2016distributed,chung2024federated,gao2024federated}. 
In this class, spatial dependence among workers, whether in the prior or posterior, is captured through a shared low-rank random vector, which induces conditional independence among workers and enables the log-likelihood to be decoupled once conditioned on this vector. Since approximating the prior is more principled and interpretable than approximating the posterior {(See Section \ref{sec:literature_federate_spatial} for more details)}. 
This article focuses on low-rank approximations to the prior, leaving posterior approximations for future investigation.

{In this work, we emphasize that low-rank models are particularly suitable for federated modeling for spatial data. On one hand, the conditional log-likelihood on the low-rank random vector can be decoupled, yielding a new objective that can be expressed as a sum of local contributions, thereby facilitating federated optimization (see Section~\ref{sec:low-rank} and also \cite{shi2025decentralized}). On the other hand, the low-rank approximation preserves the dependence structure among workers to some extent and is thus closer to the original model.}
  {Specifically}, we theoretically establish that the low-rank approximation yields an asymptotically smaller dimension-normalized Kullback–Leibler (KL) divergence from the original distribution compared to the naive independence model (see Proposition \ref{prop:kl_IDVSLOW} in Section \ref{sec:low-rank}). Consistent with this theory, the numerical results in Section \ref{sec:simulation:comparison_lowrank_independent} demonstrate that the low-rank model {indeed achieves a smaller KL divergence} and provides more accurate and robust parameter estimates than the independence model.  

  However, to the best of our knowledge, existing federated modeling methods for spatial data based on low-rank models rely on synchronous updates for parameter estimation \citep{shi2025decentralized,gao2024federated}. In each iteration of the optimization, the server must wait for all local workers to complete their computations and communication before proceeding. Still, some machines or devices have far less compute, slower networks, or larger local datasets, creating imbalances that produce stragglers and delay the overall modeling process. To mitigate this inefficiency, asynchronous federated learning enables the server to process and aggregate partial updates from clients as soon as they become available, without waiting for all participants to complete their updates, thereby reducing idle wait times and increasing system throughput \citep{xu2023asynchronous,xu2024fedfa}. 

 Although significant progress has been made in asynchronous federated learning { (see, e.g., \cite{fraboni2023general,xie2019asynchronous,wang2024tackling}; see also Section~\ref{sec:asynchronous_federated_learning} for details)}, these methods are not directly applicable to spatial low-rank models. The fundamental limitation lies in the fact that existing asynchronous approaches are primarily designed for loss functions expressed as a summation of local objectives, with parameters updated simultaneously. By contrast, the loss function associated with spatial low-rank models exhibits a more intricate structure, in which the parameters can be partitioned into multiple blocks \citep{shi2025decentralized}. For certain blocks, closed-form solutions exist, provided the remaining blocks are fixed. Consequently, block-wise optimization is more suitable for such models than simultaneous updates, ensuring computational efficiency.

  This work introduces a novel asynchronous federated learning method for spatial low-rank models. The proposed method adopts the block-wise optimization scheme of its synchronous counterpart while addressing two challenges posed by asynchronous updates. The first challenge arises on the worker side: a worker may receive multiple candidate vectors before it is ready for the next computation, creating ambiguity in the choice of parameter vector and the identification of the corresponding block’s local quantity. To resolve this, each parameter vector is augmented with an iteration index and a block label that specifies the associated local quantity. Each worker maintains a buffer to store the received augmented vectors, with the relative order of updates preserved for each block label, ensuring the most recent parameter vector is used in computation.
The second challenge arises on the server side: naive averaging or summation of local quantities, as employed in the synchronous algorithm, can lead to instability and slower convergence due to the asynchronous arrival and varying staleness of updates. To mitigate this issue, we develop three strategies: (i) local gradient correction, which incorporates Hessian information to compensate for inconsistencies in local gradients; (ii) adaptive aggregation, which dynamically adjusts contribution weights according to the staleness of local quantities; and (iii) a moving average of historical vectors, which improves stability by smoothing fluctuations in updates.

  We establish the theoretical guarantee for the convergence of the proposed asynchronous algorithm. In particular, {under some reasonable conditions,} we show that the algorithm achieves linear convergence to the optimal solution when the step size is chosen appropriately, with an explicit characterization of its dependence on staleness. When the effect of average staleness can be neglected, the step size should scale inversely with the maximum staleness across workers to ensure linear convergence, and the resulting convergence rate improves proportionally with the inverse of this maximum staleness. To the best of our knowledge, both the algorithm and its theoretical foundation have not been previously explored in the optimization literature, and thus our results may be of 
  {standalone theoretical significance}. Further details are provided in Section~\ref{sec:theory}. 
  
  The effectiveness of the proposed method is further demonstrated through numerical experiments. The results indicate that the asynchronous algorithm achieves comparable, and in some cases slightly inferior, performance to the synchronous version when computational resources are well balanced. However, in scenarios with substantial computational imbalance, arising from heterogeneous processing power or unequal local sample sizes, the asynchronous algorithm outperforms its synchronous counterpart. This advantage is observed across diverse parameter settings, sample sizes, number of machines, and data partitioning schemes, underscoring the robustness and practical utility of the proposed approach.

  The contributions of this article are fourfold. First, we provide both theoretical and numerical comparisons between the independence model and the low-rank model, highlighting the advantages of the latter in terms of approximation quality and parameter estimation. Second, we propose a novel asynchronous algorithm designed explicitly for spatial low-rank models. Third, we establish theoretical convergence guarantees for the proposed asynchronous algorithm, which may also be of standalone theoretical significance. Fourth, we conduct extensive numerical experiments to evaluate the performance of the algorithm, demonstrating its robustness and practical benefits across diverse computational scenarios.

{ Section~\ref{sec:background} reviews spatial federated learning, contrasts independence-based and low-rank objectives, and surveys asynchronous methods with focus on staleness and objective heterogeneity. Section~\ref{sec:low-rank} reviews federated low-rank spatial modeling and derives the summation-form objective that enables distributed optimization. Section~\ref{sec:asynchronous} introduces our asynchronous federated algorithm, tailored to the block structure of the model, which details worker/server buffering, staleness-aware aggregation, local gradient correction, and moving-average stabilization. Section~\ref{sec:theory} establishes linear convergence with explicit dependence on staleness and clarifies the step-size scaling. Section~\ref{sec:simulation} presents numerical experiments comparing independence versus low-rank models and asynchronous versus synchronous updates under heterogeneous compute and data settings. The paper concludes with a summary and future directions, while the Appendix provides derivations, proofs, and more experimental results.}

 \section{Background}
   \label{sec:background}
  Federated learning (FL) has gained significant attention as a privacy-preserving approach for distributed model training. While conventional FL methods assume independent distributed data across workers, spatial data exhibit strong spatial correlations, posing unique challenges for standard federated optimization techniques. In this section, we discuss relevant works on federated learning for spatial data and asynchronous federated learning methods.

      \subsection{Federated Learning for Spatial Data}\label{sec:literature_federate_spatial}
  Handling spatial data in federated learning frameworks requires specialized techniques to account for spatial correlations. The most straightforward approach is to ignore spatial dependence and assume independence across clients, leading to a log-likelihood function that is the sum of local log-likelihood functions. Works adopting this method typically differ in their choice of local functions and optimization techniques for parameter estimation. \cite{yin2020fedloc} directly use the log-likelihood function based on local data as the local function. However, when the local sample size is large, the computational cost becomes significant. To mitigate this, \cite{achituve2021personalized}, \cite{guo2022federated}, and \cite{yu2022federated} employ low-rank approximation-based local functions, whereas \cite{yue2024federated} keep the local function unchanged but reduce computational overhead by applying stochastic gradient methods on small mini-batches of local data. Furthermore, \cite{kontoudis2024scalable} extend this approach to decentralized communication networks.

  To better capture spatial correlations among local workers, low-rank approximations have been widely explored, striking a balance between computational efficiency and the preservation of essential spatial structures. These approximations can be broadly categorized into two types. The first type approximates the prior distribution by employing a shared low-rank structure to directly model spatial dependencies \citep{katzfuss2017parallel,shi2025decentralized}. The second type approximates the posterior distribution by using low-rank representations to capture conditional spatial dependencies given the observed data; these representations are then optimized via variational inference \citep{gal2014distributed,hoang2016distributed}. Recent works \citep{chung2024federated,gao2024federated} extend this second approach to the federated learning of multi-output Gaussian processes. The main advantage of the first type is that it corresponds to a valid Gaussian process if the low-rank model is appropriately chosen, making the model more interpretable and theoretically grounded. In contrast, the second type has the advantage of providing a variational lower bound on the original log-likelihood function, which can help mitigate overfitting, especially when the low-rank structure itself is learned and introduces a large number of additional parameters \citep{bui2017unifying}. In this article, we focus on the first type of approximation and leave the exploration of the second type for future work.

  \subsection{Asynchronous Federated Learning}\label{sec:asynchronous_federated_learning}
  Most existing federated learning methods rely on synchronous updates, where the central server waits for all workers to complete their local computations before aggregating the results. This synchronization requirement can introduce significant inefficiencies, particularly in heterogeneous environments where devices vary in terms of computational power, network conditions, or data sizes. To address these limitations, recent studies have explored asynchronous federated learning strategies. A comprehensive review of these approaches can be found in the survey paper by \cite{xu2023asynchronous}.

  The performance of asynchronous federated learning can be significantly impacted by the staleness of local computations and the heterogeneity of local functions \citep{fraboni2023general}. Staleness refers to the use of outdated global parameters by workers during local updates, which can lead to slower convergence. Heterogeneity arises when local functions differ across workers. In this case, server updates based on a subset of workers may be biased, deviating from the results that would be obtained if all workers contributed. To mitigate the issues from staleness and heterogeneity, various strategies have been proposed \citep{xie2019asynchronous,nguyen2022federated,xu2024fedfa,wang2024tackling,zakerinia2024communication,zang2024efficient}, 
  For example,  \cite{xie2019asynchronous} propose staleness-aware algorithms that adaptively adjust the learning rate based on the staleness of local updates. \cite{wang2024tackling} leverages historical local information to reduce the bias in server updates caused by heterogeneity. \cite{zang2024efficient} address both staleness and heterogeneity simultaneously through an efficient update scheme.

  \section{Federated Low-Rank Spatial Modeling}
  \label{sec:low-rank}

  This section provides an overview of spatial low-rank models. It discusses how they can be applied within the constraints of federated modeling, including strategies for distributed computation, data privacy preservation, and communication-efficient parameter estimation. Herein, we consider a standard spatial regression model in which the observed data at location $\mathbf{s} \in \mathcal{S}$ are decomposed into a linear contribution from covariates, a spatially correlated latent process, and independent measurement noise, as follows:

  \begin{equation}\label{eq:spatial_data}
    z(\textbf{s})=\boldsymbol{x}^{\top}(\textbf{s})\boldsymbol{\gamma}+ w(\textbf{s})+\epsilon(\textbf{s}), \quad \textbf{s} \in \mathcal{S},
  \end{equation}
where $z(\textbf{s})$ is the observed data at location $\textbf{s}$, 
$\mathcal{S}$ is the spatial domain, 
$\boldsymbol{x}\in\mathbb{R}^p$ is the covariate vector, 
$\boldsymbol{\gamma}\in\mathbb{R}^p$ is the coefficient vector of the covariates, 
$w(\textbf{s})$ is the latent Gaussian process with mean zero and covariance function 
$c_{\boldsymbol{\theta}}(\textbf{s},\textbf{s}')$ parameterized by the vector $\boldsymbol{\theta}$, 
and $\epsilon(\textbf{s})$ is a Gaussian noise term with mean zero and variance $\delta^{-1}$ where $\delta$ is a positive parameter.



  Under a federated data modeling setting and constraints, the data are distributed across $J$ workers, 
each owning a local dataset 
$\mathcal{D}_j = \{(\textbf{s}_{i}, z_{i}, \boldsymbol{x}_{i})\}_{i\in \mathcal{I}_j}$, 
where $z_i=z(\textbf{s}_i),\boldsymbol{x}_i=\boldsymbol{x}^{\top}(\textbf{s}_i)$ and $\mathcal{I}_j$ is the index set of data points at worker $j$ with size $n_j=|\mathcal{I}_j|$. 
The full dataset is the union of all local datasets, 
$\mathcal{D} = \bigcup_{j=1}^J \mathcal{D}_j$. 
For convenience, we define the following notations. 
At each worker $j$, define $\boldsymbol{z}_j=(z_i,i\in \mathcal{I}_j)$ as the vector of observations, 
$\boldsymbol{X}_j=(\boldsymbol{x}_{i},i\in \mathcal{I}_j)^\top\in \mathbb{R}^{n_j\times p}$ as the covariate matrix, 
$S_j=\{\textbf{s}_i,i\in \mathcal{I}_j\}$ as the set of locations. 
We also define 
$\boldsymbol{z}=(\boldsymbol{z}_{1}^\top, \ldots, \boldsymbol{z}_{J}^\top)^\top$, 
$\boldsymbol{X}=(\boldsymbol{X}_{1}^\top, \ldots, \boldsymbol{X}_{J}^\top)^\top$, 
$S=S_1\bigcup \cdots \bigcup S_J$, 
$N=\sum_{j=1}^J n_j$ as the full dataset version of $\boldsymbol{z}_j, \boldsymbol{X}_j, S_j, n_j$. 
For two sets of locations $S'$ and $S''$, define 
$\boldsymbol{C}_{\boldsymbol{\theta}}(S',S'')=\{c_{\boldsymbol{\theta}}(\textbf{s}',\textbf{s}'')\}_{\textbf{s}' \in S',\textbf{s}'' \in S''}$ 
as the (cross-)covariance matrix of the latent Gaussian process among locations in $S'$ and $S''$. 
It reduces to a vector when $S'$ or $S''$ contains only one location. 
For a square matrix $\boldsymbol{C}$, $\text{diag}(\boldsymbol{C})$ is the diagonal matrix with the diagonal elements of $\boldsymbol{C}$.

  The full log-likelihood for data $S$ is:
  \begin{equation}
    \ell(\boldsymbol{\gamma}, \boldsymbol{\theta}, \delta) := -\frac{N}{2}\log(2\pi) - \frac{1}{2}\log|\boldsymbol{C}(\boldsymbol{\theta})+\delta^{-1}\boldsymbol{I}| - \frac{1}{2}(\boldsymbol{z}-\boldsymbol{X}\boldsymbol{\gamma})^{\top}\left(\boldsymbol{C}(\boldsymbol{\theta})+\delta^{-1}\boldsymbol{I}\right)^{-1}(\boldsymbol{z}-\boldsymbol{X}\boldsymbol{\gamma}),
  \end{equation}
  where $\boldsymbol{C}(\boldsymbol{\theta}):=\boldsymbol{C}_{\boldsymbol{\theta}}(S, S)$. Due to the spatial correlations among observations across different workers, the full log-likelihood cannot be decomposed into a sum of local functions, which is a key requirement for federated spatial modeling implementations.

  To address this challenge, one approach assumes independence between cross-worker spatial correlations, resulting in a log-likelihood function that decomposes into a sum of local log-likelihoods \citep{yin2020fedloc,achituve2021personalized,guo2022federated,yu2022federated,yue2024federated,kontoudis2024scalable}:

  \begin{equation}\label{eq:ind_log_likelihood}
    \ell^\text{ind}(\boldsymbol{\gamma}, \boldsymbol{\theta}, \delta): =\sum_{j=1}^J\ell_j(\boldsymbol{\gamma}, \boldsymbol{\theta}, \delta),
  \end{equation}
  with $\ell_j(\boldsymbol{\gamma}, \boldsymbol{\theta}, \delta): =
  -\frac{n_j}{2}\log(2\pi) - \frac{1}{2}\log|\boldsymbol{C}_j(\boldsymbol{\theta})+\delta^{-1}\boldsymbol{I}| - \frac{1}{2}(\boldsymbol{z}_j-\boldsymbol{X}_j\boldsymbol{\gamma})^{\top}\left(\boldsymbol{C}_j(\boldsymbol{\theta})+\delta^{-1}\boldsymbol{I}\right)^{-1}(\boldsymbol{z}_j-\boldsymbol{X}_j\boldsymbol{\gamma})$ where
  $C_j(\boldsymbol{\theta}):=C_{\boldsymbol{\theta}}(S_j, S_j)$. However, neglecting the spatial correlations between observations across different workers results in information loss and may compromise the performance of the federated modeling approach.

  The other approach leverages low-rank models to capture spatial correlations across workers \citep{shi2025decentralized,katzfuss2017parallel,chung2024federated,gao2024federated}.  Given a set of shared knots $S^*\subset \mathcal{S}$ among workers with cardinality $m \ll N$, we consider the {knots-based} low-rank model of the form:
  \begin{equation}
    z(\textbf{s}_{i}) = \boldsymbol{x}_{i}^\top \boldsymbol{\gamma} + \boldsymbol{b}^\top(\textbf{s}_{i}; \boldsymbol{\theta}) \boldsymbol{\eta} + \widetilde{w}(\textbf{s}_{i}) + \epsilon(\textbf{s}_{i}), \quad i\in \mathcal{I}_j,\; j=1,\ldots,J,
  \end{equation}
  where $\boldsymbol{b}(\textbf{s}_i; \boldsymbol{\theta})=\boldsymbol{C}_{\boldsymbol{\theta}}(\{\textbf{s}_i\}, S^*)\boldsymbol{C}_{\boldsymbol{\theta}}(S^*, S^*)^{-1}\in \mathbb{R}^{m}$, $\boldsymbol{\eta}\in \mathbb{R}^{m}$ is the low-dimensional random vector with a multivariate normal distribution $\boldsymbol{\eta} \sim \mathcal{N}(\boldsymbol{0}_m, \boldsymbol{K}(\boldsymbol{\theta}))$ where $\boldsymbol{K}(\boldsymbol{\theta})=\boldsymbol{C}_{\boldsymbol{\theta}}(S^*, S^*)\in \mathbb{R}^{m\times m}$, and $\widetilde{w}(\textbf{s}_{i})$ is the residual term that is assumed to be independent from $\boldsymbol{\eta}$ and is also independent across different workers. According to the trade-off of local computation complexity and approximation accuracy, there are several choices for the distribution of the residual term; these alternatives will be discussed in detail later in {Remark \ref{re:residual}} 
  With this model, the cross-covariance matrix of observations among different workers $j_1,j_2$ is $\boldsymbol{C}_{\boldsymbol{\theta}}(S_{j_1}, S^*)\boldsymbol{C}_{\boldsymbol{\theta}}(S^*, S^*)^{-1} \boldsymbol{C}_{\boldsymbol{\theta}}(S^*, S_{j_2})$, and when the knots are appropriately chosen, the cross-covariance matrix is close to the true cross-covariance matrix \citep{banerjee2008gaussian}. 

\begin{remark}[{Knot selection}]
    {Knot selection has been studied primarily in centralized settings where all data are available on a single machine. Data-dependent approaches include selecting a subset of observed locations randomly, as in the Nyström method \citep{kumar2012sampling}, optimizing knots by minimizing the KL divergence between the variational approximation and the exact posterior distribution \citep{titsias2009variational}, or placing knots according to support points \citep{song2025large}. Alternatively, domain-based approaches, such as placing knots on a regular grid or employing space-filling designs \citep{banerjee2008gaussian}, do not require access to data locations. In federated settings, where locations may not be shared, the former class requires adaptation, whereas the latter can be applied directly. For simplicity, in our numerical experiments, we adopt the latter approach and place knots on a predefined grid over the domain.}
\end{remark}

  \begin{remark}[{Low-Rank Models}]
    Several types of low-rank models have been developed for spatial data, including knots-based methods~\citep{banerjee2008gaussian}, eigenfunction expansions~\citep{greengard2022efficient}, and random Fourier features~\citep{hensman2018variational}. Each approach corresponds to a different choice of the basis function \(\boldsymbol{b}(\textbf{s}_{j}; \boldsymbol{\theta})\) and the associated covariance matrix \(\boldsymbol{K}\). Since the asynchronous algorithm to be introduced 
    {can be adapted to} different low-rank models, and the low-rank model based on knots-based methods is more flexible and easier to implement. This work focuses on the low-rank model based on knot-based methods.
  \end{remark}

  Although the log-likelihood function of the low-rank model (denoted as \(\ell^{\text{low-rank}}\)) still cannot be expressed as a sum of local functions, {we show in Appendix \ref{sec:appendix:additional_details} that} it is closely related to a new objective function 
  that can be written as a summation of local functions, and a common term:
  \begin{equation}\label{eq:low-rank-obj}
    \ell^\text{low-rank}(\boldsymbol{\gamma}, \delta, \boldsymbol{\theta}) =-
    \min_{\boldsymbol{\mu}, \boldsymbol{\Sigma}} \left\{
      f(\boldsymbol{\mu}, \boldsymbol{\Sigma}, \boldsymbol{\gamma}, \delta, \boldsymbol{\theta})
      \coloneqq \sum_j f_j(\boldsymbol{\mu}, \boldsymbol{\Sigma}, \boldsymbol{\gamma}, \delta, \boldsymbol{\theta})
      + h(\boldsymbol{\mu}, \boldsymbol{\Sigma}, \boldsymbol{\theta})
      \right\},
    \end{equation}
    where  {$\boldsymbol{\mu}\in \mathbb{R}^{m}$ and $\boldsymbol{\Sigma}\in \mathbb{R}^{m\times m }$}, by denoting  $\boldsymbol{R}_j(\delta, \boldsymbol{\theta}) =(\operatorname{cov}(\widetilde{w}(\textbf{s}_{i_1}), \widetilde{w}(\textbf{s}_{i_2})))_{i_1,i_2\in \mathcal{I}_j}+\delta^{-1} \boldsymbol{I}_{n_j}$ and $\boldsymbol{B}_j(\boldsymbol{\theta}) =(\boldsymbol{b}(\textbf{s}_1; \boldsymbol{\theta}), \ldots, \boldsymbol{b}(\textbf{s}_{n_j}; \boldsymbol{\theta}))^\top\in \mathbb{R}^{n_j\times m}$,
    \begin{align*}
      f_j & := \frac{1}{2} \log \det \left[ \boldsymbol{R}_j(\delta, \boldsymbol{\theta}) \right]
      + \frac{1}{2} \operatorname{tr} \left\{ \boldsymbol{B}_j^\top(\boldsymbol{\theta}) \boldsymbol{R}_j^{-1}(\delta, \boldsymbol{\theta}) \boldsymbol{B}_j(\boldsymbol{\theta}) (\boldsymbol{\Sigma} + \boldsymbol{\mu} \boldsymbol{\mu}^\top) \right\} \\
       & \quad - (\boldsymbol{z}_j - \boldsymbol{X}_j \boldsymbol{\gamma})^\top \boldsymbol{R}_j^{-1}(\delta, \boldsymbol{\theta}) \boldsymbol{B}_j(\boldsymbol{\theta}) \boldsymbol{\mu}
      + \frac{1}{2} (\boldsymbol{z}_j - \boldsymbol{X}_j \boldsymbol{\gamma})^\top \boldsymbol{R}_j^{-1}(\delta, \boldsymbol{\theta}) (\boldsymbol{z}_j - \boldsymbol{X}_j \boldsymbol{\gamma}), \\
      h & := \frac{1}{2} \left[ \boldsymbol{\mu}^\top \boldsymbol{K}^{-1}(\boldsymbol{\theta}) \boldsymbol{\mu}
      + \operatorname{tr} \left( \boldsymbol{K}^{-1}(\boldsymbol{\theta}) \boldsymbol{\Sigma} \right)
      - \log \left( \frac{\det(\boldsymbol{\Sigma})}{\det(\boldsymbol{K}(\boldsymbol{\theta}))} \right)
      - m \right].
    \end{align*}
    {This result is a generalization of \cite{shi2025decentralized}, where the residual terms  \(\widetilde{w}(\textbf{s}_{i})\) is zero,  to a more general covariance structure for the residual.}


    \begin{remark}[{Covariance structure for the residual}]\label{re:residual}
      The distribution of the residual term \(\widetilde{w}(\textbf{s}_{i})\) can be specified in three common ways:

      \textbf{1. Predictive Process Approach}~\citep{banerjee2008gaussian}: This approach assumes no residual term, i.e., \(\widetilde{w}(\textbf{s}_{i}) = 0\), resulting in covariance matrix \(\boldsymbol{R}_j(\delta, \boldsymbol{\theta}) = \delta^{-1} \boldsymbol{I}_{n_j}\).

      \textbf{2. Modified Predictive Process Approach}~\citep{finley2009improving}: To account for the discrepancy between the full Gaussian process and its low-rank approximation, the residual \(\widetilde{w}(\textbf{s}_{i})\) is modeled as independent Gaussian noise with variance equal to the difference between the marginal variance of the full process and that of the low-rank approximation: \vspace{-10pt}
      \begin{equation}\label{eq:modified_predictive_process}
        \widetilde{w}(\textbf{s}_{i}) \stackrel{iid}{\sim} \mathcal{N}\left(0, c_{\boldsymbol{\theta}}(\textbf{s}_{i},\textbf{s}_{i}) - \boldsymbol{C}_{\boldsymbol{\theta}}(\{\textbf{s}_i\}, S^*) \boldsymbol{C}_{\boldsymbol{\theta}}(S^*, S^*)^{-1} \boldsymbol{C}_{\boldsymbol{\theta}}(S^*, \{\textbf{s}_i\})\right),\vspace{-10pt}
      \end{equation}
      leading to:
      \vspace{-10pt}
      \[\vspace{-10pt}
      \boldsymbol{R}_j(\delta, \boldsymbol{\theta}) = \operatorname{diag}\left(\boldsymbol{C}_{\boldsymbol{\theta}}(S_j, S_j) - \boldsymbol{C}_{\boldsymbol{\theta}}(S_j, S^*) \boldsymbol{C}_{\boldsymbol{\theta}}(S^*, S^*)^{-1} \boldsymbol{C}_{\boldsymbol{\theta}}(S^*, S_j)\right) + \delta^{-1} \boldsymbol{I}_{n_j}.
      \]
      Compared to the predictive process approach, this modification recovers the marginal variance of the latent Gaussian process at each location, yielding improved accuracy.

      \textbf{3. Full Local Covariance Approach}: This method preserves the full residual covariance structure within each worker. For worker \(j\), let \(\boldsymbol{\widetilde{w}}_j = (\widetilde{w}(\textbf{s}_{i}))_{i \in \mathcal{I}_j}^\top\) denote the residual vector. It is assumed to follow a multivariate normal distribution with a covariance matrix given by the difference between the full local covariance and the low-rank approximation: \vspace{-10pt}
      \begin{equation}\label{eq:full_local_covariance}
        \boldsymbol{\widetilde{w}}_j \sim \mathcal{N}\left(\boldsymbol{0}_{n_j}, \boldsymbol{C}_{\boldsymbol{\theta}}(S_j, S_j) - \boldsymbol{C}_{\boldsymbol{\theta}}(S_j, S^*) \boldsymbol{C}_{\boldsymbol{\theta}}(S^*, S^*)^{-1} \boldsymbol{C}_{\boldsymbol{\theta}}(S^*, S_j)\right),
        \vspace{-10pt}
      \end{equation}
      resulting in:
      \vspace{-10pt}
      \[
      \boldsymbol{R}_j(\delta, \boldsymbol{\theta}) = \boldsymbol{C}_{\boldsymbol{\theta}}(S_j, S_j) - \boldsymbol{C}_{\boldsymbol{\theta}}(S_j, S^*) \boldsymbol{C}_{\boldsymbol{\theta}}(S^*, S^*)^{-1} \boldsymbol{C}_{\boldsymbol{\theta}}(S^*, S_j) + \delta^{-1} \boldsymbol{I}_{n_j}.
      \vspace{-10pt}
      \]
      While this approach offers the most accurate approximation of the residual process, it incurs a higher computational cost, scaling cubically with the local sample size due to the need to evaluate and invert dense covariance sub-matrices. {To alleviate the computational cost on each machine, the low-rank approximation can be further applied locally, resulting in a hierarchical low-rank approximation.}
    \end{remark}

{Since the algorithm proposed and theory developed later can be adapted to other cases, in the following, we focus on the low-rank model with full local covariance as in \eqref{eq:full_local_covariance}, where the cross-covariances between workers are approximated using a low-rank model, while the local covariances are fully preserved within each worker.}
     
    \begin{proposition}\label{prop:kl_IDVSLOW}
      Let \( P \), \( P_1 \), and \( P_2 \) denote the distributions of the latent variables under the original Gaussian process \eqref{eq:spatial_data}, the low-rank model with full local covariance \eqref{eq:full_local_covariance}, and the independent model \eqref{eq:ind_log_likelihood}, respectively. Then, the dimension-normalized KL divergence satisfies
      \[
      \mathrm{KL}_N(P \,\|\, P_1) \leq \mathrm{KL}_N(P \,\|\, P_2) + \frac{m}{N},
      \]
      where \( \mathrm{KL}_N(P \,\|\, Q) := \frac{1}{N} \mathrm{KL}(P \,\|\, Q) \) is the KL divergence scaled by the dimension \( N \), and \( Q \in \{P_1, P_2\} \).
    \end{proposition}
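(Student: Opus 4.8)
The three laws are centered Gaussians on $\mathbb{R}^{N}$. Writing $\boldsymbol{C}=\boldsymbol{C}_{\boldsymbol{\theta}}(S,S)$, $\boldsymbol{C}_j=\boldsymbol{C}_{\boldsymbol{\theta}}(S_j,S_j)$, $\boldsymbol{B}_j=\boldsymbol{C}_{\boldsymbol{\theta}}(S_j,S^*)\boldsymbol{K}^{-1}$ and $\boldsymbol{D}_j=\boldsymbol{C}_j-\boldsymbol{B}_j\boldsymbol{K}\boldsymbol{B}_j^{\top}$, I identify $P=\mathcal{N}(\boldsymbol{0},\boldsymbol{C})$, $P_1=\mathcal{N}(\boldsymbol{0},\boldsymbol{\Sigma}_1)$ with $\boldsymbol{\Sigma}_1=\boldsymbol{B}\boldsymbol{K}\boldsymbol{B}^{\top}+\boldsymbol{\Delta}$, $\boldsymbol{\Delta}:=\operatorname{diag}(\boldsymbol{D}_1,\dots,\boldsymbol{D}_J)$, and $P_2=\mathcal{N}(\boldsymbol{0},\boldsymbol{\Sigma}_2)$ with $\boldsymbol{\Sigma}_2:=\operatorname{diag}(\boldsymbol{C}_1,\dots,\boldsymbol{C}_J)$ (here $\operatorname{diag}(\cdot)$ denotes the block-diagonal matrix with the listed diagonal blocks, and $\boldsymbol{B}$ stacks the $\boldsymbol{B}_j$). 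The structural observation that makes everything work is that $\boldsymbol{C}$, $\boldsymbol{\Sigma}_1$, and $\boldsymbol{\Sigma}_2$ share the same block-diagonal part $\operatorname{diag}(\boldsymbol{C}_1,\dots,\boldsymbol{C}_J)$, since $\boldsymbol{B}_j\boldsymbol{K}\boldsymbol{B}_j^{\top}+\boldsymbol{D}_j=\boldsymbol{C}_j$. Multiplying the target inequality by $N$, it suffices to prove $\mathrm{KL}(P\,\|\,P_1)-\mathrm{KL}(P\,\|\,P_2)\le m$.

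I would \emph{not} take the tempting route of augmenting the latent vector by $\boldsymbol{\eta}$, applying the chain rule, and using data processing to obtain $\mathrm{KL}(P\,\|\,P_1)\le \mathrm{KL}\big(\mathcal{N}(\boldsymbol{0},\boldsymbol{D})\,\|\,\mathcal{N}(\boldsymbol{0},\boldsymbol{\Delta})\big)$ with $\boldsymbol{D}:=\boldsymbol{C}-\boldsymbol{B}\boldsymbol{K}\boldsymbol{B}^{\top}$: this bound is genuinely loose and can exceed $\mathrm{KL}(P\,\|\,P_2)+m$, so the exact Gaussian divergences must be used. Because $\mathrm{KL}(P\,\|\,P_1)$ and $\mathrm{KL}(P\,\|\,P_2)$ differ only in their cross-entropy terms, $\mathrm{KL}(P\,\|\,P_1)-\mathrm{KL}(P\,\|\,P_2)=\mathbb{E}_P[\log p_2(\boldsymbol{w})-\log p_1(\boldsymbol{w})]$; expanding the Gaussian log-densities and using $\operatorname{tr}(\boldsymbol{\Sigma}_2^{-1}\boldsymbol{C})=N$, which holds exactly because $\boldsymbol{\Sigma}_2$ and $\boldsymbol{C}$ have identical diagonal blocks, gives
\[
\mathrm{KL}(P\,\|\,P_1)-\mathrm{KL}(P\,\|\,P_2)=\tfrac12\Big(\log\tfrac{\det\boldsymbol{\Sigma}_1}{\det\boldsymbol{\Sigma}_2}+\operatorname{tr}(\boldsymbol{\Sigma}_1^{-1}\boldsymbol{C})-N\Big).
\]

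It remains to bound the two summands. For the log-determinant term, since $\boldsymbol{\Sigma}_1$ is positive definite with block-diagonal part $\operatorname{diag}(\boldsymbol{C}_1,\dots,\boldsymbol{C}_J)$, Fischer's determinant inequality gives $\det\boldsymbol{\Sigma}_1\le\prod_j\det\boldsymbol{C}_j=\det\boldsymbol{\Sigma}_2$, so this term is $\le 0$. The crux is the trace term, where the rank-$m$ structure is used: I claim $\operatorname{tr}(\boldsymbol{\Sigma}_1^{-1}\boldsymbol{C})-N\le m$. To prove it, apply the Woodbury identity to $\boldsymbol{\Sigma}_1^{-1}=(\boldsymbol{\Delta}+\boldsymbol{B}\boldsymbol{K}\boldsymbol{B}^{\top})^{-1}$, substitute $\boldsymbol{C}=\boldsymbol{D}+\boldsymbol{B}\boldsymbol{K}\boldsymbol{B}^{\top}$, and simplify with the cyclic property of the trace, noting that $\boldsymbol{\Delta}^{-1}\boldsymbol{C}$ has diagonal blocks $\boldsymbol{D}_j^{-1}\boldsymbol{C}_j$. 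Setting $\boldsymbol{G}:=\boldsymbol{B}^{\top}\boldsymbol{\Delta}^{-1}\boldsymbol{B}$, $\boldsymbol{\Psi}:=(\boldsymbol{K}^{-1}+\boldsymbol{G})^{-1}$, $\boldsymbol{V}:=\boldsymbol{\Delta}^{-1}\boldsymbol{B}$, the elementary identities $\boldsymbol{\Psi}\boldsymbol{G}=\boldsymbol{I}_m-\boldsymbol{\Psi}\boldsymbol{K}^{-1}$ and $\operatorname{tr}(\boldsymbol{\Psi}\boldsymbol{G})=m-\operatorname{tr}(\boldsymbol{\Psi}\boldsymbol{K}^{-1})$ collapse the computation to
\[
\operatorname{tr}(\boldsymbol{\Sigma}_1^{-1}\boldsymbol{C})-N=m-\operatorname{tr}\!\big(\boldsymbol{\Psi}\,\boldsymbol{V}^{\top}\boldsymbol{D}\boldsymbol{V}\big)-\operatorname{tr}\!\big(\boldsymbol{\Psi}\boldsymbol{K}^{-1}\big).
\]
Both subtracted traces are nonnegative: $\boldsymbol{\Psi}$ and $\boldsymbol{K}^{-1}$ are positive definite, giving the second, while $\boldsymbol{V}^{\top}\boldsymbol{D}\boldsymbol{V}$ is positive semidefinite, giving the first — and this is the single place I use the \emph{validity} of the low-rank model, namely that $\boldsymbol{D}=\boldsymbol{C}_{\boldsymbol{\theta}}(S,S)-\boldsymbol{C}_{\boldsymbol{\theta}}(S,S^*)\boldsymbol{K}^{-1}\boldsymbol{C}_{\boldsymbol{\theta}}(S^*,S)$ is positive semidefinite, which is automatic because $\boldsymbol{D}$ is the Schur complement of $\boldsymbol{K}$ in the positive semidefinite kernel matrix $\boldsymbol{C}_{\boldsymbol{\theta}}(S\cup S^*,\,S\cup S^*)$. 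Hence $\operatorname{tr}(\boldsymbol{\Sigma}_1^{-1}\boldsymbol{C})-N\le m$, and combining with the log-determinant bound yields $\mathrm{KL}(P\,\|\,P_1)-\mathrm{KL}(P\,\|\,P_2)\le m/2\le m$; dividing by $N$ completes the proof (and in fact gives the slightly sharper constant $m/(2N)$).

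The only genuinely delicate step is the last display. Since $\operatorname{tr}(\boldsymbol{\Sigma}_1^{-1}\boldsymbol{C})$ and $\log\det\boldsymbol{\Sigma}_1$ can each be arbitrarily large on their own, the bound emerges only after the Woodbury cancellation; I expect the main difficulty to be purely clerical, keeping straight the full Schur complement $\boldsymbol{D}$ versus its block-diagonal truncation $\boldsymbol{\Delta}$. If the proposition is meant for the laws of the observations rather than the latent field, replace $\boldsymbol{C}$, $\boldsymbol{C}_j$, $\boldsymbol{D}_j$ by $\boldsymbol{C}+\delta^{-1}\boldsymbol{I}$, $\boldsymbol{C}_j+\delta^{-1}\boldsymbol{I}$, $\boldsymbol{R}_j$ throughout and every step is unchanged.
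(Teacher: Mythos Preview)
Your proof is correct and follows essentially the same strategy as the paper: reduce to showing $\operatorname{tr}(\boldsymbol{\Sigma}_1^{-1}\boldsymbol{C})-N\le m$ and $\det\boldsymbol{\Sigma}_1\le\det\boldsymbol{\Sigma}_2$, handle the trace via Woodbury together with positive semidefiniteness of the Schur complement $\boldsymbol{D}=\boldsymbol{C}-\boldsymbol{B}\boldsymbol{K}\boldsymbol{B}^\top$, and combine. The one noteworthy difference is the determinant step: the paper rewrites $\det\boldsymbol{\Sigma}_1$ and each $\det\boldsymbol{C}_j$ via the matrix determinant lemma and then invokes a bespoke lemma $\det\big(\boldsymbol{I}+\sum_j\boldsymbol{X}_j\big)\le\prod_j\det(\boldsymbol{I}+\boldsymbol{X}_j)$, whereas you appeal directly to Fischer's inequality on the block partition of $\boldsymbol{\Sigma}_1$, which is shorter and avoids the auxiliary lemma. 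Your exact identity $\operatorname{tr}(\boldsymbol{\Sigma}_1^{-1}\boldsymbol{C})-N=m-\operatorname{tr}(\boldsymbol{\Psi}\boldsymbol{V}^\top\boldsymbol{D}\boldsymbol{V})-\operatorname{tr}(\boldsymbol{\Psi}\boldsymbol{K}^{-1})$ is a slightly sharper bookkeeping of the same Woodbury computation the paper carries out; both arguments in fact deliver the constant $m/2$ rather than $m$, as you note.
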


{The proof of Proposition \ref{prop:kl_IDVSLOW} is provided in Section \ref{sec:proof_prop_kl_IDVSLOW}}
    \begin{remark}[{Normalization of KL Divergence with Dimension}]
      The KL divergence between two multivariate Gaussian distributions generally increases with the dimension. Specifically, if \( P = \mathcal{N}(\boldsymbol{0}, \boldsymbol{C}_p) \) and \( Q = \mathcal{N}(\boldsymbol{0}, \boldsymbol{C}_q) \), then
      \[
      \mathrm{KL}(P \,\|\, Q) = \frac{1}{2} \left( \mathrm{tr}(\boldsymbol{C}_q^{-1} \boldsymbol{C}_p) - N + \log \frac{\det \boldsymbol{C}_q}{\det \boldsymbol{C}_p} \right),
      \]
      which grows with the ambient dimension \( N \). To allow meaningful comparisons as the dimension increases, the KL divergence is normalized by \( N \).
    \end{remark}

    \begin{remark}[{KL Divergence and Likelihood Connection}]
      The reason for using \( \mathrm{KL}_N(P \| Q) \) instead of \( \mathrm{KL}_N(Q \| P) \) lies in its close connection to the likelihood. Specifically, the normalized KL divergence \( \mathrm{KL}_N(P \| Q) \) can be written as
      \[
      \mathrm{KL}_N(P \,\|\, Q) = -\frac{1}{N} \, \mathbb{E}_{\boldsymbol{w} \sim P} \left[ \log q(\boldsymbol{w}) \right] + \text{const.},
      \]
      where \( q \) is the density of \( Q \). This expression shows that minimizing \( \mathrm{KL}_N(P \| Q) \) is equivalent to maximizing the expected log-likelihood under model \( Q \), up to an additive constant.
    \end{remark}

    \begin{remark}\label{rem: low-rank_independent}
      When the number of inducing variables \( m \) satisfies \( m = o(N) \), a common assumption in low-rank models, we have
      \[
      \mathrm{KL}_N(P \,\|\, P_1) \leq \mathrm{KL}_N(P \,\|\, P_2),
      \]
      as \( N \to \infty \). Thus, the low-rank model with local covariance achieves a lower normalized KL divergence than the independent model in the large-sample limit.
    \end{remark}

    Compared to the naive independence model, the low-rank model offers a more accurate approximation of the true cross-covariance structure, potentially leading to improved performance in both inference and prediction. Proposition \ref{prop:kl_IDVSLOW}, whose proof is provided in Section \ref{sec:appendix:proofs}, formalizes this advantage by showing that the low-rank model asymptotically yields a smaller dimension-normalized KL divergence from the original distribution than the naive independence model (see Remark \ref{rem: low-rank_independent}). To the best of our knowledge, this proposition has not been established in the literature, which may be of {standalone theoretical significance}. The advantage of the low-rank model is also verified by the simulation results in Section \ref{sec:simulation:comparison_lowrank_independent}.  Motivated by this theoretical benefit and the improved performance in simulation, we develop the asynchronous algorithm based on the low-rank models.




    \subsection{Synchronous Federated Modeling Algorithm}
    Since the new objective function in Equation \eqref{eq:low-rank-obj} now takes a summation form, federated modeling can be employed to optimize the objective function. Furthermore, apart from the parameters $\delta$ and $\boldsymbol{\theta}$, the optimization of other
    parameters can be explicitly solved when the remaining parameters are held fixed. Thus, the objective function can be optimized in a block coordinate descent manner. We note that while \cite{shi2025decentralized} developed a synchronous federated modeling algorithm for the special case of zero residual terms, we demonstrate that it can also be generalized to handle the case of non-zero residuals.

    Specifically, given the parameters $\boldsymbol{\mu}^t,\boldsymbol{\Sigma}^t,\boldsymbol{\gamma}^t, \delta^t,\boldsymbol{\theta}^t$ at iteration $t$, the updates of the parameters $\boldsymbol{\mu},\boldsymbol{\Sigma},\boldsymbol{\gamma}$ are given by:
    \begin{equation*}
      \begin{aligned}
        \boldsymbol{\mu}^{t+1}, \boldsymbol{\Sigma}^{t+1} & = \arg\min_{\boldsymbol{\mu}, \boldsymbol{\Sigma}} f(\boldsymbol{\mu}, \boldsymbol{\Sigma}, \boldsymbol{\gamma}^t, \delta^t, \boldsymbol{\theta}^t), \\
        \boldsymbol{\gamma}^{t+1} & = \arg\min_{\boldsymbol{\gamma}} f(\boldsymbol{\mu}^{t+1}, \boldsymbol{\Sigma}^{t+1}, \boldsymbol{\gamma}, \delta^t, \boldsymbol{\theta}^t),
      \end{aligned}
    \end{equation*}
    Resulting in the following updates:

    \begin{equation}\label{eq:mu_sigma_update}
      \begin{aligned}
        \boldsymbol{\Sigma}^{t+1} & =
        \left[\frac{1}{J}\sum_{j=1}^J \boldsymbol{B}_j^\top(\boldsymbol{\theta}^t) \boldsymbol{R}_j^{-1}(\delta^t,\boldsymbol{\theta}^t)\boldsymbol{B}_j(\boldsymbol{\theta}^t) + \boldsymbol{K}^{-1}(\boldsymbol{\theta}^t)\right]^{-1}, \\
        \boldsymbol{\mu}^{t+1} & = \boldsymbol{\Sigma}^{t+1}\left(\frac{1}{J}\sum_{j=1}^J \boldsymbol{B}_j^\top(\boldsymbol{\theta}^t)\boldsymbol{R}_j^{-1}(\delta^t,\boldsymbol{\theta}^t)(\boldsymbol{z}_j-\boldsymbol{X}_j\boldsymbol{\gamma}^t)\right),
      \end{aligned}
    \end{equation}
    \begin{equation}\label{eq:gamma_update}
      \boldsymbol{\gamma}^{t+1} = \left(\frac{1}{J}\sum_{j=1}^J \boldsymbol{X}_j^\top\boldsymbol{R}_j^{-1}(\delta^t,\boldsymbol{\theta}^t)\boldsymbol{X}_j\right)^{-1}\left(\frac{1}{J}\sum_{j=1}^J \boldsymbol{X}_j^\top\boldsymbol{R}_j^{-1}(\delta^t,\boldsymbol{\theta}^t)(\boldsymbol{z}_j-\boldsymbol{B}_j(\boldsymbol{\theta}^t)\boldsymbol{\mu}^{t+1})\right),
    \end{equation}
    where each local worker $j$ computes and transmits the following quantities to the central server: $\boldsymbol{B}_j^\top(\boldsymbol{\theta}^t) \boldsymbol{R}_j^{-1}(\delta^t,\boldsymbol{\theta}^t)\boldsymbol{B}_j(\boldsymbol{\theta}^t)$, $\boldsymbol{B}_j^\top(\boldsymbol{\theta}^t) \boldsymbol{R}_j^{-1}(\delta^t,\boldsymbol{\theta}^t)(\boldsymbol{z}_j-\boldsymbol{X}_j\boldsymbol{\gamma}^t)$, $\boldsymbol{X}_j^\top\boldsymbol{R}_j^{-1}(\delta^t,\boldsymbol{\theta}^t)\boldsymbol{X}_j$, $\boldsymbol{X}_j^\top\boldsymbol{R}_j^{-1}(\delta^t,\boldsymbol{\theta}^t)(\boldsymbol{z}_j-\boldsymbol{B}_j(\boldsymbol{\theta}^t)\boldsymbol{\mu}^t)$.
    The central server aggregates these quantities to update the parameters, which are then broadcast back to all workers for the next iteration.

    For parameters $\delta$ and $\boldsymbol{\theta}$, which lack closed-form solutions, we employ the Newton-Raphson method for parameter updates. This choice is motivated by its rapid convergence and computational efficiency for low-dimensional parameter spaces. The update rule is:
    \begin{equation}\label{eq:delta_theta_update}
      \begin{aligned}
        \widetilde{\boldsymbol{\theta}}^{t+1} = \widetilde{\boldsymbol{\theta}}^t - \alpha_t & \left[\frac{1}{J}\sum_{j=1}^J \frac{\partial^2 f_j(\boldsymbol{\mu}^{t+1},\boldsymbol{\Sigma}^{t+1},\boldsymbol{\gamma}^{t+1},\widetilde{\boldsymbol{\theta}}^t)}{\partial^2\widetilde{\boldsymbol{\theta}}}+\frac{1}{J}\frac{\partial^2 h(\boldsymbol{\mu}^{t+1},\boldsymbol{\Sigma}^{t+1},\boldsymbol{\theta})}{\partial^2\widetilde{\boldsymbol{\theta}^t}}\right]^{-1}\\
         & \times\left(\frac{1}{J}\sum_{j=1}^J \frac{\partial f_j(\boldsymbol{\mu}^{t+1},\boldsymbol{\Sigma}^{t+1},\boldsymbol{\gamma}^{t+1},\widetilde{\boldsymbol{\theta}}^t)}{\partial\widetilde{\boldsymbol{\theta}}}+\frac{1}{J}\frac{\partial h(\boldsymbol{\mu}^{t+1},\boldsymbol{\Sigma}^{t+1},\boldsymbol{\theta}^t)}{\partial\widetilde{\boldsymbol{\theta}}}\right)
      \end{aligned}
    \end{equation}
    where $\widetilde{\boldsymbol{\theta}} = (\delta, \boldsymbol{\theta}^\top)^\top$ combines both parameters and $\alpha_t\in (0,1]$ represents the step size. If the Hessian in Equation \eqref{eq:delta_theta_update} is not positive definite or contains eigenvalues that are too small, we replace negative eigenvalues by their absolute values and set overly small eigenvalues to a moderate positive threshold. A single Newton-Raphson iteration is sufficient for convergence in this context, as verified by the simulation. Similarly, each local worker computes the local gradient and Hessian of the objective function with respect to $\delta$ and $\boldsymbol{\theta}$ and sends them to the central server for aggregation. The central server updates the parameters and broadcasts them back to all workers for the next iteration.

  \section{Asynchronous Federated Modeling Algorithm}\label{sec:asynchronous}

    The assumption that all individual workers have equal computing power and identical communication delays with the central server is not entirely valid. Therefore, it is preferable for federated modeling to accommodate delays in computation or in transferring updates from the workers. This section provides an efficient asynchronous federated algorithm designed for low-rank models. The proposed algorithm aims to enhance efficiency in environments with heterogeneous computational resources or communication delays, while maintaining convergence rates comparable to the synchronous approach when resources are uniform.

    Although existing asynchronous methods \citep{fraboni2023general} based on stochastic gradient descent (SGD) can be directly applied to the objective function $f(\boldsymbol{\mu}, \boldsymbol{\Sigma}, \boldsymbol{\gamma}, \delta, \boldsymbol{\theta})$ in Equation \eqref{eq:low-rank-obj}, they do not exploit the specific structure of the function, which can result in a much slower convergence rate. In contrast, the synchronous method presented earlier leverages this structure through a block-wise update strategy. Building on this insight, the corresponding asynchronous algorithm should also adopt the same structure-aware design, rather than relying on generic gradient-based techniques.

    Similar to the process of the synchronous algorithm, the process of the asynchronous algorithm involves workers computing local quantities based on the parameter vectors they have and sending them to the server. The server aggregates the local quantities and updates the parameters, then sends the updated parameters back to the local workers. The difference is that the server should not wait for all workers to complete their computations and communication before proceeding. Instead, the server processes and aggregates as soon as partial local quantities from some workers become available, which reduces idle wait times and increases the system throughput.

    However, two challenges arise when designing an efficient asynchronous algorithm for the objective function $f(\boldsymbol{\mu}, \boldsymbol{\Sigma}, \boldsymbol{\gamma}, \delta, \boldsymbol{\theta})$ in Equation \eqref{eq:low-rank-obj}. The first challenge concerns the worker-side computation and arises from the block-wise update of the parameter vector, where each iteration is divided into three sub-steps: update of $\boldsymbol{\mu}, \boldsymbol{\Sigma}$ in Equation \eqref{eq:mu_sigma_update}; update of $\boldsymbol{\gamma}$ in Equation \eqref{eq:gamma_update}; update of $\delta$ and $\boldsymbol{\theta}$ in Equation \eqref{eq:delta_theta_update}. In the synchronous setting, each worker at each sub-step receives a single parameter vector and uses it to compute the corresponding local quantity. By contrast, in the asynchronous setting, a worker may receive multiple candidate parameter vectors before it is ready for the following computation, creating ambiguity in both choosing the appropriate parameter vector and determining which local quantity to compute.

    To eliminate ambiguity and efficiently manage parameter vectors for subsequent computations, we propose augmenting each parameter vector with an iteration index and a sub-step label that specifies which local quantity to compute, while each worker maintains a buffer to store the received augmented parameter vector. Specifically, let  $\mathbb{P}$ denote the buffer of a worker. Each element  $\texttt{p}\in \mathbb{P}$ is an augmented parameter vector represented as
    \begin{equation}\label{eq:augmented_parameter_vector}
      (\boldsymbol{\phi},\ t,\ \texttt{step}) \text{ with } \boldsymbol{\phi} = (\boldsymbol{\mu}, \boldsymbol{\Sigma}, \boldsymbol{\gamma}, \delta, \boldsymbol{\theta}),
    \end{equation}
   where $t = 0, 1, 2, \ldots$ is the iteration index and \texttt{step} the sub-step label in the set:
    \begin{equation}\label{eq:sub_step_label}
      \left\{\texttt{step}_{\boldsymbol{\mu}, \boldsymbol{\Sigma}},\ \texttt{step}_{\boldsymbol{\gamma}},\ \texttt{step}_{\delta, \boldsymbol{\theta}} \right\},
    \end{equation}
    For convenience, we denote by $\texttt{p.param}, \texttt{p.iter}, \texttt{p.step}$ the parameter tuple $\boldsymbol{\phi}$, the iteration index, and the sub-step label of the augmented parameter vector $\texttt{p}$.

    The buffer $\mathbb{P}$ follows a \emph{sub-step-aware FIFO policy}: when a new augmented parameter vector $\texttt{p}^{\text{new}}$ is received at worker~\( j \), it is appended at the end of $\mathbb{P}$ if no existing entry in $\mathbb{P}$ has the same sub-step label $\texttt{p}^{\text{new}}\texttt{.step}$; otherwise, $\texttt{p}^{\text{new}}$ replaces the existing entry with the same sub-step label $\texttt{p}^{\text{new}}\texttt{.step}$. This ensures that at most one entry per sub-step is maintained, while preserving the relative order of distinct sub-steps. When worker~\( j \) is ready to compute, an augmented parameter vector $\texttt{p}$ is popped from the front of the buffer and used for the local computation, which is abstracted as a function \( L_j(\texttt{p}) \). This function maps the received augmented parameter vector $\texttt{p}$ to the local quantity needed by the server. The mapping depends not only on the parameter vector, but also on the sub-step label $\texttt{p.step}$. Specifically,
    \begin{equation}\label{eq:local_quantity_function}
      L_j(\texttt{p}) :=
      \begin{cases}
        \left(
        \boldsymbol{B}_j^\top(\boldsymbol{\theta}) \boldsymbol{R}_j^{-1}(\delta, \boldsymbol{\theta}) \boldsymbol{B}_j(\boldsymbol{\theta}),\
        \boldsymbol{B}_j^\top(\boldsymbol{\theta}) \boldsymbol{R}_j^{-1}(\delta, \boldsymbol{\theta})(\boldsymbol{z}_j - \boldsymbol{X}_j \boldsymbol{\gamma})
        \right),
        & \text{if } \texttt{p.step}= \texttt{step}_{\boldsymbol{\mu}, \boldsymbol{\Sigma}}, \\[1.5ex]

        \left(
        \boldsymbol{X}_j^\top\boldsymbol{R}_j^{-1}(\delta,\boldsymbol{\theta})\boldsymbol{X}_j,\
        \boldsymbol{X}_j^\top\boldsymbol{R}_j^{-1}(\delta,\boldsymbol{\theta})(\boldsymbol{z}_j-\boldsymbol{B}_j(\boldsymbol{\theta})\boldsymbol{\mu})
        \right),
        & \text{if } \texttt{p.step} = \texttt{step}_{\boldsymbol{\gamma}}, \\[1.5ex]

        \left(
        \displaystyle\frac{\partial^2 f_j(\boldsymbol{\mu}, \boldsymbol{\Sigma}, \boldsymbol{\gamma}, \widetilde{\boldsymbol{\theta}})}{\partial^2 \widetilde{\boldsymbol{\theta}}},\
        \displaystyle\frac{\partial f_j(\boldsymbol{\mu}, \boldsymbol{\Sigma}, \boldsymbol{\gamma}, \widetilde{\boldsymbol{\theta}})}{\partial \widetilde{\boldsymbol{\theta}}}
        \right),
        & \text{if } \texttt{p.step} = \texttt{step}_{\delta, \boldsymbol{\theta}}.
      \end{cases}
    \end{equation}

    Once the local quantity is computed, the worker sends it to the server, augmented with the same iteration index and sub-step label as those used in the parameter vector from which the quantity was derived, along with the worker identifier. We denote the augmented local quantity by $\texttt{q}$:
    \begin{equation}\label{eq:augmented_local_quantity}
      \texttt{q} = (\boldsymbol{\xi}, t, \texttt{step}, j),
    \end{equation}
    where $\boldsymbol{\xi}$ is the local quantity tuple computed by the worker at the right-hand side of Equation \eqref{eq:local_quantity_function}, $t$ is the iteration index, $\texttt{step}$ is the sub-step label, and $j=1, \ldots, J$ is the worker identifier. In the following, we use the notation $\texttt{q.quantity}, \texttt{q.iter}, \texttt{q.step}, \texttt{q.worker}$ to denote the local quantity tuple, iteration index, sub-step label, and worker identifier of the augmented local quantity $\texttt{q}$.

    The second challenge is that server-side naive aggregation, such as simple averaging or summation of local quantities, as done in the synchronous algorithm, can lead to unstable and slower convergence, as demonstrated in our numerical experiments (see the first experiment in Section \ref{sec:simulation:comparison_async_sync_lowrank}). After receiving local quantities from a subset of workers, the server computes global aggregates using both the newly received local quantities and those previously received from other workers. Since these local quantities may be based on outdated or inconsistent parameter versions, their contributions can be misaligned, ultimately degrading convergence performance.

    While this issue is inherent to asynchronous systems, it can be effectively mitigated. We propose the following strategies: 1) local gradient correction, which incorporates Hessian information to compensate for inconsistencies of local gradients in Equation \eqref{eq:delta_theta_update}; 2) adaptive aggregation, which dynamically adjusts the contribution weights based on the staleness of local quantities; 3) a moving average of historical parameter vectors to enhance the stability of the algorithm. These strategies are elaborated in the following paragraphs.

        \paragraph{Local Gradient Correction}
    To mitigate inconsistencies among local gradients, we adopt a local gradient correction strategy that incorporates second-order information. Specifically, at the server side, during the sub-step $\texttt{step}_{\delta, \boldsymbol{\theta}}$, let $\texttt{p}_j$ denote the augmented parameter vector on which worker $ j$'s local gradient is based, where $\texttt{p}_j.\texttt{step} = \texttt{step}_{\delta, \boldsymbol{\theta}}$.
    Among the set \(\{\texttt{p}_j: j = 1, \ldots, J\}\), let \(\texttt{p}_{rec}\) denote the element corresponding to the most recent iteration index. We use the notation \((\boldsymbol{\mu}_j, \boldsymbol{\Sigma}_j, \boldsymbol{\gamma}_j, \widetilde{\boldsymbol{\theta}}_j)\) for the parameters associated with \(\texttt{p}_j\), and \((\boldsymbol{\mu}_{rec}, \boldsymbol{\Sigma}_{rec}, \boldsymbol{\gamma}_{rec},  \widetilde{\boldsymbol{\theta}}_{rec})\) for those associated with \(\texttt{p}_{rec}\). The original local gradient corresponding to the $j$-th worker is $\boldsymbol{g}_j :=
    \frac{\partial f_j(\boldsymbol{\mu}_j, \boldsymbol{\Sigma}_j, \boldsymbol{\gamma}_j, \widetilde{\boldsymbol{\theta}}_j)}{\partial \widetilde{\boldsymbol{\theta}}_j} $.
    The corrected local gradient at the $j$-th worker is then given by
    \begin{equation}\label{eq:local_gradient_correction}
      \begin{aligned}
        \boldsymbol{g}_j^\text{corrected} := & \boldsymbol{g}_j
        +
        \frac{\partial^2 f_j(\boldsymbol{\mu}_j, \boldsymbol{\Sigma}_j, \boldsymbol{\gamma}_j, \widetilde{\boldsymbol{\theta}}_j)}{\partial \widetilde{\boldsymbol{\theta}}_j^2} (\widetilde{\boldsymbol{\theta}}_{rec}-\widetilde{\boldsymbol{\theta}}_j)  \\
         & +\frac{\partial^2 f_j(\boldsymbol{\mu}_j, \boldsymbol{\Sigma}_j, \boldsymbol{\gamma}_j, \widetilde{\boldsymbol{\theta}}_j)}{\partial \widetilde{\boldsymbol{\theta}} \, \partial \boldsymbol{\mu}} (\boldsymbol{\mu}_{rec} - \boldsymbol{\mu}_j)
        + \frac{\partial^2 f_j(\boldsymbol{\mu}_j, \boldsymbol{\Sigma}_j, \boldsymbol{\gamma}_j, \widetilde{\boldsymbol{\theta}}_j)}{\partial \widetilde{\boldsymbol{\theta}} \, \partial \boldsymbol{\Sigma}} (\boldsymbol{\Sigma}_{rec} - \boldsymbol{\Sigma}_j).
      \end{aligned}
    \end{equation}

    This correction term helps align the local gradients, improving convergence, especially when parameter divergence across workers is significant. Here, the term involving a partial derivative concerning $\boldsymbol{\gamma}$ is excluded from the correction term, as a reasonable initial estimate of $\boldsymbol{\gamma}$, close to the optimizer,  can be obtained even under a misspecified dependence structure (e.g., assuming independence; see \citealp{cui2004m,wu2007m}), thus its variation is relatively small across iterations.

    We note that the additional partial derivatives involved in Equation \eqref{eq:local_gradient_correction} are given by:
    \begin{equation}\label{eq:local_gradient_correction_partial_derivative}
      G_j(\texttt{p}): = \left(\frac{\partial^2 f_j(\boldsymbol{\mu}, \boldsymbol{\Sigma}, \boldsymbol{\gamma}, \widetilde{\boldsymbol{\theta}})}{\partial \widetilde{\boldsymbol{\theta}} \, \partial \boldsymbol{\mu}}, \frac{\partial^2 f_j(\boldsymbol{\mu}, \boldsymbol{\Sigma}, \boldsymbol{\gamma}, \widetilde{\boldsymbol{\theta}})}{\partial \widetilde{\boldsymbol{\theta}} \, \partial \boldsymbol{\Sigma}}\right) \text{ if } \texttt{p.step}=\texttt{step}_{\delta, \boldsymbol{\theta}}.
    \end{equation}

    Equation \eqref{eq:local_gradient_correction_partial_derivative} at $\texttt{p}_j$ can be easily computed from $\boldsymbol{g}_j$ using the chain rule. As a result, the local gradient correction can be implemented efficiently, incurring minimal additional computational overhead. However, for other types of local quantities in Equation \eqref{eq:local_quantity_function}, while the same correction principle remains applicable, computing the corresponding partial derivatives is considerably more complex and computationally expensive. In such cases, the adaptive aggregation strategy described below offers a practical alternative for alleviating the effects of staleness.

        \paragraph{Adaptive Aggregation}
    To mitigate the adverse effects of stale local quantities, we propose an adaptive aggregation scheme that assigns staleness-aware weights to local contributions. For convenience, we introduce the following notation: at server iteration $t$ during sub-step \texttt{step}, let $\texttt{q}_j^t$ denote the augmented local quantity received from worker $j$, with $\texttt{q}_j^t.\texttt{step} = \texttt{step}$ and $\texttt{q}_j^t.\texttt{worker} = j$. Let $t_j^t := \texttt{q}_j^t.\texttt{iter}$ be the iteration index encoded within it. The staleness of the local quantity from worker $j$ is then defined as $\tau_j^t := t - t_j^t$.

    Instead of a simple average, the global quantity is computed as a weighted average:
    \begin{equation}\label{eq:adaptive_aggregation}
      \sum_{j=1}^J w_j^t \texttt{q}_j^t.\texttt{quantity},
    \end{equation}
    where the weights \(w_j^t\) are designed to satisfy two key principles: 1) larger staleness \(\tau_j^t\) should correspond to smaller weight \(w_j^t\), thereby reducing the influence of potentially outdated or misaligned local updates; and 2) as the iteration index \(t\) increases, the weights should become more uniform. The latter reflects the intuition that as optimization progresses, the parameter vectors across workers tend to converge, and the impact of staleness diminishes. In this regime, uniform weighting helps avoid bias introduced by unequal contributions.

    Following these principles, we propose the following weighting scheme:
    \begin{equation}\label{eq:adaptive_aggregation_weight}
      w_j^t \propto
      \begin{cases}
        \left(\tau_j^t + 1 + \max\{\sqrt{t}, \|{\boldsymbol{g}}^{t-1}\|\}\right)^{-a}, & \text{if } t_j^t \leq T_c \text{ for all } j, \\
        1, & \text{otherwise},
      \end{cases}
    \end{equation}
    where \({\boldsymbol{g}}^{t-1}\) denotes the aggregated gradient at iteration \(t-1\) and step \(\texttt{step}_{\delta, \boldsymbol{\theta}}\), namely, \({\boldsymbol{g}}^{t-1} = \sum_{j=1}^J w_j^{t-1} \boldsymbol{g}_j^{corrected, t-1}\) with \(\boldsymbol{g}_j^{corrected, t-1}\) defined as in Equation~\eqref{eq:local_gradient_correction}. The term \(s_j^t\) ensures that stale gradients are downweighted. The additional term \(\max\{\sqrt{t}, \|{\boldsymbol{g}}^{t-1}\|\}\) serves to gradually flatten the weighting over time, promoting balance as optimization proceeds. Here, $a>0$ controls the decay rate of the weights. When all local quantities are based on sufficiently recent parameters, i.e., \(t_j^t > T_c\) for all \(j\), we assign uniform weights to ensure fairness across machines.

        \paragraph{Moving Average}
    To stabilize the optimization trajectory, we incorporate a moving average of historical parameters. Let \(\boldsymbol{\phi}_t \) denote the parameter tuple defined as in Equation \eqref{eq:augmented_parameter_vector} at iteration \(t\). The moving average is updated by
    \begin{equation}\label{eq:moving_average}
      \boldsymbol{\phi}_t \gets \sum_{i=0}^{M} \frac{\omega^i}{Z} \, \boldsymbol{\phi}_{t - i}, \quad \text{with } M \leq t \text{ and } Z = \sum_{i=0}^{M} \omega^i,
    \end{equation}
    where \(\gets\) denotes assignment, and \(\omega \in (0,1)\) controls the exponential decay rate of the weights. The arithmetic is applied element-wise to each component of the parameter tuple. This weighting scheme assigns greater importance to recent parameter values, thereby enhancing stability while maintaining adaptability to new information.

    In the above, we described the computation, receiving, and sending processes on the worker side, as well as the aggregation and computation steps on the server side. Several implementation details remain. First, to manage communication between the server and workers efficiently and to avoid congestion caused by concurrent computations, parallel threads should be employed to handle communication tasks independently. Second, an initialization step is required to ensure that, when computing global aggregates, the server can use the initial local quantities from workers that have not yet sent updated values. Third, the server maintains only the most recently received local quantity from each worker. Specifically, let $\mathbb{Q}$ denote the server-side buffer that stores local quantities. When a new augmented local quantity $\texttt{q}^{\text{new}}$ is received from worker $j$, it replaces the existing entry $\texttt{q}$ in $\mathbb{Q}$ for which $\texttt{q.step} = \texttt{q}^{\text{new}}\texttt{.step}$ and $\texttt{q.iter} = \texttt{q}^{\text{new}}\texttt{.iter}$.
    Fourth, we introduce a configuration variable, $\texttt{Agg\_threshold}$, to control the level of asynchrony. For each step, only when the number of local quantities received from workers is greater than or equal to $\texttt{Agg\_threshold}$, the server computes the global aggregate and updates the parameter vector. Specifically, the server maintains a counter $\texttt{Num.step}$ initialized at 0 for each step $\texttt{step} \in \{\texttt{step}_{\boldsymbol{\mu}, \boldsymbol{\Sigma}}, \texttt{step}_{\boldsymbol{\gamma}}, \texttt{step}_{\delta, \boldsymbol{\theta}}\}$. Upon receiving a new augmented local quantity $\texttt{q}^{\text{new}}$, it replaces the matching entry in $\mathbb{Q}$ and increments $\texttt{Num.step}$. Aggregation is triggered only when $\texttt{Num.step} \geq \texttt{Agg\_threshold}$. Thus, when $\texttt{Agg\_threshold} = J$, where $J$ is the number of workers, the aggregation becomes synchronous, requiring new quantities from all workers before proceeding. In contrast, setting $\texttt{Agg\_threshold} < J$ allows asynchronous aggregation, enabling the server to proceed once it has new quantities from only a subset of workers. This mechanism provides a flexible trade-off between synchronization overhead and update staleness.

    \begin{remark}[Hyperparameter tuning]\label{remark:hyperparameter_tuning}
      Several hyperparameters influence the convergence behavior of the algorithm, including the step size $\alpha_t$ for the Newton update in Equation \eqref{eq:delta_theta_update}, the weighting parameters $a$ and $T_c$ in Equation \eqref{eq:adaptive_aggregation_weight}, and the moving average parameters $\omega$ and $t_M$ in Equation \eqref{eq:moving_average}. Specifically, a large $\alpha_t$ may cause non-convergence, while a small one slows down convergence. Larger values of $a$ or $T_c$ give more weight to recent quantities, potentially accelerating convergence but introducing bias and risking divergence. In contrast, smaller $a$ or $T_c$ place more emphasis on stale quantities, which can slow down progress. Similarly, a large $\omega$ combined with a small $t_M$ makes the algorithm overly sensitive to recent updates, resulting in fluctuations, whereas a small $\omega$ and large $t_M$ cause the algorithm to rely heavily on historical parameters, leading to sluggish convergence.  In practice, we recommend running 10–30 iterations to identify a suitable set of hyperparameters. Empirically, setting $\alpha_t = 0.5 \times \texttt{Agg\_threshold} / J$ and choosing $a \in \{1, 2\}$, $T_c \in \{0, 3\}$, $\omega \in \{0.5, 0.7\}$, and $t_M \in \{2, 8, 50\}$ is typically sufficient to ensure fast and stable convergence in the asynchronous setting.
    \end{remark}

    In summary, the complete algorithm consists of three components: initialization, worker-side computation, and server-side aggregation and updates, as outlined in Algorithm~\ref{algo:init}, Algorithm~\ref{alg:worker}, and Algorithm~\ref{algo:server}, respectively.
    
\begin{algorithmframework}
    {\footnotesize
      \algrenewcommand\algorithmicindent{0.8em}
      \sloppy
\linespread{1}
      \noindent
      \begin{minipage}[t]{0.48\textwidth}
        \begin{algorithm}[H]
          \captionsetup{font=footnotesize}
          \caption{Initialization}
          \label{algo:init}
          \begin{algorithmic}[1]
            \footnotesize
            \State Initialize \(\boldsymbol{\phi}\gets(\boldsymbol{\mu}, \boldsymbol{\Sigma}, \boldsymbol{\gamma}, \boldsymbol{\delta}, \boldsymbol{\theta})\), \(\texttt{p}\gets(\boldsymbol{\phi}, 0, \texttt{step}_{\boldsymbol{\mu}, \boldsymbol{\Sigma}})\)  at the server and local workers.
            \State Initialize empty buffers $\mathbb{Q}$ in the server to store initial local quantities.
            \For{\(\texttt{step} \gets \texttt{step}_{\boldsymbol{\mu}, \boldsymbol{\Sigma}}, \texttt{step}_{\boldsymbol{\gamma}}, \texttt{step}_{\delta, \boldsymbol{\theta}}\)}
            \State For worker $j$, compute \(\texttt{quantity}\gets L_j(\texttt{p})\) according to \eqref{eq:local_quantity_function}
            \If{$\texttt{step}=\texttt{step}_{\delta, \boldsymbol{\theta}}$}
            \State Compute partial derivatives $G_j(\texttt{p})$ according to \eqref{eq:local_gradient_correction_partial_derivative}
            \State Add $G_j(\texttt{p})$ to $\texttt{quantity}$
            \EndIf
            \State Send $\texttt{q}_j\gets(\texttt{quantity}, 0, \texttt{step}, j)$ to the server, for all $j=1, \ldots, J$
            \State Server adds $\texttt{q}_j$ to $\mathbb{Q}$
            \State Server computes the global aggregate according to \eqref{eq:adaptive_aggregation}
            \State Server updates parameter $\boldsymbol{\phi}$ according to  \eqref{eq:mu_sigma_update}, \eqref{eq:gamma_update}, or \eqref{eq:delta_theta_update}
            \State Server sends $\texttt{p}\gets(\boldsymbol{\phi}, 0, \texttt{step})$ to all workers
            \EndFor

          \end{algorithmic}
        \end{algorithm}

        \vspace{-0.3cm}

        \begin{algorithm}[H]
          \captionsetup{font=footnotesize}
          \caption{Worker-Side Computation: the $j$-th worker}
          \label{alg:worker}
          \begin{algorithmic}[1]
            \footnotesize
            \State Initialization empty buffers $\mathbb{P},\mathbb{Q}$ for parameter vectors and local quantities, respectively,
            \State Launch three parallel threads for receiving, sending, and computing, respectively.

            \begin{tikzpicture}[overlay]
              \draw[dashed, red] (-1.4,0) rectangle (5.8,-3.6);
              \node[blue] at (2.5,0) {Receiving};
            \end{tikzpicture}

            \While{\texttt{wait=True}} 
            \State Receive $\texttt{p}^{\text{new}}$ from the server
            \If{$\texttt{p}^{\text{new}}\texttt{.step}\not \in \{\texttt{p}\texttt{.step}, \texttt{p}\in \mathbb{P} \}$ }
            \State Append $\texttt{p}^{\text{new}}$ to the end of the buffer $\mathbb{P}$
            \Else
            \State Find $\texttt{p}\in \mathbb{P}$ satisfying $\texttt{p}\texttt{.step}=\texttt{p}^{\text{new}}\texttt{.step}$
            \State Replace $\texttt{p}$ by $\texttt{p}^{\text{new}}$
            \EndIf
            \EndWhile

            \begin{tikzpicture}[overlay]
              \draw[dashed, red] (-1.4,0) rectangle (5.8,-1.7);
              \node[blue] at (2.5,0) {Sending};
            \end{tikzpicture}

            \While{\texttt{wait=True} and $\mathbb{Q}$ is not empty} 
            \State Pop the first element $\texttt{q}$ from $\mathbb{Q}$
            \State Send $\texttt{q}$ to the server
            \EndWhile

            \begin{tikzpicture}[overlay]
              \draw[dashed, red] (-1.4,0) rectangle (5.9,-4.7);
              \node[blue] at (2.5,0) {Computing};
            \end{tikzpicture}

            \While{\texttt{wait=True} and $\mathbb{P}$ is not empty} 
            \State Pop the first element $\texttt{p}$ from $\mathbb{P}$.
            \State Compute the local quantity $\texttt{quantity}:=L_j(\texttt{p})$ according to \eqref{eq:local_quantity_function}.
            \If{$\texttt{p}\texttt{.step}=\texttt{step}_{\delta, \boldsymbol{\theta}}$}
            \State Compute derivatives $G_j(\texttt{p})$ according to \eqref{eq:local_gradient_correction_partial_derivative}
            \State Add $G_j(\texttt{p})$ to $\texttt{quantity}$
            \EndIf
            \State Add $(\texttt{quantity}, \texttt{p}\texttt{.iter}, \texttt{p}\texttt{.step}, j)$ to the buffer $\mathbb{Q}$

            \EndWhile

          \end{algorithmic}
        \end{algorithm}
      \end{minipage}
      \hfill
      \begin{minipage}[t]{0.52\textwidth}
        \begin{algorithm}[H]
          \captionsetup{font=footnotesize}
          \caption{Server-Side Updates}
          \label{algo:server}
          \begin{algorithmic}[1]
            \footnotesize
            \State Initialization empty buffers $\mathbb{P}$ for parameters
            \State Launch one compute thread.
            \State Launch $2J$ threads for sending and receiving with workers.
            \State Set counters $l_j\gets 0$ for all $j=1, \ldots, J$.
            \State Set $\texttt{Num.step}\gets 0$ for  $\texttt{step} \in \{\texttt{step}_{\boldsymbol{\mu}, \boldsymbol{\Sigma}}, \texttt{step}_{\boldsymbol{\gamma}}, \texttt{step}_{\delta, \boldsymbol{\theta}}\}$

            \begin{tikzpicture}[overlay]
              \draw[dashed, red] (-1.4,0) rectangle (6.7,-2.8);
              \node[blue] at (2.8,0) {Receiving};
            \end{tikzpicture}

            \While{\texttt{wait=True}}  
            \State Receive augmented local quantities $\texttt{q}^{\text{new}}$ from  workers
            \State Find $\texttt{q}\in \mathbb{Q}$ with $\texttt{q}\texttt{.step}=\texttt{q}^{\text{new}}\texttt{.step}$ and $\texttt{q}\texttt{.iter}=\texttt{q}^{\text{new}}\texttt{.iter}$, and replace $\texttt{q}$ by $\texttt{q}^{\text{new}}$
            \State $\texttt{Num.step}\gets\texttt{Num.step}+1$
            \EndWhile

            \begin{tikzpicture}[overlay]
              \draw[dashed, red] (-1.4,0) rectangle (6.7,-3.6);
              \node[blue] at (2.8,0) {Sending};
            \end{tikzpicture}

            \While{\texttt{wait=True}}  
            \ForAll{\(j = 1, \ldots, J\)} \Comment{In parallel}
            \If{$l_j<len(\mathbb{P})$}
            \State Get augmented parameter vector $\texttt{p}:=\mathbb{P}[l_j]$
            \State Send $\texttt{p}$ to the worker $j$
            \State $l_j:=l_j+1$
            \EndIf
            \EndFor
            \EndWhile

            \begin{tikzpicture}[overlay]
              \draw[dashed, red] (-1.4,0) rectangle (6.7,-9.0);
              \node[blue] at (2.8,0) {Aggregation and Updating};
            \end{tikzpicture}

            \For{\(t = 1, \ldots, T\)}
            \For {\(\texttt{step} = \texttt{step}_{\boldsymbol{\mu}, \boldsymbol{\Sigma}}, \texttt{step}_{\boldsymbol{\gamma}}, \texttt{step}_{\delta, \boldsymbol{\theta}}\)}
            \If{$\texttt{Num.step}<\texttt{Agg\_threshold}$}
            \State Get $\{\texttt{q}_j\}_{j=1}^J$ from $\mathbb{Q}$ with $\texttt{q}_j\texttt{.step}=\texttt{step}$ and $\texttt{q}_j\texttt{.worker}=j$

            \If{$\texttt{step}=\texttt{step}_{\delta, \boldsymbol{\theta}}$}
            \State Compute the corrected gradient in $\{\texttt{q}_j\}_{j=1}^J$ according to \eqref{eq:local_gradient_correction}
            \State Aggregate the Hessian in $\{\texttt{q}_j\}_{j=1}^J$ and corrected gradient according to \eqref{eq:adaptive_aggregation}
            \Else
            \State Aggregate the local quantities in $\{\texttt{q}_j\}_{j=1}^J$ according to \eqref{eq:adaptive_aggregation}
            \EndIf

            \State Update parameter vector $\boldsymbol{\phi}$ according to \eqref{eq:mu_sigma_update}, \eqref{eq:gamma_update}, or \eqref{eq:delta_theta_update} by replacing the corresponding terms
            \State Perform moving average for $\boldsymbol{\phi}$ according to \eqref{eq:moving_average}
            \State Add $\texttt{p}\gets(\boldsymbol{\phi}, t, \texttt{step})$ to $\mathbb{P}$
            \EndIf
            \EndFor
            \EndFor
            \State Set \texttt{wait=False} and broadcast to all workers
          \end{algorithmic}
        \end{algorithm}
      \end{minipage}
    }

\end{algorithmframework}

  \section{Theoretical Analysis}\label{sec:theory}
    In this section, we present the theoretical analysis of the proposed asynchronous algorithm.

    We begin by treating the linear regression coefficient vector $\boldsymbol{\gamma}$ as fixed. This assumption is made for two reasons. First, in the asynchronous setting, block-wise optimization with exact updates of more than two parameter blocks, as is the case for $\boldsymbol{\gamma}$ and $(\boldsymbol{\mu}, \boldsymbol{\Sigma})$ in this work, requires much stronger conditions on a general objective function to guarantee convergence, as illustrated in Example~\ref{example:block_wise_update}. Second, the value of $\boldsymbol{\gamma}$ changes only slightly during the iterations and remains close to its true value. This is because the estimation of $\boldsymbol{\gamma}$ achieves $\sqrt{N}$-consistency under suitable conditions, even when the covariance structure is misspecified (e.g., when the independence model is used; see \citealp{cui2004m,wu2007m}). This is also verified in the numerical experiments. Hence, for large sample sizes, the deviation in $\boldsymbol{\gamma}$ is negligible, and fixing it does not materially affect the validity of the theoretical analysis.

    \begin{example}\label{example:block_wise_update}
      Consider the problem of optimizing a quadratic function $F$ with two variables $(\boldsymbol{z}_1,\boldsymbol{z}_2)$ using block-wise parallel updates:
      \begin{equation*}
        \boldsymbol{z}_1^{t+1} = \arg\min_{\boldsymbol{z}_1} F(\boldsymbol{z}_1, \boldsymbol{z}_2^t),
        \quad
        \boldsymbol{z}_2^{t+1} = \arg\min_{\boldsymbol{z}_2} F(\boldsymbol{z}_1^t, \boldsymbol{z}_2).
      \end{equation*}
      Here, the update of $\boldsymbol{z}_2^{t+1}$ is based on the previous iterate $\boldsymbol{z}_1^t$ rather than the most recent $\boldsymbol{z}_1^{t+1}$. Thus, the block-wise parallel update can be interpreted as an asynchronous sequential update with a delay of one. Despite the simplicity of both the problem and the update rule, convergence is not guaranteed in general, even when the objective function is smooth and strongly convex, unless additional contractiveness conditions are satisfied (see Proposition~2.6 in Sec.~3.2 and Proposition~3.10 in Sec.~3.3 of \citet{bertsekas2015parallel}).
    \end{example}

    Then, since $\boldsymbol{\gamma}$ is fixed in the analysis, we  rewrite the problem, for simplicity, as
    \begin{equation*}
      J^{-1}\min_{\boldsymbol{x}_1,\boldsymbol{x}_2} \left\{f(\boldsymbol{x}_1,\boldsymbol{x}_2)=\sum_{j=1}^J f_j(\boldsymbol{x}_1,\boldsymbol{x}_2)+h(\boldsymbol{x}_1,\boldsymbol{x}_2)\right\},
    \end{equation*}
    where $\boldsymbol{x}_1$ collects the variables $(\boldsymbol{\mu}, \boldsymbol{\Sigma})$ and $\boldsymbol{x}_2$ collects $(\delta, \boldsymbol{\theta})$.

    The following analysis proceeds in two stages for the convergence. In the first stage, we establish the convergence of the basic asynchronous algorithm, which does not incorporate the improved strategies proposed in this work, for general local objective functions $f_j$ and a common term $h$. Building on this foundation, the second stage establishes the convergence of the proposed asynchronous algorithm with the improved strategies. The proof of all the results in this section is provided in Section \ref{sec:proof_results_sec_theory} of the Appendix.

    The update of the basic asynchronous algorithm can be described as
    \begin{align*}
      \boldsymbol{x}_1^{t + 1}
       & = \arg \min_{\boldsymbol{x}_1} \left\{
        J^{-1} \sum_{j=1}^J \Big[
        f_j \big(\boldsymbol{x}_1, \boldsymbol{x}_2^{\,t - \tau_{t,j}^1}\big)
        + h \big(\boldsymbol{x}_1 , \boldsymbol{x}_2^{\,t}\big)
        \Big] \right\}, \\
        \boldsymbol{x}_2^{t + 1}
         & = \boldsymbol{x}_2^t - \alpha_t
        \left( \boldsymbol{H}_2^{\{t, \tau_{t,j}^{21}, \tau_{t,j}^{22}\}} \right)^{-1}
        \boldsymbol{G}_2^{\{t, \tau_{t,j}^{21}, \tau_{t,j}^{22}\}} ,
      \end{align*}
      where $\boldsymbol{G}_2^{\{t, \tau_{t,j}^{21}, \tau_{t,j}^{22}\}}$ and
      $\boldsymbol{H}_2^{\{t, \tau_{t,j}^{21}, \tau_{t,j}^{22}\}}$ denotes the stale gradient and modified  Hessian, respectively:
      \begin{align*}
        \boldsymbol{G}_2^{\{t, \tau_{t,j}^{21}, \tau_{t,j}^{22}\}}
         & := J^{-1} \sum_{j=1}^J
        \frac{\partial f_j(\boldsymbol{x}_1^{t+1 - \tau_{t,j}^{21}},
        \boldsymbol{x}_2^{\,t - \tau_{t,j}^{22}})}{\partial\boldsymbol{x}_2}
        + \frac{\partial h(\boldsymbol{x}_1^t, \boldsymbol{x}_2^t)}{\partial\boldsymbol{x}_2} , \\
        \boldsymbol{H}_2^{\{t, \tau_{t,j}^{21}, \tau_{t,j}^{22}\}}
         & := \operatorname{mod}\!\left[
        J^{-1} \sum_{j=1}^J
        \frac{\partial^2 f_j(\boldsymbol{x}_1^{\,t+1 - \tau_{t,j}^{21}},
        \boldsymbol{x}_2^{t - \tau_{t,j}^{22}})}{\partial\boldsymbol{x}_2^2}
        + \frac{\partial^2 h(\boldsymbol{x}_1^{t+1}, \boldsymbol{x}_2^t)}{\partial\boldsymbol{x}_2^2}
        \right].
      \end{align*}
      Here, for any symmetric matrix $\boldsymbol{H}$, the operator $\operatorname{mod}(\boldsymbol{H})$ replaces negative eigenvalues with their absolute values and lifts small eigenvalues to a positive threshold $\underline{\lambda}$, thereby ensuring positive definiteness, i.e., $\operatorname{mod}(\boldsymbol{H}) \succeq \underline{\lambda}^{-1}\boldsymbol{I}$. The nonnegative integers $\tau_{t,j}^{1}$, $\tau_{t,j}^{21}$, and $\tau_{t,j}^{22}$ denote the staleness in the $j$-th worker, i.e., the number of iterations by which the information used in the updates of $\boldsymbol{x}_1^{t+1}$ and $\boldsymbol{x}_2^{t+1}$ lags behind the current iteration.

      To the best of our knowledge, neither this algorithm nor its theoretical underpinnings has been investigated in the optimization literature. Therefore, our results may hold standalone theoretical significance for the community. The proposed algorithm is closely related to the gradient-based federated asynchronous method \citep{vanli2018global,xie2019asynchronous,nguyen2022federated,feyzmahdavian2023asynchronous,xu2024fedfa,zakerinia2024communication} where the gradient-based update is applied across all coordinate blocks. In contrast, our approach partitions the variables into two distinct blocks: one block is updated asynchronously in closed form, while the other is updated asynchronously using a gradient-based method, which can lead to improved computational efficiency. From a theoretical standpoint, this setting introduces additional challenges: beyond the coupling between staleness and gradient-based update considered in prior analyses, we must also account for the interaction between the exact closed-form update and the staleness, rendering the analysis substantially more involved.
      The algorithm is also related to the asynchronous block coordinate method \citep{liu2015asynchronous,peng2016arock,wu2023delay}, which updates blocks in parallel and asynchronously. The key difference is that in block coordinate methods, each processing unit is typically assigned to update a single block, whereas in our setting, each machine is responsible for processing a subset of the data across all blocks. We now state the assumptions required for the convergence of the asynchronous algorithm.

      \begin{assumption}[Lipschitz continuous gradient]\label{assumption:lipchitz_continuous_gradient}
        Each $f_j$ and $h$ is twice continuously differentiable with the gradients of them being Lipschitz continuous, i.e., for all $\boldsymbol{x}: = (\boldsymbol{x}_1, \boldsymbol{x}_2)$, $\boldsymbol{x}': = (\boldsymbol{x}_1', \boldsymbol{x}_2')$, there exist constants $L_j$ and $L_h$ such that
        \begin{equation*}\label{eq:lipchitz_continuous_gradient}
          \left\| \frac{\partial f_j(\boldsymbol{x})}{\partial\boldsymbol{x}} - \frac{\partial f_j(\boldsymbol{x}')}{\partial\boldsymbol{x}} \right\| \leq L_{j} \left\| \boldsymbol{x} - \boldsymbol{x}' \right\|, \left\| \frac{\partial h(\boldsymbol{x})}{\partial\boldsymbol{x}} - \frac{\partial h(\boldsymbol{x}')}{\partial\boldsymbol{x}} \right\| \leq L_{h} \left\|\boldsymbol{x} - \boldsymbol{x}' \right\| 
        \end{equation*}
      \end{assumption}

      Assumption \ref{assumption:lipchitz_continuous_gradient} implies that the gradient of the global function $f$ is also Lipschitz continuous. We denote the Lipschitz constant of the gradient of $f$ by $L$ in the following. Obviously, $L \leq J^{-1}\sum_{j=1}^J L_j + L_h$.

      \begin{assumption}[Lipschitz continuous Hessian]\label{assumption:lipchitz_continuous_hessian}
         The Hessians of each $f_j$ and $h$ are Lipschitz continuous, i.e., for all $\boldsymbol{x}: = (\boldsymbol{x}_1, \boldsymbol{x}_2)$, $\boldsymbol{x}': = (\boldsymbol{x}_1', \boldsymbol{x}_2')$, there exist constants $H_j$ and $H_h$ such that
        \begin{equation*}
          \left\| \frac{\partial^2 f_j(\boldsymbol{x})}{\partial\boldsymbol{x}^2} - \frac{\partial^2 f_j(\boldsymbol{x}')}{\partial\boldsymbol{x}^2} \right\| \leq H_{j} \left\| \boldsymbol{x} - \boldsymbol{x}' \right\|, \left\| \frac{\partial^2 h(\boldsymbol{x})}{\partial\boldsymbol{x}^2} - \frac{\partial^2 h(\boldsymbol{x}')}{\partial\boldsymbol{x}^2} \right\| \leq H_{h} \left\| \boldsymbol{x} - \boldsymbol{x}' \right\| ,
        \end{equation*}
        where the norm for the Hessian is the operator norm, i.e., the largest absolute eigenvalue.
      \end{assumption}

      \begin{assumption}\label{assumption:strongly_convex}
        The global function $f$ is strongly convex, i.e., for all $\boldsymbol{x}: = (\boldsymbol{x}_1, \boldsymbol{x}_2)$, there exists a constant $\mu > 0$ such that
        \begin{align*}
          \frac{\partial^2 f(\boldsymbol{x})}{\partial\boldsymbol{x}^2} \succeq \mu \boldsymbol{I},
        \end{align*}
        where the symbol $\succeq$ denotes that if $\boldsymbol{A} \succeq \boldsymbol{B}$, then $\boldsymbol{A} - \boldsymbol{B}$ is positive semidefinite.
      \end{assumption}

      \begin{assumption}\label{assumption:uniform_bound_staleness}
        The staleness is uniformly bounded, i.e., there exists a positive constant integer $\tau$ such that $\tau_{t,j}^{1}, \tau_{t,j}^{21}, \tau_{t,j}^{22} \leq \tau$ for all $t$ and $j$.
      \end{assumption}

      Assumption~\ref{assumption:uniform_bound_staleness} implies that the average staleness across workers is bounded. Specifically, there exists a constant $\overline{\tau} > 0$ such that
        \[
          \frac{1}{J} \sum_{j=1}^J \tau_{t,j}^{1}, \quad 
          \frac{1}{J} \sum_{j=1}^J \tau_{t,j}^{21}, \quad 
          \frac{1}{J} \sum_{j=1}^J \tau_{t,j}^{22} \ \leq \ \overline{\tau},
        \]
        for all $t$. Clearly, $\overline{\tau} \leq \tau$, and in practice it can be much smaller. In the subsequent analysis, we employ $\overline{\tau}$ in place of $\tau$ whenever possible to obtain refined results.

        \begin{remark}
          These assumptions are typical in the convergence analysis of asynchronous algorithms for functions expressed as a sum of local components (see, e.g., \citet{vanli2018global,feyzmahdavian2023asynchronous,fraboni2023general}). In particular, Lipschitz continuity of the gradient and strong convexity are typically imposed to guarantee linear convergence rates. Lipschitz continuity of the Hessian is a common requirement in the analysis of Newton-type methods, as it ensures that a unit step size may be employed in the absence of staleness (cf. \cite{boyd2004convex}). The bounded staleness assumption further guarantees that the information used in parameter updates is not excessively outdated.

For the specific function in this work, \citet{shi2025decentralized} established that, under mild regularity conditions on the data distribution, the gradient is Lipschitz continuous and the objective function is strongly convex in a neighborhood of the optimum with high probability. By analogous arguments, one may also establish the Lipschitz continuity of the Hessian.
          \end{remark}

      Define the Lyapunov sequence as
      \begin{equation}\label{eq:lyapunov_function}
        V_t := f(\boldsymbol{x}_1^t, \boldsymbol{x}_2^t) - \hat{f},
      \end{equation}
      where $\hat{f}$ is the minimal value of $f$. Recall that $\underline{\lambda}$ is the positive threshold for the operator $\operatorname{mod}(\cdot)$. Then, we have the following proposition.

      \begin{proposition}\label{prop:Lyapunov_inequality}
        Suppose Assumptions \ref{assumption:lipchitz_continuous_gradient}, \ref{assumption:strongly_convex}, and \ref{assumption:uniform_bound_staleness} hold, and let $\underline{\lambda} \leq \mu$.  
        If $\alpha_t = \alpha \leq L^{-1} \underline{\lambda}$, then the sequence $\{V_{t}\}$ generated by the basic asynchronous algorithm satisfies
        \begin{equation}\label{eq:Lyapunov_inequality_1}
            V_{t+1} \leq \left(1-\alpha C_0\right) V_t 
            + \alpha^3 \tau \overline{\tau} C_1 \max_{s \in [t-3\tau,\, t]} V_s 
            + \alpha^2 \tau \overline{\tau} C_2 \max_{s \in [t-2\tau,\, t]} V_s,
        \end{equation}
        where $C_0, C_1, C_2$ are constants depending only on $L_j, L_h, \underline{\lambda}$, and $L$ (their explicit forms are provided in Equation~\eqref{eq:constants1} of the Appendix).  
        
        Furthermore, suppose Assumption \ref{assumption:lipchitz_continuous_hessian} also holds. Then, for any $0 < \epsilon < 1$, there exists $T_0$ such that if $\alpha_t = \alpha \leq 1$ for all $t \geq T_0$, the sequence $\{V_{t}\}$ generated by the improved asynchronous algorithm satisfies
        \begin{equation}\label{eq:Lyapunov_inequality_2}
            V_{t+1} \leq \left(1-\alpha C_0'\right) V_t 
            + \alpha^3 \tau \overline{\tau} C_1' \max_{s \in [t-3\tau,\, t]} V_s 
            + \alpha^2 \tau \overline{\tau} C_2' \max_{s \in [t-2\tau,\, t]} V_s,
        \end{equation}
        where $C_0' = (1-\epsilon) C_0$, $C_1' = (1-\epsilon^2) \mu^{-1} \underline{\lambda} C_1$, and $C_2' = C_2$.
        \end{proposition}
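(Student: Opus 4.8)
I would start from the exact one-step decomposition
\[
V_{t+1}-V_t=\bigl[f(\boldsymbol{x}_1^{t+1},\boldsymbol{x}_2^{t+1})-f(\boldsymbol{x}_1^{t+1},\boldsymbol{x}_2^{t})\bigr]+\bigl[f(\boldsymbol{x}_1^{t+1},\boldsymbol{x}_2^{t})-f(\boldsymbol{x}_1^{t},\boldsymbol{x}_2^{t})\bigr],
\]
which separates the effect of the gradient-based $\boldsymbol{x}_2$-update from that of the exact $\boldsymbol{x}_1$-update. For the second bracket, $\boldsymbol{x}_1^{t+1}$ minimises $\boldsymbol{x}_1\mapsto J^{-1}\sum_j f_j(\boldsymbol{x}_1,\boldsymbol{x}_2^{\,t-\tau_{t,j}^1})+h(\boldsymbol{x}_1,\boldsymbol{x}_2^{t})$, whose gradient differs from $\partial_{\boldsymbol{x}_1}f(\cdot,\boldsymbol{x}_2^{t})$ only through the stale $\boldsymbol{x}_2$-arguments, so by Assumption~\ref{assumption:lipchitz_continuous_gradient} the discrepancy $\boldsymbol{e}_1$ obeys $\|\boldsymbol{e}_1\|\le J^{-1}\sum_j L_j\|\boldsymbol{x}_2^{t}-\boldsymbol{x}_2^{\,t-\tau_{t,j}^1}\|$. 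A routine inexact-minimiser estimate using $L$-smoothness and $\mu$-strong convexity of $f(\cdot,\boldsymbol{x}_2^{t})$ (Assumptions~\ref{assumption:lipchitz_continuous_gradient} and~\ref{assumption:strongly_convex}) then gives $f(\boldsymbol{x}_1^{t+1},\boldsymbol{x}_2^{t})-f(\boldsymbol{x}_1^{t},\boldsymbol{x}_2^{t})\le-\tfrac1{2L}\|\partial_{\boldsymbol{x}_1}f(\boldsymbol{x}_1^{t},\boldsymbol{x}_2^{t})\|^2+\tfrac{L}{2\mu^2}\|\boldsymbol{e}_1\|^2$ and $\|\boldsymbol{x}_1^{t+1}-\boldsymbol{x}_1^{t}\|\le\mu^{-1}(\|\partial_{\boldsymbol{x}_1}f(\boldsymbol{x}_1^{t},\boldsymbol{x}_2^{t})\|+\|\boldsymbol{e}_1\|)$. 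For the first bracket I would apply the descent lemma along $-\alpha(\boldsymbol{H}_2)^{-1}\boldsymbol{G}_2$, decompose $\boldsymbol{G}_2=\partial_{\boldsymbol{x}_2}f(\boldsymbol{x}_1^{t+1},\boldsymbol{x}_2^{t})+\boldsymbol{e}_2$ with $\|\boldsymbol{e}_2\|$ bounded by the analogous averaged-staleness combination of $\|\boldsymbol{x}_1^{t+1}-\boldsymbol{x}_1^{\,t+1-\tau_{t,j}^{21}}\|$, $\|\boldsymbol{x}_2^{t}-\boldsymbol{x}_2^{\,t-\tau_{t,j}^{22}}\|$ and $\|\boldsymbol{x}_1^{t+1}-\boldsymbol{x}_1^{t}\|$, and use $\underline{\lambda}^{-1}\boldsymbol{I}\preceq\boldsymbol{H}_2\preceq\bar h\,\boldsymbol{I}$ with $\bar h:=\max\{\underline{\lambda}^{-1},\,J^{-1}\sum_j L_j+L_h\}$ (Assumption~\ref{assumption:lipchitz_continuous_gradient}). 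Splitting the inner product $\langle\partial_{\boldsymbol{x}_2}f,(\boldsymbol{H}_2)^{-1}\boldsymbol{G}_2\rangle$ into an exact-gradient part and an $\boldsymbol{e}_2$-part, and using $\alpha\le L^{-1}\underline{\lambda}$ to dominate the $O(\alpha^2)$ remainder of the descent lemma, produces $f(\boldsymbol{x}_1^{t+1},\boldsymbol{x}_2^{t+1})-f(\boldsymbol{x}_1^{t+1},\boldsymbol{x}_2^{t})\le-c_0\alpha\|\partial_{\boldsymbol{x}_2}f(\boldsymbol{x}_1^{t+1},\boldsymbol{x}_2^{t})\|^2+c_1\alpha\|\boldsymbol{e}_2\|^2$ for constants $c_0,c_1$ depending only on $L_j,L_h,\underline{\lambda},L$.

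The core of the argument is to bound the staleness errors and the per-step displacements in terms of the Lyapunov values. Each displacement telescopes, e.g.\ $\|\boldsymbol{x}_2^{t}-\boldsymbol{x}_2^{\,t-\tau_{t,j}^1}\|\le\alpha\sum_{s}\|(\boldsymbol{H}_2^{s})^{-1}\boldsymbol{G}_2^{s}\|$. Applying Cauchy--Schwarz once over the index window — of length at most $\tau$, with squared contribution controlled by $\sum_{s=t-\tau}^{t}\|\cdot\|^2$ — and once over the worker index, where Assumption~\ref{assumption:uniform_bound_staleness} gives $J^{-1}\sum_j L_j\sqrt{\tau_{t,j}}\le\bigl(J^{-1}\sum_j L_j^2\bigr)^{1/2}\overline{\tau}^{1/2}$, produces the characteristic $\tau\overline{\tau}$ factor and reduces matters to $\max_{s}\|\boldsymbol{x}_2^{s+1}-\boldsymbol{x}_2^{s}\|^2$ over an $O(\tau)$ window. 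Using $\|(\boldsymbol{H}_2^{s})^{-1}\boldsymbol{G}_2^{s}\|^2\le\underline{\lambda}^2\|\boldsymbol{G}_2^{s}\|^2$ and $\|\boldsymbol{G}_2^{s}\|^2\le 4L\,V_{\text{nearby}}+O(\|\boldsymbol{e}_2^{s}\|^2)$ (the elementary bound $\|\nabla f\|^2\le 2LV$ for an $L$-smooth $f$), a short self-bounding/absorption step — valid precisely because $\alpha$ is small — closes the recursion ``displacement $\Rightarrow$ error $\Rightarrow$ displacement'' and yields $\max_{s\in[t-\tau,t]}\|\boldsymbol{x}_2^{s+1}-\boldsymbol{x}_2^{s}\|^2\le C\alpha^2 L\max_{s\in[t-c\tau,t]}V_s$, hence $\|\boldsymbol{e}_1\|^2,\|\boldsymbol{e}_2\|^2\le C'\tau\overline{\tau}\,\alpha^2 L\max_{s}V_s$. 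Substituting back, moving both block-gradients to the common point $(\boldsymbol{x}_1^{t},\boldsymbol{x}_2^{t})$ via $\|\partial_{\boldsymbol{x}_2}f(\boldsymbol{x}_1^{t+1},\boldsymbol{x}_2^{t})\|^2\ge\tfrac12\|\partial_{\boldsymbol{x}_2}f(\boldsymbol{x}_1^{t},\boldsymbol{x}_2^{t})\|^2-L^2\|\boldsymbol{x}_1^{t+1}-\boldsymbol{x}_1^{t}\|^2$ (the extra term again $O(\alpha)$ relative to the descent), and invoking the gradient-dominance (Polyak--{\L}ojasiewicz) inequality $\|\partial f(\boldsymbol{x}_1^{t},\boldsymbol{x}_2^{t})\|^2\ge 2\mu V_t$ (Assumption~\ref{assumption:strongly_convex}), gives exactly \eqref{eq:Lyapunov_inequality_1}, with $C_0,C_1,C_2$ assembled from these conditioning constants.

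For the improved algorithm I would show that the three added devices each contribute only controllable perturbations once $t$ is large. First, by the construction of the weights in \eqref{eq:adaptive_aggregation_weight}, once every received local quantity carries iteration index exceeding $T_c$ — which, by Assumption~\ref{assumption:uniform_bound_staleness}, holds for all $t\ge T_0$ with $T_0$ depending on $T_c$ and $\tau$ — the aggregation collapses to the uniform average, so no new bias is introduced there. Second, the correction \eqref{eq:local_gradient_correction} replaces each stale block-gradient by its first-order Taylor expansion about the most recent iterate, so by Assumption~\ref{assumption:lipchitz_continuous_hessian} the corrected discrepancy is of order $\|\text{displacement}\|^2$ rather than $\|\text{displacement}\|$; this removes the term that forced $\alpha\le L^{-1}\underline{\lambda}$ in the basic analysis, permitting the full Newton step size $\alpha\le 1$, and what was the leading $O(\alpha^2)$ error now enters at order $\alpha^3\tau\overline{\tau}$ once the residual $V^2$ factor is linearised through $\max_s V_s\le V_{T_0}$ — which is where the factor $\mu^{-1}\underline{\lambda}$ and the slack $(1-\epsilon^2)$ in $C_1'$ arise. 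Third, the moving average \eqref{eq:moving_average} replaces $\boldsymbol{\phi}_{t+1}$ by a convex combination of the last $M+1$ iterates, so by convexity of $f$ it perturbs $V_{t+1}$ only by the weighted average of the per-iterate descent estimates plus a term governed by the mutual distances of those iterates, each of order $\alpha(\max_s V_s)^{1/2}$; choosing $T_0$ large makes this perturbation at most an $\epsilon$-fraction of the leading descent term, yielding $C_0'=(1-\epsilon)C_0$ and $C_2'=C_2$. Combining the three estimates gives \eqref{eq:Lyapunov_inequality_2}.

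\textbf{Main obstacle.} The delicate point is the self-referential system coupling the staleness errors to the iterate displacements and back: unlike ordinary asynchronous (stochastic) gradient methods, here one block is updated by an \emph{exact} inner minimisation whose sensitivity must be propagated through the smoothness and strong-convexity constants, and those bounds feed the gradient-based block and conversely, so the ``window-max'' quantity $\max_{s\in[t-c\tau,t]}V_s$ must be introduced and the recursion closed by exploiting the smallness of $\alpha$. For the improved algorithm, the extra difficulty is verifying that the moving average does not silently undo the descent — a priori an average of recent iterates can increase $f$ — which I would handle by showing that near the optimum the successive Newton-type increments become nearly collinear and that the averaging weights place enough mass on the most recent iterate for the induced error to be absorbed into the $\epsilon$-slack.
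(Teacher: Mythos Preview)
Your overall structure and ingredient list for Part~1 (descent lemma for the $\boldsymbol{x}_2$-step, inexact-minimiser estimate for the $\boldsymbol{x}_1$-step, telescoping with Cauchy--Schwarz to extract the $\tau\overline{\tau}$ factor, PL inequality) match the paper's. However, the ``self-bounding/absorption'' step is a genuine gap. You bound $\|\boldsymbol{G}_2^{s}\|^2\le 4L\,V_{\text{nearby}}+O(\|\boldsymbol{e}_2^{s}\|^2)$ and then propose to close the loop ``displacement $\Rightarrow$ error $\Rightarrow$ displacement'' by absorption. But feeding $\|\boldsymbol{e}_2^{s}\|^2$ back through displacements produces a factor of order $\alpha^2\tau^2$ in front of the window-max displacement, so the absorption contracts only if $\alpha\lesssim\tau^{-1}$; the Proposition, however, must hold for every $\alpha\le L^{-1}\underline{\lambda}$, a $\tau$-independent constraint. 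The paper avoids this loop entirely: instead of writing $\boldsymbol{G}_2^{s}=\nabla_2 f(\boldsymbol{x}_1^{s+1},\boldsymbol{x}_2^{s})+\boldsymbol{e}_2^{s}$, it compares each $\nabla_2 f_j$ and $\nabla_2 h$ directly to their values at the global minimiser $\boldsymbol{x}^*$ (where the sum vanishes), obtaining in one step
\[
\|\boldsymbol{G}_2^{s}\|^2\;\lesssim\;(L^{\mathrm{msq}}+L_h^2)\max_r\|\boldsymbol{x}^{r}-\boldsymbol{x}^*\|^2\;\le\;2\mu^{-1}(L^{\mathrm{msq}}+L_h^2)\max_r V_r
\]
via strong convexity, with no recursion. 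The paper also uses a three-term decomposition through the auxiliary exact minimiser $\widehat{\boldsymbol{x}}_1^{t+1}=\arg\min_{\boldsymbol{x}_1}f(\boldsymbol{x}_1,\boldsymbol{x}_2^{t})$, separating the clean descent $f(\boldsymbol{x}_1^{t},\boldsymbol{x}_2^{t})-f(\widehat{\boldsymbol{x}}_1^{t+1},\boldsymbol{x}_2^{t})$ from the staleness discrepancy $f(\widehat{\boldsymbol{x}}_1^{t+1},\boldsymbol{x}_2^{t})-f(\boldsymbol{x}_1^{t+1},\boldsymbol{x}_2^{t})$; together with a block-minimiser sensitivity lemma this makes the $\boldsymbol{x}_1$-displacement bounds more transparent than your two-term version.

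For Part~2 you misidentify the mechanism. The relaxation to $\alpha\le 1$ and the $(1-\epsilon)$ factors do \emph{not} come from the gradient correction making the staleness error second-order, nor from the moving average. They come from replacing the first-order descent lemma for the $\boldsymbol{x}_2$-step by the second-order Taylor expansion (via Assumption~\ref{assumption:lipchitz_continuous_hessian}),
\[
f(\boldsymbol{x}_1^{t+1},\boldsymbol{x}_2^{t})-f(\boldsymbol{x}_1^{t+1},\boldsymbol{x}_2^{t+1})\;\ge\;\alpha\langle\boldsymbol{a},\boldsymbol{b}\rangle_{\boldsymbol{H}_2^{t}}-\tfrac{\alpha^2}{2}\|\boldsymbol{b}\|_{\boldsymbol{H}_2^{t}}^{2}-\tfrac{\alpha^3}{6}H\mu^{-3/2}\|\boldsymbol{b}\|_{\boldsymbol{H}_2^{t}}^{3},
\]
with $\boldsymbol{a}=(\boldsymbol{H}_2^{t})^{-1}\boldsymbol{G}_2^{t}$ the no-delay Newton direction and $\boldsymbol{b}$ the delayed one. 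Because the quadratic term is now in the $\boldsymbol{H}_2^{t}$-norm rather than scaled by $L$, the step-size constraint relaxes to $\alpha\le 1$; the cubic remainder and the Hessian mismatch $\|\tilde{\boldsymbol{H}}_2-\boldsymbol{H}_2^{t}\|$ are eventually small, which is precisely where ``there exists $T_0$'' and the $\epsilon$-slack enter. Your claim that gradient correction lifts the step-size constraint is incorrect: the correction shrinks $\|\boldsymbol{e}_2\|$ but leaves untouched the $\tfrac{L}{2}\alpha^2\|(\boldsymbol{H}_2)^{-1}\boldsymbol{G}_2\|^2$ term of the first-order descent lemma, which is what forces $\alpha\le L^{-1}\underline{\lambda}$. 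The moving average is not analysed in the Proposition at all; it is layered on separately in Corollary~\ref{cor:improved_asynchronous_algorithm}.
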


        In Equation~\eqref{eq:Lyapunov_inequality_1}, the descent term $-\alpha C_0 V_t$ is linear with respect to $\alpha$, while the staleness terms are of higher order in $\alpha$ (quadratic and cubic). Thus, for sufficiently small $\alpha$, the linear term dominates, leading to linear convergence.
        Under the additional Lipschitz condition on the Hessian, in Equation~\eqref{eq:Lyapunov_inequality_2}, larger step sizes are allowed, with $C_0'$ close to $C_0$, $C_1' < C_1$, and $C_2' = C_2$ for small $\epsilon$ when $\underline{\lambda}<\mu$, resulting in tighter convergence.

        The following lemma gives a more precise characterization of the step size choice and the resulting convergence rate of the Lyapunov sequence $V_t$, whenever $V_t$ satisfies the form of Equations~\eqref{eq:Lyapunov_inequality_1} or \eqref{eq:Lyapunov_inequality_2}.

        \begin{lemma}\label{lem:Lyapunov_contraction}
            Suppose the Lyapunov sequence $V_t$ satisfies Equation~\eqref{eq:Lyapunov_inequality_1}   for $0 < \alpha \leqslant
            \overline{\alpha}$ and $\overline{\tau}, \tau > 0$. If the step size is chosen as  $\alpha = \beta A_0 \left( \overline{\tau} + 1 \right)^{- 1} (\tau
            + 1)^{- 1}$ for some $0 < \beta < 1 $, then
            \[ V_t \leqslant \left[ 1 - \frac{(1 - \beta) \beta }{A_1  + 8 \beta^2}
     \left( \overline{\tau} + 1 \right)^{- 1} (\tau + 1)^{- 1} \right]^t V_0 , \]
  where $A_0 := \min \left\{ C_0 \left( C_2 + \sqrt{C_1 C_0} \right)^{- 1}, \overline{\alpha} \right\}$ and $A_1 := C_0^{- 1} A_0^{- 1}$.
        \end{lemma}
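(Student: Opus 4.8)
The plan is to verify the stated geometric decay directly by strong induction, after rewriting the delayed recursion~\eqref{eq:Lyapunov_inequality_1} as a single scalar inequality in the step size. Set $\delta:=\tfrac{(1-\beta)\beta}{A_1+8\beta^2}\,(\overline\tau+1)^{-1}(\tau+1)^{-1}$ and $\rho:=1-\delta$; the assertion of the lemma is exactly $V_t\le\rho^{\,t}V_0$ for every $t$. If $V_0=0$ this is trivial, so take $V_0>0$; I would first check that $\delta\in(0,1)$ (hence $\rho\in(0,1)$) under the hypotheses. The base case $t=0$ is immediate. For the induction step, assume $V_s\le\rho^{\,s}V_0$ for all $s\le t$. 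Since $\rho<1$ and every index $s$ entering a windowed maximum in~\eqref{eq:Lyapunov_inequality_1} satisfies $s\ge t-3\tau$ (respectively $s\ge t-2\tau$), the hypothesis gives $\max_{s\in[t-3\tau,t]}V_s\le\rho^{\,t-3\tau}V_0$ and $\max_{s\in[t-2\tau,t]}V_s\le\rho^{\,t-2\tau}V_0$; when $t<3\tau$ these exponents are negative, so the upper bounds exceed $V_0$ and still dominate the maxima. Substituting into~\eqref{eq:Lyapunov_inequality_1} and dividing by $\rho^{\,t}V_0>0$, the induction step reduces to the scalar inequality
\begin{equation}\label{eq:plan-scalar}
\alpha^3\tau\overline\tau\,C_1\,\rho^{-3\tau}+\alpha^2\tau\overline\tau\,C_2\,\rho^{-2\tau}\ \le\ \alpha C_0-\delta .
\end{equation}

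Next I would bound the left-hand side of~\eqref{eq:plan-scalar}. From the step-size identity $\alpha(\overline\tau+1)(\tau+1)=\beta A_0$ one gets $\alpha\tau\overline\tau\le\beta A_0$ and $\alpha\le A_0\le C_0(C_2+\sqrt{C_1 C_0})^{-1}$; the last bound gives both $\alpha C_1\le\sqrt{C_1 C_0}$ and $A_0(\sqrt{C_1 C_0}+C_2)\le C_0$, so that
\[
\alpha^3\tau\overline\tau\,C_1+\alpha^2\tau\overline\tau\,C_2=\alpha\,(\alpha\tau\overline\tau)\,(\alpha C_1+C_2)\ \le\ \alpha\beta A_0(\sqrt{C_1 C_0}+C_2)\ \le\ \alpha\beta C_0 .
\]
For the staleness-inflation factors, the key point is that $\delta$ is of order $(\overline\tau+1)^{-1}(\tau+1)^{-1}$, so $\tau\delta$ is $O(1)$; in fact $\tau\delta<\tfrac14A_1^{-1}$, and a short check from the explicit forms of $C_0,C_1,C_2$ yields $3\tau\delta\le\tfrac12$. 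Then the elementary inequality $(1-\delta)^{-k}\le 1+2k\delta$, valid whenever $k\delta\le\tfrac12$, gives $\rho^{-3\tau}\le 1+6\tau\delta$ and $\rho^{-2\tau}\le 1+4\tau\delta\le 1+6\tau\delta$, so the left-hand side of~\eqref{eq:plan-scalar} is at most $(1+6\tau\delta)\,\alpha\beta C_0$.

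It then remains to compare with the right-hand side. Using $A_0C_0=A_1^{-1}$ together with the step-size identity, $\alpha C_0=\beta A_1^{-1}(\overline\tau+1)^{-1}(\tau+1)^{-1}$, hence $\delta/(\alpha C_0)=(1-\beta)A_1(A_1+8\beta^2)^{-1}$ and $\alpha C_0-\delta=\alpha\beta C_0\,(A_1+8\beta)(A_1+8\beta^2)^{-1}$. Thus~\eqref{eq:plan-scalar} follows once $1+6\tau\delta\le (A_1+8\beta)(A_1+8\beta^2)^{-1}=1+8\beta(1-\beta)(A_1+8\beta^2)^{-1}$, i.e.\ once $6\tau\delta\le 8\beta(1-\beta)(A_1+8\beta^2)^{-1}$. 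Since $\delta(\overline\tau+1)(\tau+1)=\beta(1-\beta)(A_1+8\beta^2)^{-1}$, this amounts to $\tfrac{6\tau}{(\tau+1)(\overline\tau+1)}\le 8$, which holds because $\tfrac{6\tau}{(\tau+1)(\overline\tau+1)}\le\tfrac{6\tau}{\tau+1}<6<8$. This would close the induction and prove the lemma; the constant $8$ in $A_1+8\beta^2$ is exactly the slack needed here (any value $\ge 6$ would suffice).

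The step I expect to be the main obstacle is controlling the inflation factors $\rho^{-3\tau}$ and $\rho^{-2\tau}$ while keeping the contraction rate meaningful: $\delta$ must be taken small enough — of order $(\tau+1)^{-1}(\overline\tau+1)^{-1}$ — that $(1-\delta)^{-3\tau}$ stays within $O(\tau\delta)$ of $1$, yet large enough to constitute a genuine geometric rate, and the constants $A_0$, $A_1$, and $8\beta^2$ must be threaded so that the linear slack $\alpha C_0-\delta$ dominates the inflated higher-order staleness terms uniformly over $\beta\in(0,1)$, $\tau$, and $\overline\tau$. The two ingredients that make this close cleanly are the reduction to~\eqref{eq:plan-scalar} and the cancellation $A_0C_0=A_1^{-1}$, which is what turns the definition of $A_0$ into the exact bound $\alpha^3\tau\overline\tau C_1+\alpha^2\tau\overline\tau C_2\le\alpha\beta C_0$.
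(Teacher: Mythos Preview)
Your proposal is correct and follows essentially the same route as the paper: strong induction with the target rate $\rho=1-\delta$, reduction to the scalar inequality $\chi(\alpha,\rho)\le\rho$, and control of the inflation factors $\rho^{-3\tau},\rho^{-2\tau}$ via an elementary bound of the form $(1-\delta)^{-k}\le 1+O(k\delta)$. The paper organizes the same computation slightly differently---it first characterizes the admissible set $\mathcal V(\alpha)=\{v:\chi(\alpha,v)\le v\}$, derives an explicit $\xi_{\max}$, and then checks that the claimed $\delta$ lies below $\min\{\xi_{\max},(3\tau+1)^{-1}\}$---whereas you plug in the target $\delta$ directly and verify~\eqref{eq:plan-scalar}; the underlying estimates (in particular $\alpha^3\tau\overline\tau C_1+\alpha^2\tau\overline\tau C_2\le\alpha\beta C_0$ and the final comparison reducing to $6\tau/[(\tau+1)(\overline\tau+1)]\le 8$) are identical in spirit.

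One small point: your condition $3\tau\delta\le\tfrac12$ (needed for $(1-\delta)^{-3\tau}\le 1+6\tau\delta$) requires $A_1\ge 9/14$, and the paper's analogous step likewise invokes $C_0\le 1$ to get $A_1\ge 1$; so both proofs tacitly use the specific constants from Proposition~\ref{prop:Lyapunov_inequality} rather than treating $C_0,C_1,C_2$ as fully arbitrary. Your appeal to ``the explicit forms'' is therefore in line with what the paper does.
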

            
        This lemma shows that choosing the step size proportional to 
$(\overline{\tau}+1)^{-1}(\tau+1)^{-1}$ ensures linear convergence of $V_t$, 
with a rate that improves linearly as $(\overline{\tau}+1)^{-1}(\tau+1)^{-1}$ increases.  
When the effect of average staleness is negligible compared with $\tau$, 
the step size simplifies to being proportional to $(\tau+1)^{-1}$, and the convergence rate 
improves linearly with $(\tau+1)^{-1}$.  
This dependence on staleness is consistent with prior results for the incremental aggregated gradient method, 
which is a kind of asynchronous gradient descent method (see, e.g., \citet{vanli2018global,feyzmahdavian2023asynchronous}).

        With Proposition~\ref{prop:Lyapunov_inequality} and Lemma~\ref{lem:Lyapunov_contraction}, we are now ready to state the convergence result for the basic asynchronous algorithm. Since the results with and without the Lipschitz condition on the Hessian differ only in constants, we focus on the case without this assumption ({Assumption~\ref{assumption:lipchitz_continuous_hessian}}) for brevity.

      \begin{theorem}\label{thm:basic_asynchronous_algorithm}
         Suppose Assumptions \ref{assumption:lipchitz_continuous_gradient}, \ref{assumption:strongly_convex}, and \ref{assumption:uniform_bound_staleness} hold, and $\underline{\lambda} \leq \mu$.  
        If $\alpha_t = \beta A_0 \left( \overline{\tau} + 1 \right)^{- 1} (\tau
        + 1)^{- 1} $ for some $0 < \beta < 1 $, then the sequence $\{V_{t}\}$ generated by the basic asynchronous algorithm satisfies
        \[ V_t \leqslant \left[ 1 - \frac{(1 - \beta) \beta }{A_1  + 8 \beta^2}
     \left( \overline{\tau} + 1 \right)^{- 1} (\tau + 1)^{- 1} \right]^t V_0, \]
     where $A_0 := \min \left\{ C_0 \left( C_2 + \sqrt{C_1 C_0} \right)^{- 1}, \overline{\alpha} \right\}$ and $A_1 := C_0^{- 1} A_0^{- 1}$.
      \end{theorem}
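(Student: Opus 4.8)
The plan is to obtain Theorem~\ref{thm:basic_asynchronous_algorithm} as the composition of Proposition~\ref{prop:Lyapunov_inequality} (first half) and Lemma~\ref{lem:Lyapunov_contraction}; at the level of the theorem no new estimate is needed, so the task reduces to checking that every hypothesis is in force with matching constants and matching value of the threshold step size. First I would fix $\overline{\alpha} := L^{-1}\underline{\lambda}$, where $L$ denotes the Lipschitz constant of the gradient of $f$ (which exists by Assumption~\ref{assumption:lipchitz_continuous_gradient}, with $L \le J^{-1}\sum_j L_j + L_h$). Under Assumptions~\ref{assumption:lipchitz_continuous_gradient}, \ref{assumption:strongly_convex}, \ref{assumption:uniform_bound_staleness} and the standing condition $\underline{\lambda}\le\mu$, the first half of Proposition~\ref{prop:Lyapunov_inequality} guarantees that for every constant step size $\alpha_t = \alpha \le \overline{\alpha}$ the Lyapunov sequence $V_t = f(\boldsymbol{x}_1^t,\boldsymbol{x}_2^t)-\hat f$ produced by the basic asynchronous algorithm satisfies the max-type recursion~\eqref{eq:Lyapunov_inequality_1} with explicit constants $C_0, C_1, C_2$ depending only on $L_j, L_h, \underline{\lambda}, L$. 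Hence the input hypothesis of Lemma~\ref{lem:Lyapunov_contraction} — that $V_t$ obeys~\eqref{eq:Lyapunov_inequality_1} for all $0 < \alpha \le \overline{\alpha}$ with $\tau, \overline{\tau} > 0$ — holds, since $\tau \ge 1$ is a positive integer by Assumption~\ref{assumption:uniform_bound_staleness} and $\overline{\tau} > 0$ by definition.

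Next I would verify step-size admissibility, which is the only genuine (if mild) bookkeeping point. With $A_0 := \min\{C_0(C_2 + \sqrt{C_1 C_0})^{-1}, \overline{\alpha}\}$ one has $A_0 \le \overline{\alpha}$, and since $0 < \beta < 1$ and $(\overline{\tau}+1)^{-1}(\tau+1)^{-1} \le 1$, the prescribed choice $\alpha_t = \beta A_0 (\overline{\tau}+1)^{-1}(\tau+1)^{-1}$ satisfies $0 < \alpha_t \le A_0 \le \overline{\alpha}$ uniformly in $t$; therefore~\eqref{eq:Lyapunov_inequality_1} is valid along the entire trajectory. Applying Lemma~\ref{lem:Lyapunov_contraction} with this $\overline{\alpha}$, this $A_0$, and $A_1 := C_0^{-1} A_0^{-1}$ then yields exactly the claimed bound
\[
V_t \;\le\; \left[\,1 - \frac{(1-\beta)\beta}{A_1 + 8\beta^2}\,(\overline{\tau}+1)^{-1}(\tau+1)^{-1}\,\right]^{t} V_0 .
\]
One subtlety worth flagging is that the symbol $\overline{\alpha}$ in the theorem statement inherits its meaning from the proposition: it is precisely $L^{-1}\underline{\lambda}$, so the proof should make explicit that the $\overline{\alpha}$ entering $A_0$ and $A_1$ is this same quantity.

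The substantive analysis lives entirely inside the two cited results, so the hard part is not here. In Proposition~\ref{prop:Lyapunov_inequality}, the difficulty is to turn the descent-lemma estimate for the gradient-based $\boldsymbol{x}_2$-update, together with the optimality (first-order) conditions for the exact $\boldsymbol{x}_1$-update, into a clean one-step inequality: the stale gradients and Hessians must be compared with their fresh counterparts via Lipschitz continuity, which forces control of the iterate drift $\|\boldsymbol{x}^{s}-\boldsymbol{x}^{s'}\|$ over windows of length $O(\tau)$, telescoped into $\sqrt{V_s}$-type quantities using strong convexity — this is where the $\tau\overline{\tau}$ factor and the quadratic/cubic powers of $\alpha$ enter. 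In Lemma~\ref{lem:Lyapunov_contraction}, the difficulty is to convert a recursion with $\max$ over a sliding window into a genuine geometric rate, presumably by positing a candidate contraction factor, reducing to a polynomial inequality in that factor, and tuning $\beta$. Granting both of those, the proof of Theorem~\ref{thm:basic_asynchronous_algorithm} itself is just the chaining and the admissibility check described above.
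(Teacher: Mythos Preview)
Your proposal is correct and matches the paper's approach exactly: the theorem is obtained by feeding the recursion~\eqref{eq:Lyapunov_inequality_1} from the first part of Proposition~\ref{prop:Lyapunov_inequality} into Lemma~\ref{lem:Lyapunov_contraction}, with $\overline{\alpha}=L^{-1}\underline{\lambda}$ and the admissibility check $\alpha_t\le A_0\le\overline{\alpha}$ that you spell out. The paper in fact does not write a separate proof for Theorem~\ref{thm:basic_asynchronous_algorithm} at all, treating it as immediate from those two results, so your bookkeeping (including the identification of $\overline{\alpha}$) is already more explicit than what the paper records.
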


      The convergence of the improved asynchronous algorithm follows as a direct corollary of the above results. For brevity, we present only the case without the Lipschitz condition on the Hessian, since the results under this additional assumption ({Assumption~\ref{assumption:lipchitz_continuous_hessian}}) are analogous. Recall that $T_c$ denotes the iteration at which the weights in the adaptive aggregation scheme are set to be equal, $\omega$ is the exponential decay parameter of the moving average, and $M$ is the number of moving average steps.
      \begin{corollary}\label{cor:improved_asynchronous_algorithm}
        Suppose Assumptions \ref{assumption:lipchitz_continuous_gradient},
        \ref{assumption:strongly_convex}, \ref{assumption:uniform_bound_staleness}
        hold. Then, there exist some staleness-independent positive constants\footnote{Note that the constants for the improved asynchronous algorithm are different from the constants for the basic asynchronous algorithm.} $A_0, A_1$ such that,
        for all $t \geq T_c$, if $\alpha_t = \beta A_0  (\bar{\tau} + 1)^{- 1} 
        (\tau + 1)^{- 1}$ for some $0 < \beta < 1$, \ then the sequence $\{V_{t}\}$ generated by the improved asynchronous algorithm satisfies
        \[ V_t \leqslant \left[ 1 - \frac{(1 - \beta) \beta}{(M+1) (A_1 + 8 \beta^2)
           \sum_{i = 0}^M \omega^M}  (\bar{\tau} + 1)^{- 1} (\tau + 1)^{- 1}
           \right]^{t - T_c} V_{T_c}, \].
      \end{corollary}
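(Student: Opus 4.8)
The plan is to establish, for iterations $t \geq T_c$, a Lyapunov recursion of essentially the same shape as \eqref{eq:Lyapunov_inequality_1} but carrying the extra bookkeeping introduced by the moving average, and then to feed it into (a mild extension of) Lemma~\ref{lem:Lyapunov_contraction}, exactly as Theorem~\ref{thm:basic_asynchronous_algorithm} does for the basic algorithm. The first observation is that once $t$ exceeds $T_c$ (by at most $\tau$, since the encoded index of every received quantity equals $t$ minus its staleness), the adaptive weights in \eqref{eq:adaptive_aggregation_weight} fall into the ``otherwise'' branch and become uniform, $w_j^t = 1/J$. Hence on this range the server step coincides with the basic asynchronous update of Proposition~\ref{prop:Lyapunov_inequality}, modified only by the local gradient correction \eqref{eq:local_gradient_correction} and by the moving average \eqref{eq:moving_average}; it therefore suffices to control these two modifications.

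Second, I would argue that the gradient correction perturbs the recursion only at higher order in $\alpha$. The corrected gradient $\boldsymbol{g}_j^{\mathrm{corrected}}$ in \eqref{eq:local_gradient_correction} is the first-order Taylor expansion of $\partial f_j / \partial\widetilde{\boldsymbol{\theta}}$ about the stale point along $\widetilde{\boldsymbol{\theta}}_{rec}-\widetilde{\boldsymbol{\theta}}_j$, $\boldsymbol{\mu}_{rec}-\boldsymbol{\mu}_j$, $\boldsymbol{\Sigma}_{rec}-\boldsymbol{\Sigma}_j$. Writing the exact gradient difference in integral (mean-value) form and using that Assumption~\ref{assumption:lipchitz_continuous_gradient} bounds the Hessians of the $f_j$ in operator norm, the correction remainder is $O(\|\Delta\|)$, where $\Delta$ collects the parameter increments over at most $\tau$ iterations; each per-iteration increment is $O(\alpha)$ times a gradient, and strong convexity bounds gradient magnitudes by $O(\sqrt{V_s})$ on the relevant window, so the remainder is $O(\alpha\tau\sqrt{V_s})$. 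Propagating this through the descent estimate of Proposition~\ref{prop:Lyapunov_inequality} contributes only additional $O(\alpha^2\tau\overline{\tau}V_s)$ and $O(\alpha^3\tau\overline{\tau}V_s)$ terms, which are absorbed into new, staleness-independent constants. Consequently the \emph{pre-average} iterate $\widehat{\boldsymbol{\phi}}_t$ (the output of the update before \eqref{eq:moving_average} is applied) satisfies $\widehat{V}_t \leq (1-\alpha C_0)V_{t-1} + \alpha^3\tau\overline{\tau}C_1 \max_{s \in [t-1-3\tau,\,t-1]}V_s + \alpha^2\tau\overline{\tau}C_2 \max_{s \in [t-1-2\tau,\,t-1]}V_s$ for all $\alpha \leq \overline{\alpha}$, with $C_0, C_1, C_2, \overline{\alpha}$ staleness-independent (and with the sharper constants of \eqref{eq:Lyapunov_inequality_2} under the additional Assumption~\ref{assumption:lipchitz_continuous_hessian}, which gives the analogous statement noted in the text).

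Third --- and this is the crux --- I would fold in the moving average. By construction $\boldsymbol{\phi}_t = Z^{-1}\widehat{\boldsymbol{\phi}}_t + \sum_{i=1}^M (\omega^i/Z)\,\boldsymbol{\phi}_{t-i}$ is a convex combination, so strong convexity of $f$ (Assumption~\ref{assumption:strongly_convex}) yields $V_t \leq Z^{-1}\widehat{V}_t + \sum_{i=1}^M (\omega^i/Z) V_{t-i}$. Substituting the pre-average recursion and, without loss of generality, widening the comparison window to indices in $[t-1-\max\{3\tau, M\},\, t-1]$, one obtains an inequality $V_t \leq \sum_{k=1}^{M'} a_k V_{t-k} + \alpha^3\tau\overline{\tau}\widetilde{C}_1 \max_s V_s + \alpha^2\tau\overline{\tau}\widetilde{C}_2 \max_s V_s$ with $M' = \max\{3\tau, M\}$, nonnegative weights $a_k$, and total mass $\sum_k a_k = 1 - \alpha C_0 / Z$ (the $Z^{-1}\alpha C_0$ deficit is exactly what survives from the leading descent after the convex mixing). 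This is a generalized contraction with a lengthened history window and a leading mass slightly below one; running the ansatz $V_t \leq \gamma^{t-T_c} V_{T_c}$ of the proof of Lemma~\ref{lem:Lyapunov_contraction} and solving the associated characteristic inequality shows that the per-step progress $\alpha C_0/Z$ is spread across the $M+1$ averaging steps, so the admissible contraction factor is $1 - \Theta\!\big(\alpha C_0 / ((M+1)Z)\big)$ rather than $1 - \Theta(\alpha C_0)$. Setting $\alpha = \beta A_0 (\overline{\tau}+1)^{-1}(\tau+1)^{-1}$ exactly as in Theorem~\ref{thm:basic_asynchronous_algorithm}, with $A_0 := \min\{C_0(C_2 + \sqrt{C_1 C_0})^{-1}, \overline{\alpha}\}$ and $A_1 := C_0^{-1}A_0^{-1}$ built from the \emph{new} constants, and tracking the $\beta$-dependence of the higher-order terms as in that proof, one arrives at the bound stated in the corollary.

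I expect the main obstacle to be this third step. Lemma~\ref{lem:Lyapunov_contraction} is tailored to the window $[t-3\tau,t]$ and to a ``pure'' leading term $(1-\alpha C_0)V_t$, whereas the moving average simultaneously enlarges the window to order $\max\{3\tau, M\}$ and replaces the leading term by a convex combination of total mass $1 - \alpha C_0/Z$; making the degradation factor $((M+1)\sum_{i=0}^M\omega^M)^{-1}$ come out with clean constants requires either a careful re-execution of the fixed-point argument inside that lemma's proof or a short auxiliary lemma on max-type recursions with weighted history. A secondary, largely routine, difficulty is confirming that the gradient-correction remainder is genuinely higher-order in $\alpha$ using only Assumption~\ref{assumption:lipchitz_continuous_gradient} --- Hessian Lipschitzness (Assumption~\ref{assumption:lipchitz_continuous_hessian}) is deliberately not assumed here --- which is what guarantees the correction does not change the structure of the recursion but only its constants.
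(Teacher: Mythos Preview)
Your proposal is correct and follows essentially the same route as the paper: both use convexity of $f$ to bound the post-average Lyapunov value by the convex combination $V_{t+1} \le Z^{-1}\bigl(\tilde V_{t+1} + \omega V_t + \cdots + \omega^M V_{t-M}\bigr)$, with the pre-average $\tilde V_{t+1}$ satisfying the recursion of Proposition~\ref{prop:Lyapunov_inequality} (with possibly modified staleness-independent constants), and then run the ansatz $V_s \le c^{\,s-T_c} V_{T_c}$ through this compound recursion to extract the $(M{+}1)\sum_{i=0}^M \omega^i$ degradation factor---exactly the re-execution of the Lemma~\ref{lem:Lyapunov_contraction} fixed-point argument you anticipated as the main technical step. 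Your treatment of the gradient correction is in fact more explicit than the paper's (which simply asserts the pre-average recursion holds with new constants without detailing how the correction term is absorbed).
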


      This corollary demonstrates that the choice of step size and the resulting convergence rate of the Lyapunov sequence $V_t$ for the improved asynchronous algorithm are similar to those of the basic asynchronous algorithm. However, our current proof framework does not enable us to quantify the improvements in stability and convergence rate precisely. We leave this as future work and illustrate the improvement empirically in the simulation section.

  \section{Numerical Experiments}\label{sec:simulation}
      In this section, we present simulation results to validate our algorithm and theoretical results. The simulations are based on the Matérn covariance function of the form \citep{wang2023parameterization}:
      \begin{equation}
        c(\textbf{s}_1,\textbf{s}_2) = \sigma^2 \frac{2^{1-\nu}}{\Gamma(\nu)} \left(\frac{\sqrt{2\nu}|\textbf{s}_1-\textbf{s}_2|}{\beta}\right)^{\nu} \mathcal{K}_{\nu}\left(\frac{\sqrt{2\nu}|\textbf{s}_1-\textbf{s}_2|}{\beta}\right),
      \end{equation}
      where $\nu$ is the smoothness parameter, $\beta$ is the length scale, $\sigma^2$ is the marginal variance, and $K_{\nu}$ is the modified Bessel function of the second kind. Owing to the complexity of $K_{\nu}$ \citep{geng2025gpu,takekawa2022fast,abramowitz1948handbook}, computing the derivative of the Matérn covariance function for $\nu$ is challenging and lacks efficient, readily available implementations. Consequently, $\nu$ is fixed, and the covariance parameter vector to be estimated is $\boldsymbol{\theta} = (\sigma^2, \beta)^\top$.

      Unless otherwise specified, the following experimental settings are used throughout this study. Spatial locations are generated by adding uniform jitter in the range $[-0.4, 0.4]$ to a regular grid with unit spacing, then shifting and scaling to fit within the unit square $[0,1]^2$. Knot locations are generated using the same procedure. The covariate vector follows a 5-dimensional standard Gaussian distribution, with the corresponding coefficient vector set to $\boldsymbol{\gamma} = (-1, 2, 1, 1, 1)^\top$. Measurement noise is drawn from a Gaussian distribution with zero mean and standard deviation $2$ (corresponding to the true parameter $\delta = 0.25$). The distributed dataset is obtained by randomly partitioning the dataset across workers.

    \subsection{Comparison Between Low-Rank and Independent Models}\label{sec:simulation:comparison_lowrank_independent}
      Here, we compare low-rank and independent models, supported by theoretical results demonstrating that the low-rank model consistently outperforms the independent model. All simulation results are based on $100$ Monte Carlo replications using the full local covariance structure defined in Equation~\eqref{eq:full_local_covariance}. Boxplots are used to illustrate the distribution of parameter estimates, with the sample mean indicated by a green triangle and the true parameter value marked by a red line.

      We begin by examining parameter estimation error under different covariance function parameter settings. The smoothness parameter $\nu$ is set to $0.5$, $1.5$, and $2.5$, where a larger $\nu$ corresponds to a smoother process (Stein, 2012). For each fixed $\nu$, the range parameter $\beta$ is assigned three values such that the effective range, defined as the distance at which the correlation function is $0.05$ of its maximum, is $0.1$, $0.3$, and $0.7$, corresponding to weak, medium, and strong spatial dependence, respectively. The local sample size is set to $100$, the number of workers is set to $10$, and the number of knots is set to $100$.

      \begin{figure}[h]
        \centering
        \includegraphics[width=0.47\textwidth]{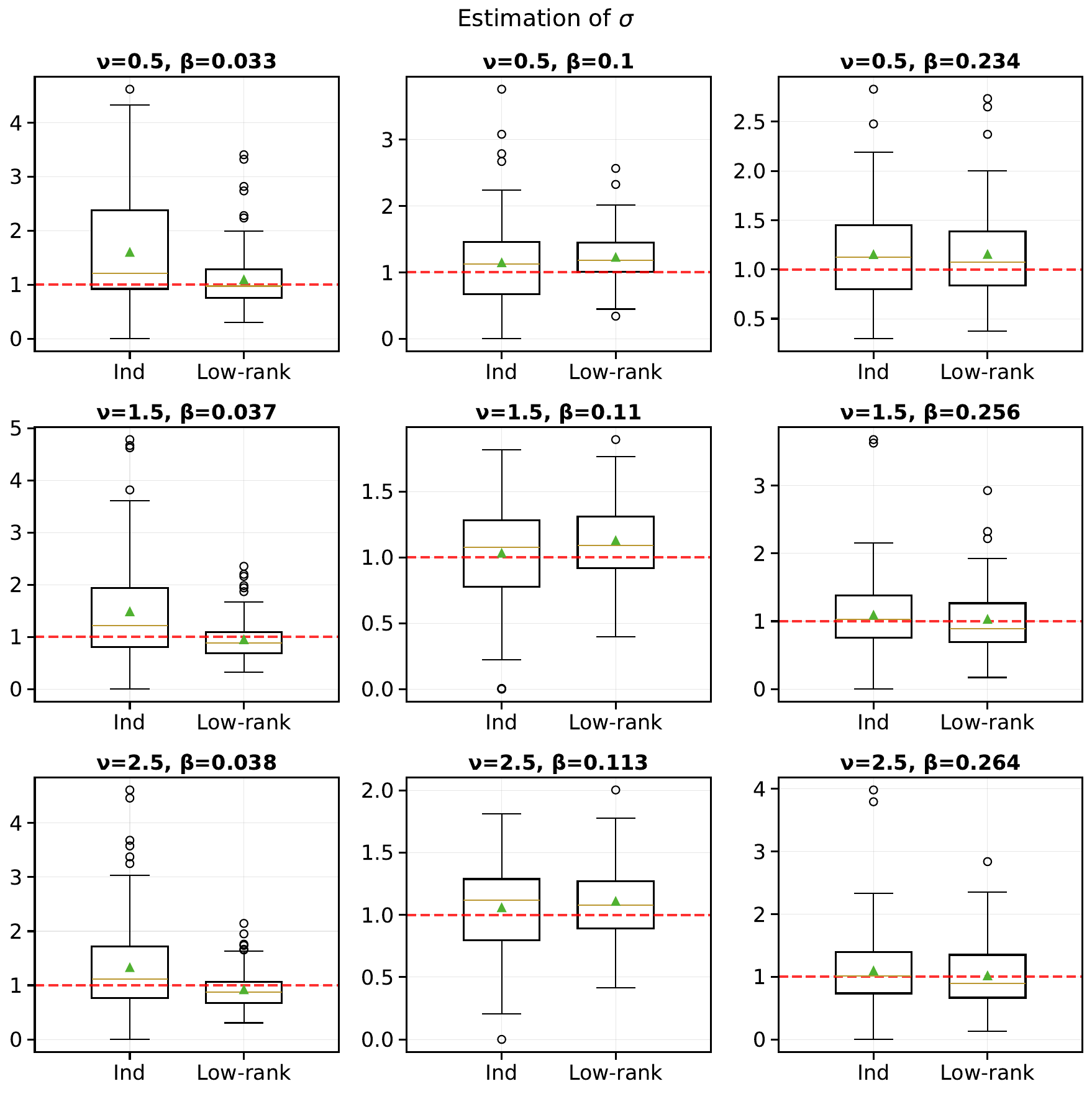}
        \hfill
        \includegraphics[width=0.47\textwidth]{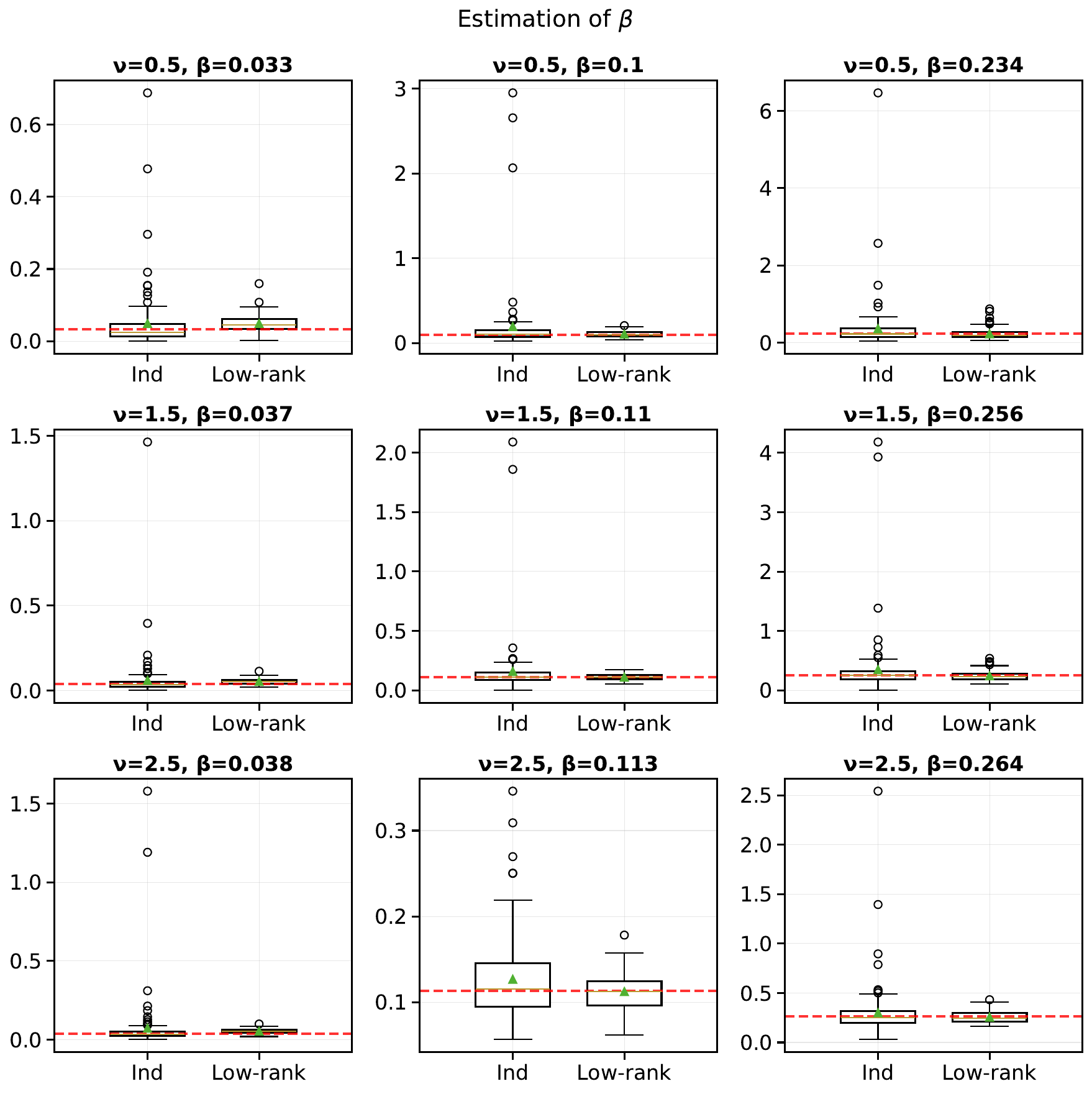}
        \caption{Boxplots of parameter estimates under varying covariance settings: $\sigma$ (left) and $\beta$ (right).}
        \label{fig:boxplot_parameter_settings}

      \end{figure}

      The results are shown in Figure \ref{fig:boxplot_parameter_settings} for the covariance parameter vector $\boldsymbol{\theta}$. In contrast, similar results for other parameters are presented in Figure \ref{fig:boxplot_parameter_settings_other}  in the Appendix for conciseness. From the figure, we observe that the parameter estimates from the low-rank model generally exhibit lower variance compared to those from the independent model, while maintaining comparable or smaller bias across different parameter settings. Moreover, the low-rank model appears more robust, as it produces fewer outliers. This robustness stems from the fact that the independent model's log-likelihood is simply a sum of local log-likelihoods, making it more sensitive to local anomalies. In contrast, the objective function $f$ in Equation~\eqref{eq:low-rank-obj} for the low-rank model includes the additional global parameters $\boldsymbol{\mu}$ and $\boldsymbol{\Sigma}$, along with a shared regularization term $h$, which helps mitigate the influence of outliers.

      We next examine parameter estimation error under varying local sample sizes. In this experiment, we fix $\nu = 2.5$ and $\beta = 0.113$, corresponding to the setting in which the independent model exhibits the fewest outliers in Figure~\ref{fig:boxplot_parameter_settings}. The number of workers is set to $10$, and the local sample size per worker is varied among $(200, 400, 800)$. To maintain comparable computational cost between models while improving the approximation of the covariance structure, the number of knots is set equal to the local sample size. The results for $\sigma$ and $\beta$ are presented in Figure~\ref{fig:boxplot_local_sample_size}, with results for other parameters shown in Figure~\ref{fig:boxplot_local_sample_size_other} in the Appendix for brevity. As the local sample size increases, the estimation errors for both $\sigma$ and $\beta$ decrease. Notably, while the low-rank model achieves accuracy comparable to the independent model in estimating $\sigma$, it significantly outperforms the independent model in estimating $\beta$.

      \begin{figure}[h]
        \centering
        \includegraphics[width=0.47\textwidth]{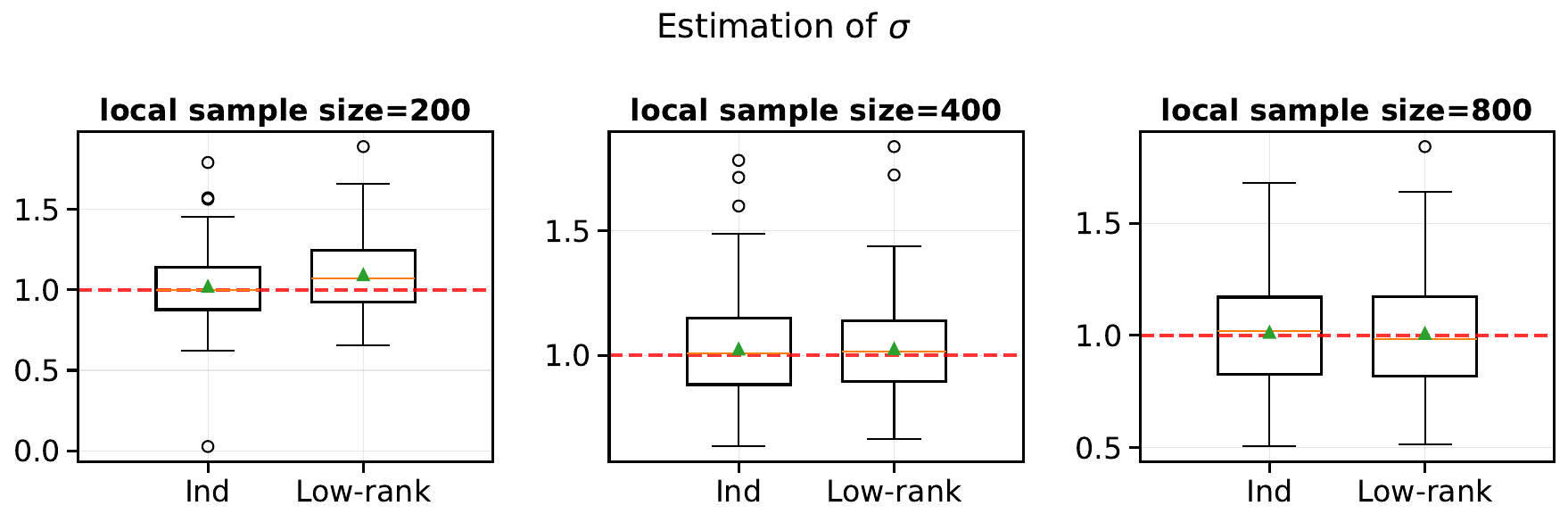}
        \hfill
        \includegraphics[width=0.47\textwidth]{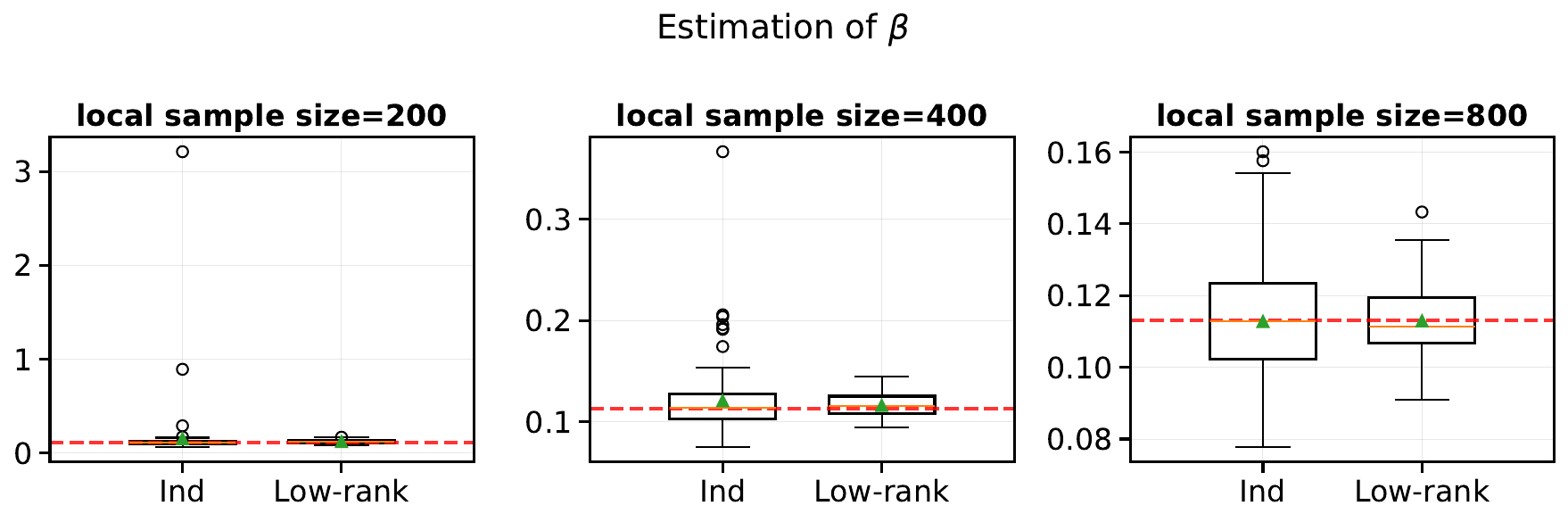}
        \caption{Boxplots of parameter estimates under varying sample sizes: $\sigma$ (left) and $\beta$ (right). }
        \label{fig:boxplot_local_sample_size}

      \end{figure}
      Then, we examine parameter estimation error under varying numbers of workers. In this experiment, similarly, we fix $\nu = 2.5$ and $\beta = 0.113$, set the local sample size to $100$, and vary the number of workers among $(10, 20, 40, 80)$. The number of knots is set to $200$, slightly exceeding the local sample size to account for the larger number of workers. The results for $\sigma^2$ and $\beta$ are presented in Figure~\ref{fig:boxplot_machine_number}, with results for other parameters shown in Figure~\ref{fig:boxplot_machine_number_other} in the Appendix for brevity. As the number of workers increases, the independent model tends to produce more outlier estimates, consistent with the explanation provided in the preceding paragraph.

      \begin{figure}[h]
        \centering
        \includegraphics[width=0.8\textwidth]{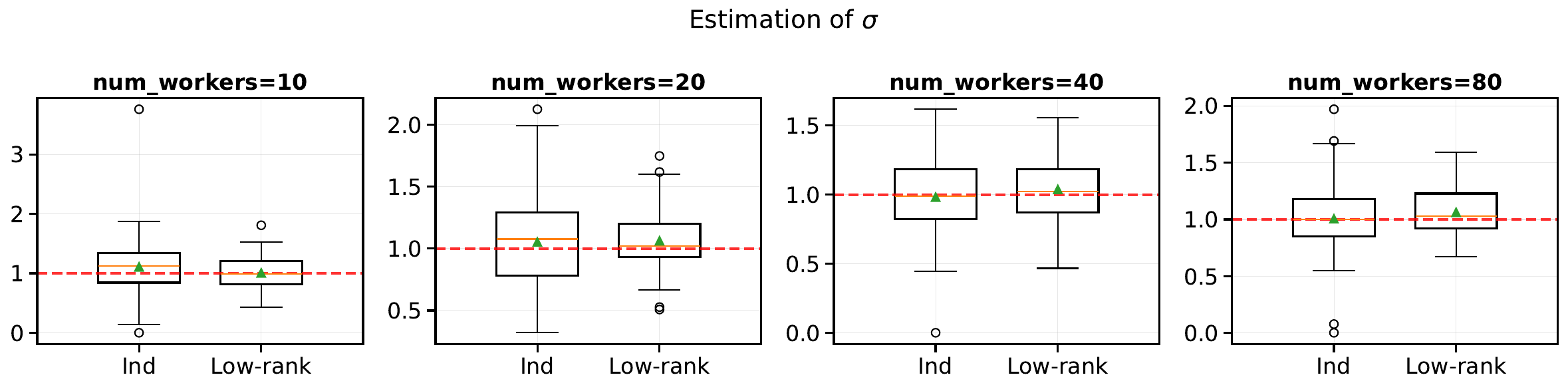}
        \includegraphics[width=0.8\textwidth]{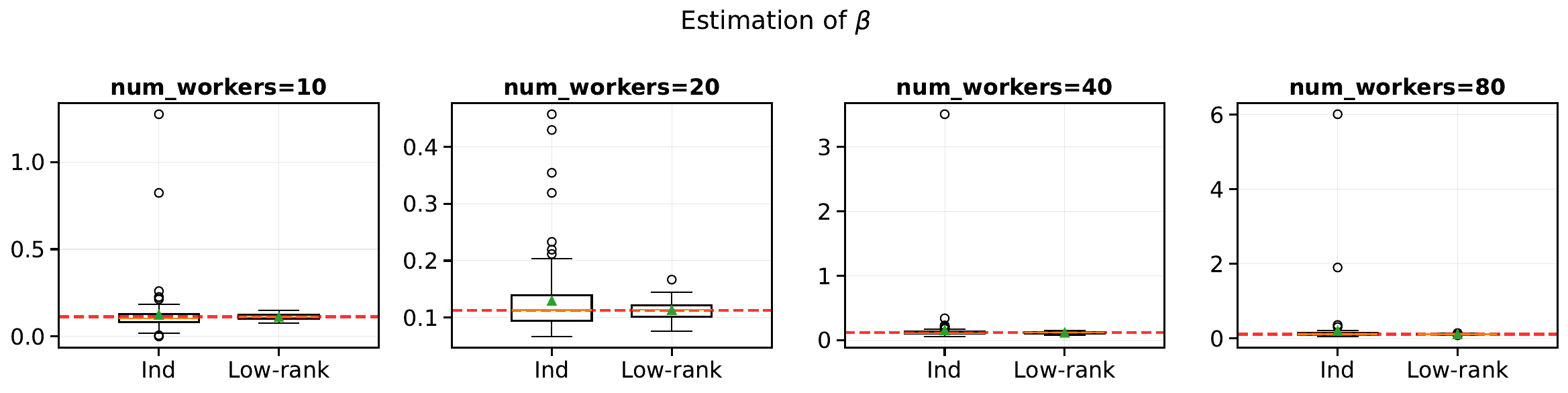}
        \caption{Boxplots of parameter estimates under varying number of workers: $\sigma$ (top) and $\beta$ (bottom). }
        \label{fig:boxplot_machine_number}

      \end{figure}

      \begin{figure}[h]
        \centering
        \includegraphics[width=0.5\textwidth]{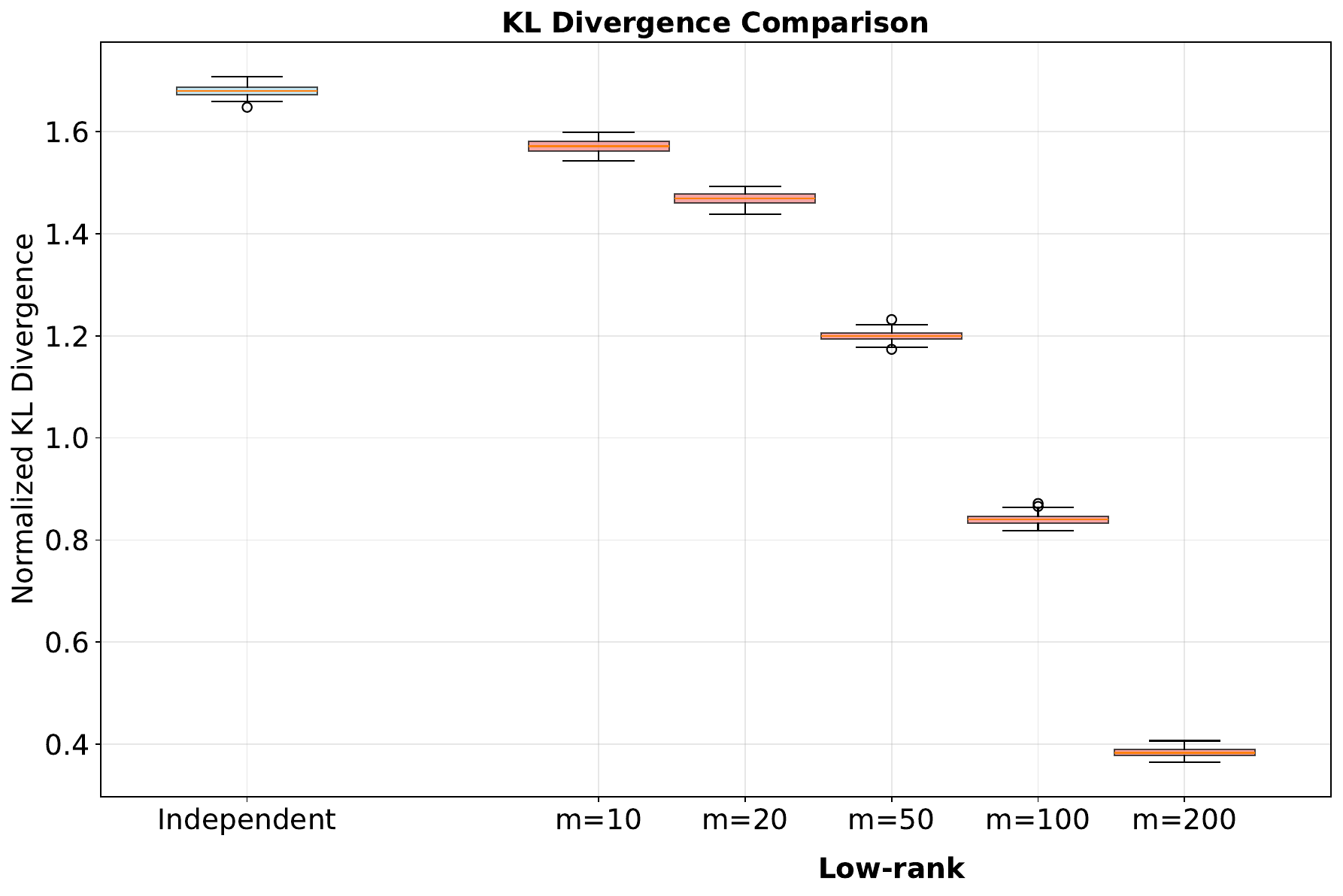}
        \caption{{Boxplots of KL divergence of independent model and low-rank model with varying knot numbers.}}
        \label{fig:kl_divergence}
      \end{figure}

{Finally, to directly verify Proposition~\ref{prop:kl_IDVSLOW}, we compare the KL divergences in a controlled experiment. We fix $\nu = 2.5$ and $\beta = 0.113$, set the local sample size to 100, use 10 workers, and vary the number of knots over $(10, 20, 50, 100, 200)$. The results, shown in Figure~\ref{fig:kl_divergence}, indicate that the KL divergence of the low-rank model is consistently smaller than that of the independent model, in agreement with Proposition~\ref{prop:kl_IDVSLOW}. Furthermore, the KL divergence of the low-rank model decreases as the number of knots increases, which aligns with intuition, even though the theory does not explicitly address this trend.}

      In conclusion, our simulation results show that the low-rank model consistently yields more accurate and robust parameter estimates than the independent model, particularly in settings with limited local data or a large number of workers. This improved performance stems from the use of shared global parameters and a regularized structure, which together mitigate the influence of outliers and enhance the stability of estimation. Overall, these findings underscore the benefits of incorporating low-rank approximations into distributed spatial inference.

    \subsection{Asynchronous vs. Synchronous Low-Rank Models}\label{sec:simulation:comparison_async_sync_lowrank}

      In this subsection, we compare the performance of asynchronous and synchronous federated modeling algorithms under the low-rank model. To assess their real-world performance in a distributed setting, we conduct experiments on Shaheen III (CPU partition), an HPE Cray EX supercomputer at King Abdullah University of Science and Technology (KAUST). Each worker corresponds to a single CPU compute node, with all or some cores per node allocated for computation, enabling us to capture both computation and communication costs in a realistic high-performance computing environment. Each CPU node on Shaheen~III is equipped with two AMD EPYC~9654 (``Genoa'') 96-core CPUs, providing 192 physical cores (384 logical threads), 384 GB of DDR5-4800 MHz memory, and is interconnected via the high-speed, low-latency HPE Slingshot 11 network. Both algorithms are implemented in Python using \texttt{MPI4py} (version 3.1.4) \citep{dalcin2021mpi4py}, compiled against Cray MPICH, the default MPI library on Shaheen III. All jobs are submitted through SLURM, with each MPI process mapped to a dedicated compute node. To emulate heterogeneous computational power among workers, each worker is assigned only a subset of the available cores on a single node.

     Unless specified, the following experimental settings apply throughout this subsection: the low-rank model is based on the full local covariance structure defined in Equation~\eqref{eq:full_local_covariance}; the number of workers is set to $10$, each holding local data with a sample size of $n = 5{,}000$. The kernel parameters $(\nu, \beta) = (0.5, 0.1)$. The hyperparameters are set based on Remark \ref{remark:hyperparameter_tuning}.

      We begin by comparing the performance of the proposed asynchronous algorithm, which integrates all strategies, with variants that employ only a single strategy. These strategies are described in Section~\ref{sec:asynchronous}. For reference, we also include results for the asynchronous algorithm without any strategy and for the baseline synchronous algorithm. The number of cores assigned to each worker follows the configuration shown in Figure~\ref{fig:core_assignment_per_worker}, ranging from a minimum of $3$ to a maximum of $62$, thereby simulating an environment with a significantly slower worker. The aggregation threshold, defined as the number of new local updates the server must receive before performing a global parameter update, is set to $2$, $4$, and $8$. The results, presented in Figure~\ref{fig:async_strategies}, show that the asynchronous algorithm, when run without any strategy or with only a single strategy, exhibits either substantial instability or markedly slower convergence. These findings underscore the importance of combining multiple strategies to achieve stable and efficient asynchronous optimization. Furthermore, an aggregation threshold of $2$ consistently yields faster convergence than thresholds of $4$ and $8$. Based on this observation, we set the aggregation threshold in subsequent experiments to $0.2$ times the number of workers.

      \begin{figure}[h]
  \centering
  \begin{subfigure}[t]{0.38\textwidth}
    \centering
    \includegraphics[width=\linewidth]{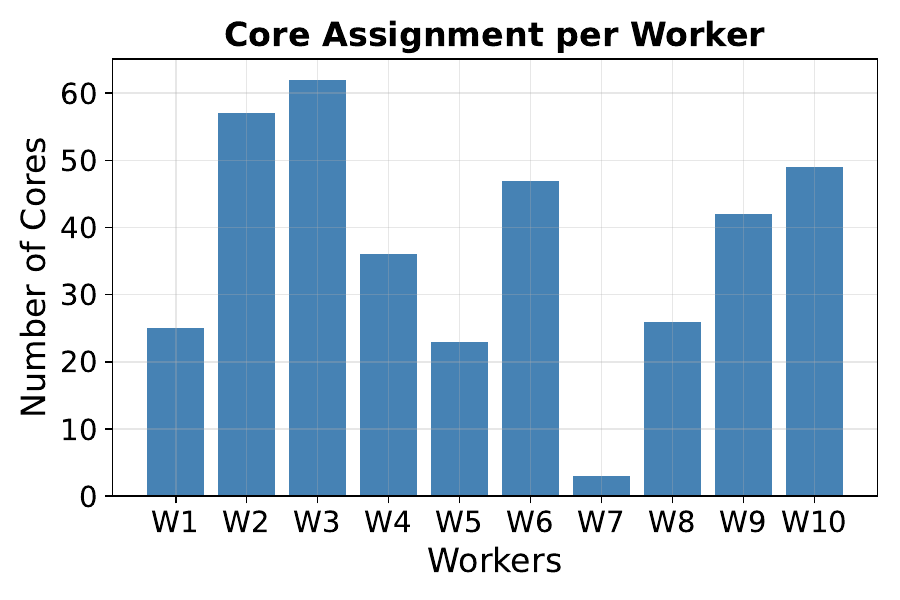}
    \caption{Core assignment per worker}
    \label{fig:core_assignment_per_worker}
  \end{subfigure}

  \vspace{1em}

  \begin{subfigure}[t]{0.92\textwidth}
    \centering
    \includegraphics[width=\linewidth]{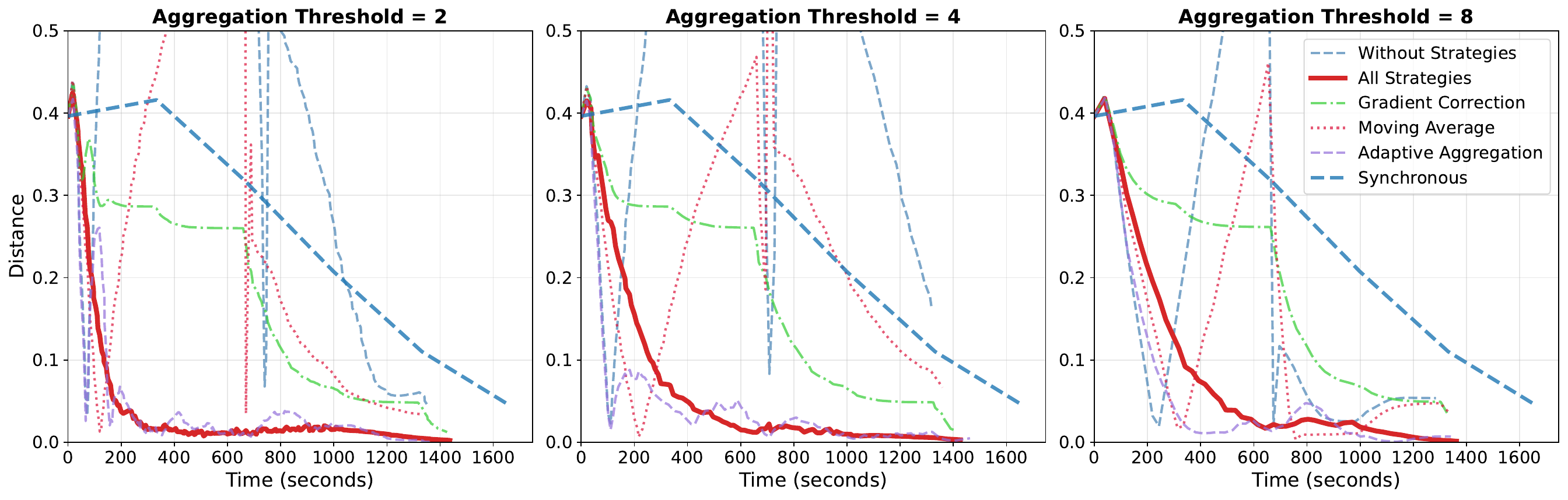}
    \caption{Comparison of asynchronous strategies}
    \label{fig:async_strategies}
  \end{subfigure}
  \caption{Performance comparison of asynchronous strategies under the core assignment.}
  \label{fig:combined_async_core}
\end{figure}

      We next compare the performance of the proposed asynchronous algorithm with the baseline synchronous algorithm under varying degrees of core allocation imbalance. The number of cores assigned to each worker is shown in the upper panel of Figure~\ref{fig:heterogeneity_comparison_linear}, ranging from equal allocations to highly uneven distributions. The corresponding results are presented in the lower panel of Figure~\ref{fig:heterogeneity_comparison_linear}, with additional results for datasets generated using different random seeds provided in Figure~\ref{fig:heterogeneity_comparison_linear_other_replications} (Appendix). As shown in the figures, when the imbalance is low, the asynchronous algorithm converges at a speed that is either slightly slower than or comparable to that of the synchronous algorithm. However, as the imbalance increases, the asynchronous algorithm converges progressively faster relative to its synchronous counterpart.

      \begin{figure}[h]
        \centering
        \includegraphics[width=0.95\textwidth]{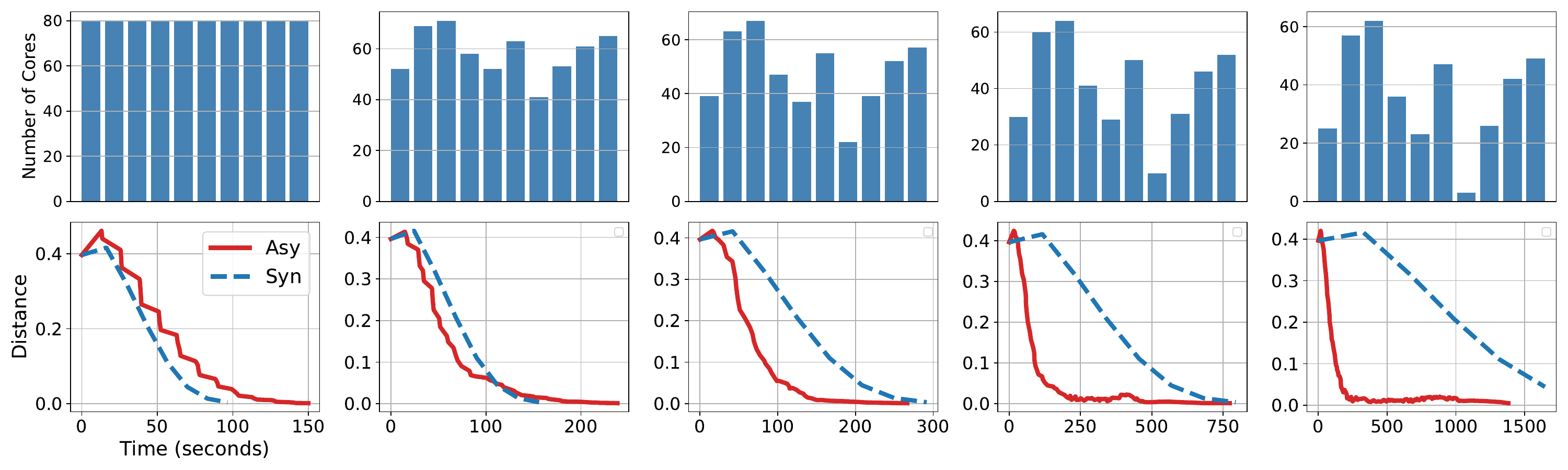}
        \caption{Asynchronous vs. synchronous algorithms under varying degrees of core allocation imbalance.}
        \label{fig:heterogeneity_comparison_linear}
      \end{figure}

      We also evaluate the performance of the proposed asynchronous algorithm in comparison to a baseline synchronous algorithm under heterogeneous local sample sizes. The number of assigned cores per worker varies, and local sample sizes may or may not reflect this variation in computational capacity. To assess the impact of this relationship, we consider three configurations, as illustrated in the upper panel of Figure~\ref{fig:unequal_sample_size}: 1) a core-aligned setting, where workers with more cores are assigned more samples; 2) an unaligned setting, where sample sizes vary independently of core counts; and 3) a core-inverse setting, where workers with fewer cores are assigned more samples. The corresponding convergence results are shown in the lower panel of Figure~\ref{fig:unequal_sample_size}. As the misalignment between core counts and local sample sizes increases, the convergence advantage of the asynchronous algorithm over the synchronous one becomes more pronounced. This is primarily because the per-iteration computation time of the synchronous algorithm is dictated by the slowest worker, which becomes increasingly slower as the mismatch increases. In contrast, the per-iteration time of the asynchronous algorithm is largely unaffected by this unalignment, and its convergence rate per iteration remains relatively stable.

      \begin{figure}[h]
        \centering
        \includegraphics[width=0.75\textwidth]{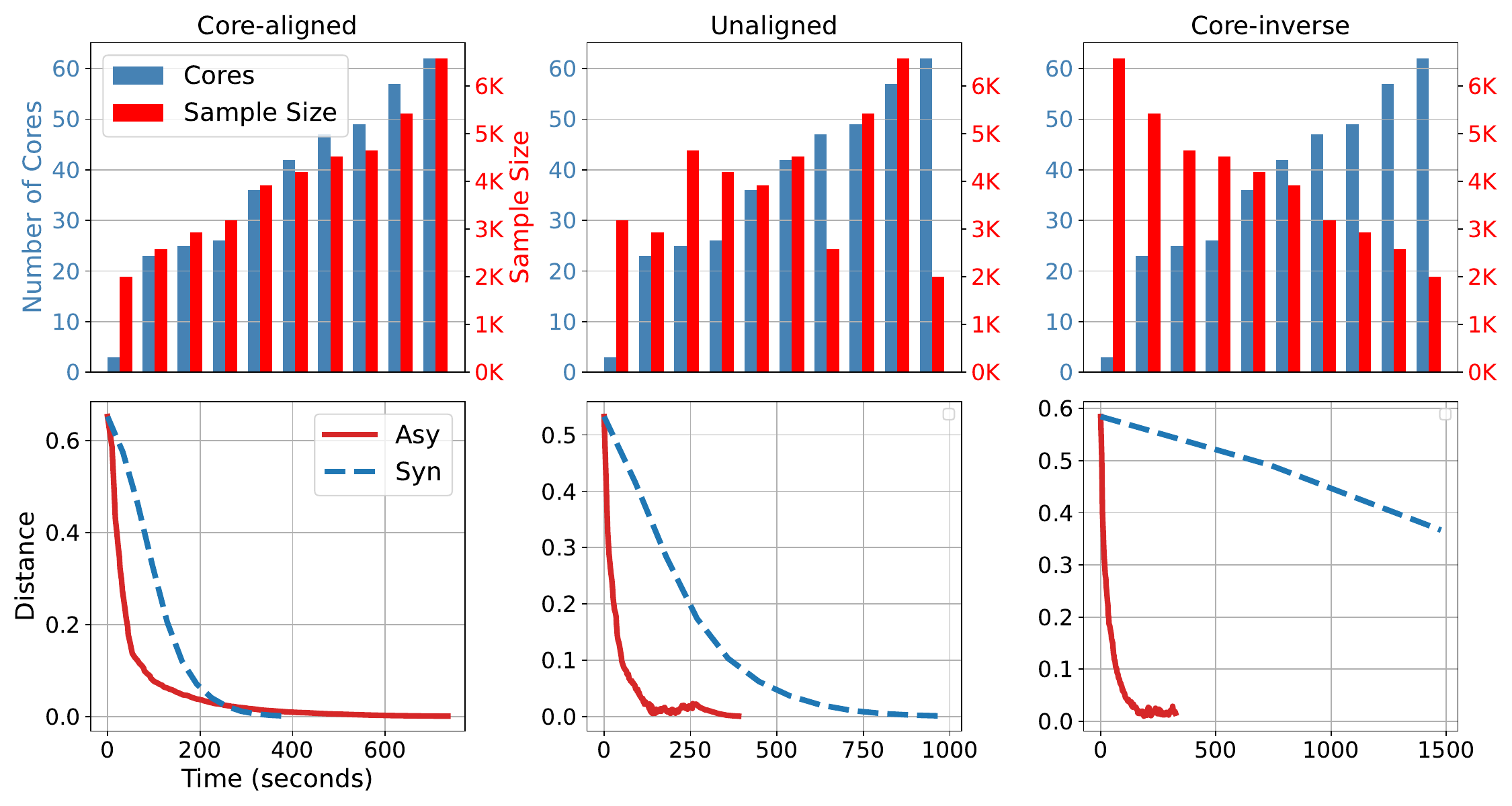}
        \caption{Comparison of asynchronous and synchronous algorithms under unequal local sample sizes.}
        \label{fig:unequal_sample_size}
      \end{figure}

      We further compare the performance of the proposed asynchronous algorithm with the baseline synchronous algorithm for varying numbers of workers: $5$, $20$, and $40$. The number of cores assigned to each worker follows the configuration shown in the upper panel of Figure~\ref{fig:varying_num_workers}, reflecting the fact that as the number of workers increases, the likelihood of some receiving fewer cores also increases, leading to greater computational imbalance. The convergence results are presented in the lower panel of Figure~\ref{fig:varying_num_workers}. As shown, the asynchronous algorithm consistently converges across all settings. Moreover, as the number of workers increases and the imbalance becomes more pronounced, the asynchronous algorithm converges faster than the synchronous algorithm.

      \begin{figure}[h]
        \centering
        \includegraphics[width=0.8\textwidth]{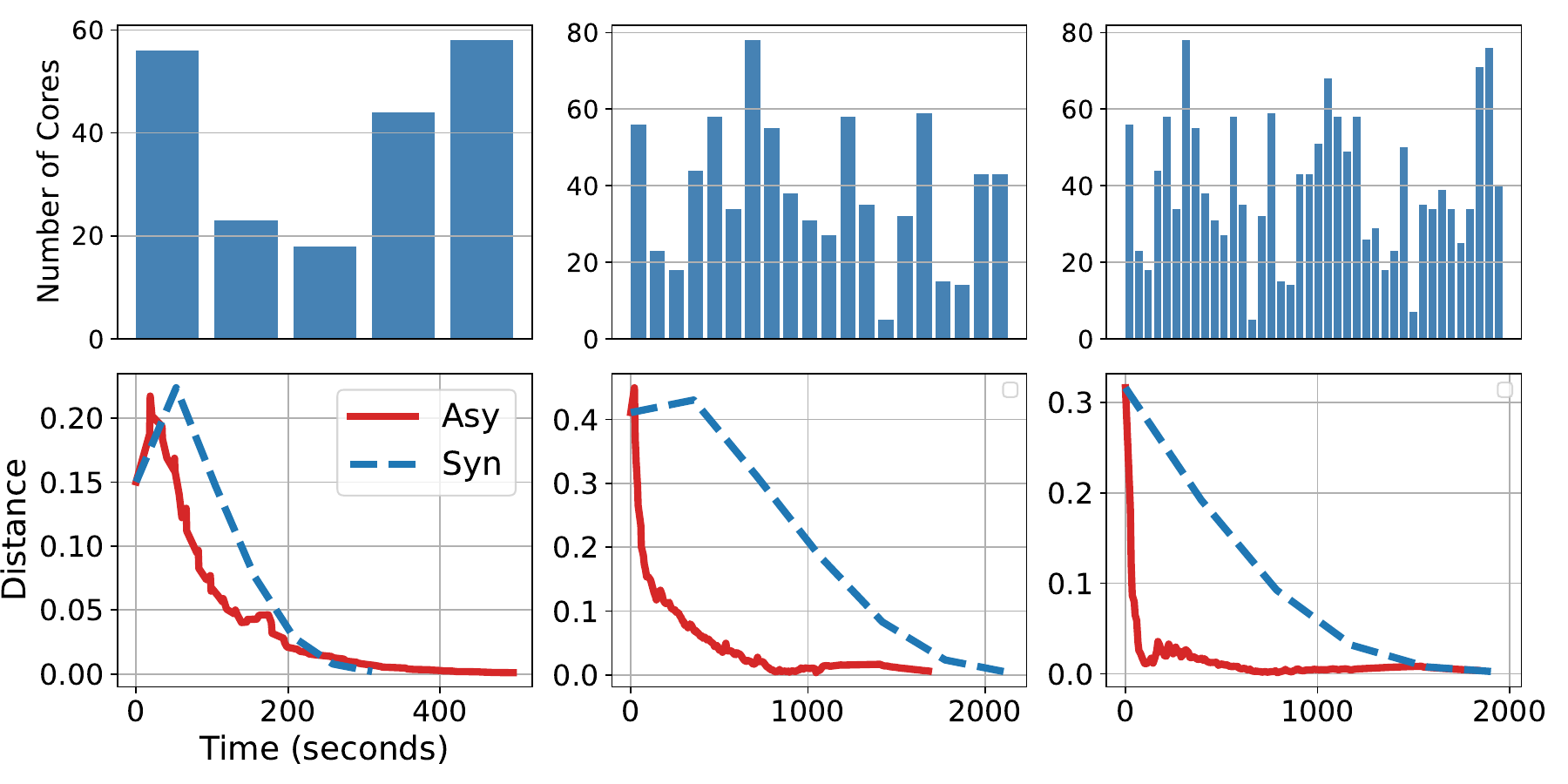}
        \caption{Comparison of asynchronous and synchronous algorithms under varying numbers of workers.}
        \label{fig:varying_num_workers}
      \end{figure}

      Moreover, we evaluate how the prediction error evolves over time for the proposed asynchronous algorithm compared to the baseline synchronous algorithm, under varying degrees of core allocation imbalance. The core assignments, shown in the upper panel of Figure~\ref{fig:prediction_error}, range from balanced to highly imbalanced configurations. To improve predictive performance, the number of knots is set to 500, slightly larger than in previous experiments. Prediction error is computed using 100 out-of-sample data points. The results, presented in Figure~\ref{fig:prediction_error}, show that the prediction error for both algorithms decreases over time. When the core imbalance is low, the asynchronous algorithm achieves a prediction error reduction rate that is comparable to or slightly slower than that of the synchronous algorithm. However, as the imbalance increases, the asynchronous algorithm exhibits a progressively faster reduction in prediction error relative to the synchronous counterpart, highlighting its advantage in heterogeneous computing environments.

      \begin{figure}[h]
        \centering
        \includegraphics[width=0.95\textwidth]{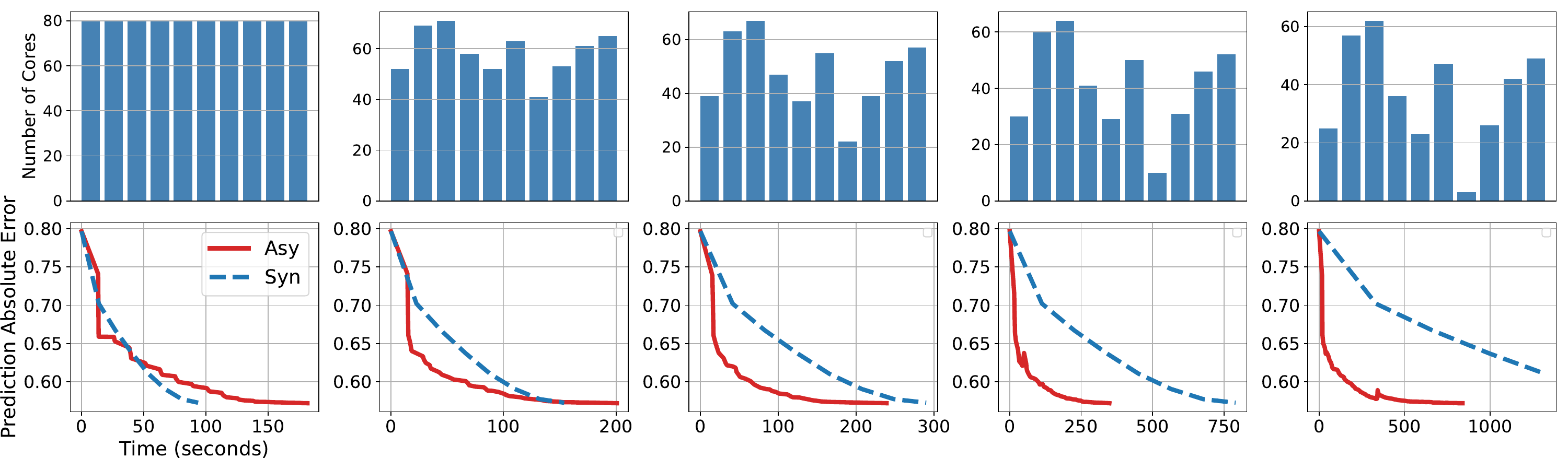}
        \caption{Prediction error comparison of asynchronous and synchronous algorithms under different degrees of core allocation imbalance.}
        \label{fig:prediction_error}
      \end{figure}

      Finally, we evaluate the robustness of the proposed asynchronous algorithm in comparison to the baseline synchronous algorithm under various experimental conditions, including different local sample sizes, kernel parameters, and data partitioning schemes.

      We first vary the local sample size per worker, considering values of 6,000, 8,000, and 10,000. The core allocation per worker follows the setup illustrated in the leftmost subfigure of Figure~\ref{fig:varying_num_workers}, where the maximum number of cores is slightly increased to ensure computation time remains within a reasonable range for larger datasets. The convergence results are shown in the three right subfigures of Figure~\ref{fig:varying_sample_sizes}. Across all sample sizes, the asynchronous algorithm achieves faster convergence than the synchronous algorithm, highlighting its robustness to changes in local data volume.

      \begin{figure}[h]
        \centering
        \includegraphics[width=0.95\textwidth]{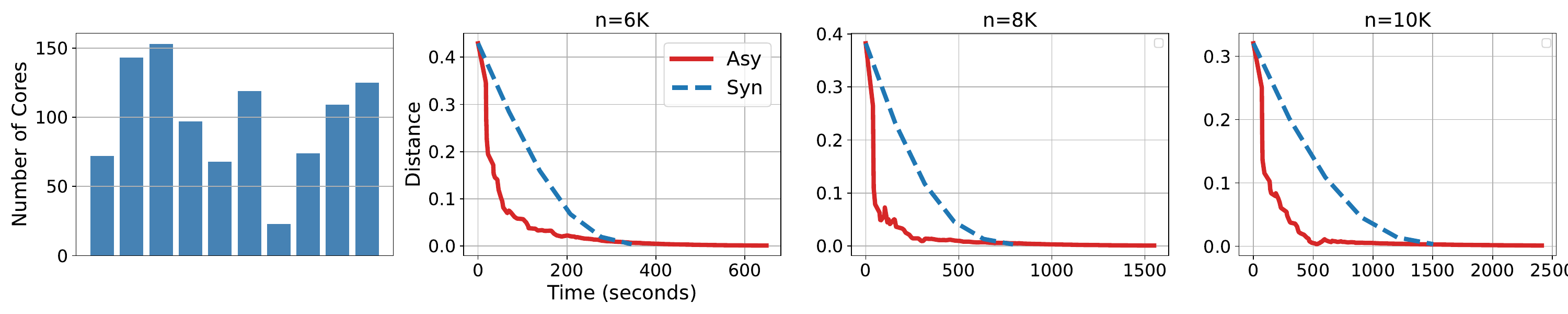}
        \caption{Comparison of asynchronous and synchronous algorithms under varying local sample sizes per worker.}
        \label{fig:varying_sample_sizes}
      \end{figure}

   Next, we examine the performance of the algorithms under varying kernel parameters, using the core assignment setup shown in Figure~\ref{fig:core_assignment_per_worker}. We consider two values for the smoothness parameter, $\nu \in \{0.5, 1.5\}$, and for each $\nu$, three values of the range parameter $\beta$, yielding six kernel settings: $(0.5, 0.033)$, $(0.5, 0.1)$, $(0.5, 0.234)$, $(1.5, 0.021)$, $(1.5, 0.063)$, and $(1.5, 0.148)$. A larger $\nu$ corresponds to smoother spatial processes (Stein, 2012), while a larger $\beta$ indicates a longer range of spatial correlation. The selected $\beta$ values produce effective spatial ranges of approximately 0.1, 0.3, and 0.7, respectively. As shown in Figure~\ref{fig:varying_kernel_parameters}, the asynchronous algorithm consistently achieves faster convergence across all kernel configurations, demonstrating strong robustness to changes in spatial dependence characteristics.

      \begin{figure}[h]
        \centering
        \includegraphics[width=0.8\textwidth]{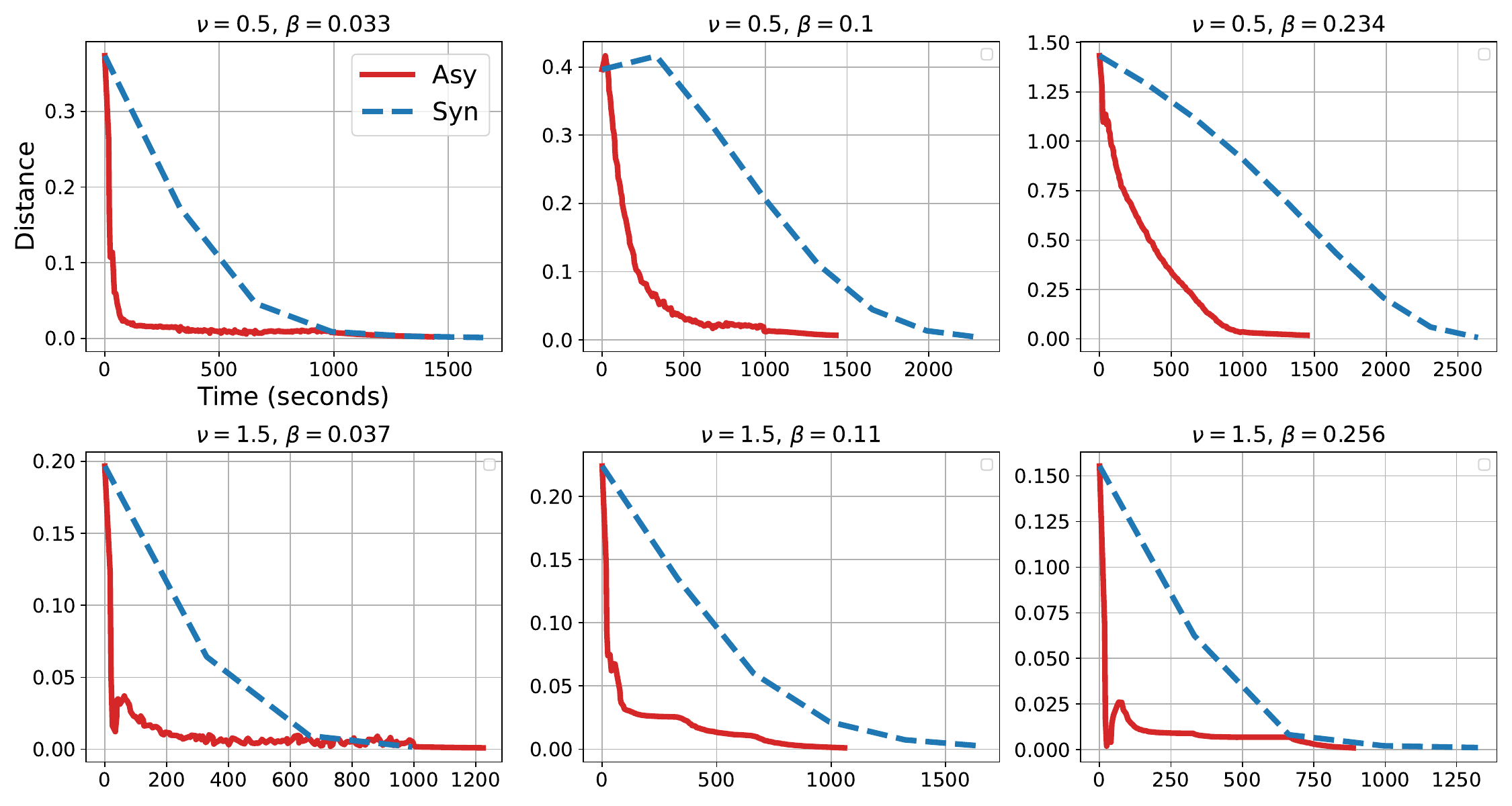}
        \caption{Comparison of asynchronous and synchronous algorithms under varying kernel parameters.}
        \label{fig:varying_kernel_parameters}
      \end{figure}

      Lastly, we assess the impact of different data partitioning strategies, again using the core assignments in Figure~\ref{fig:core_assignment_per_worker}. Three partitioning schemes are considered: 1) random partitioning, where data points are distributed uniformly without considering spatial location; 2) area-based partitioning, which groups data by spatial proximity; and 3) random neighboring, a hybrid method where each subset is formed by random selection and then augmented with 99 spatially neighboring points for each selected point. The local sample size is fixed at 5,000 per worker. We use 10 workers for the random and random neighboring schemes, and 9 workers for the area-based scheme. As shown in Figure~\ref{fig:partition}, the asynchronous algorithm outperforms the synchronous counterpart under all three partitioning strategies, further demonstrating its robustness to heterogeneous data distributions.

      \begin{figure}[h]
        \centering
        \includegraphics[width=0.8\textwidth]{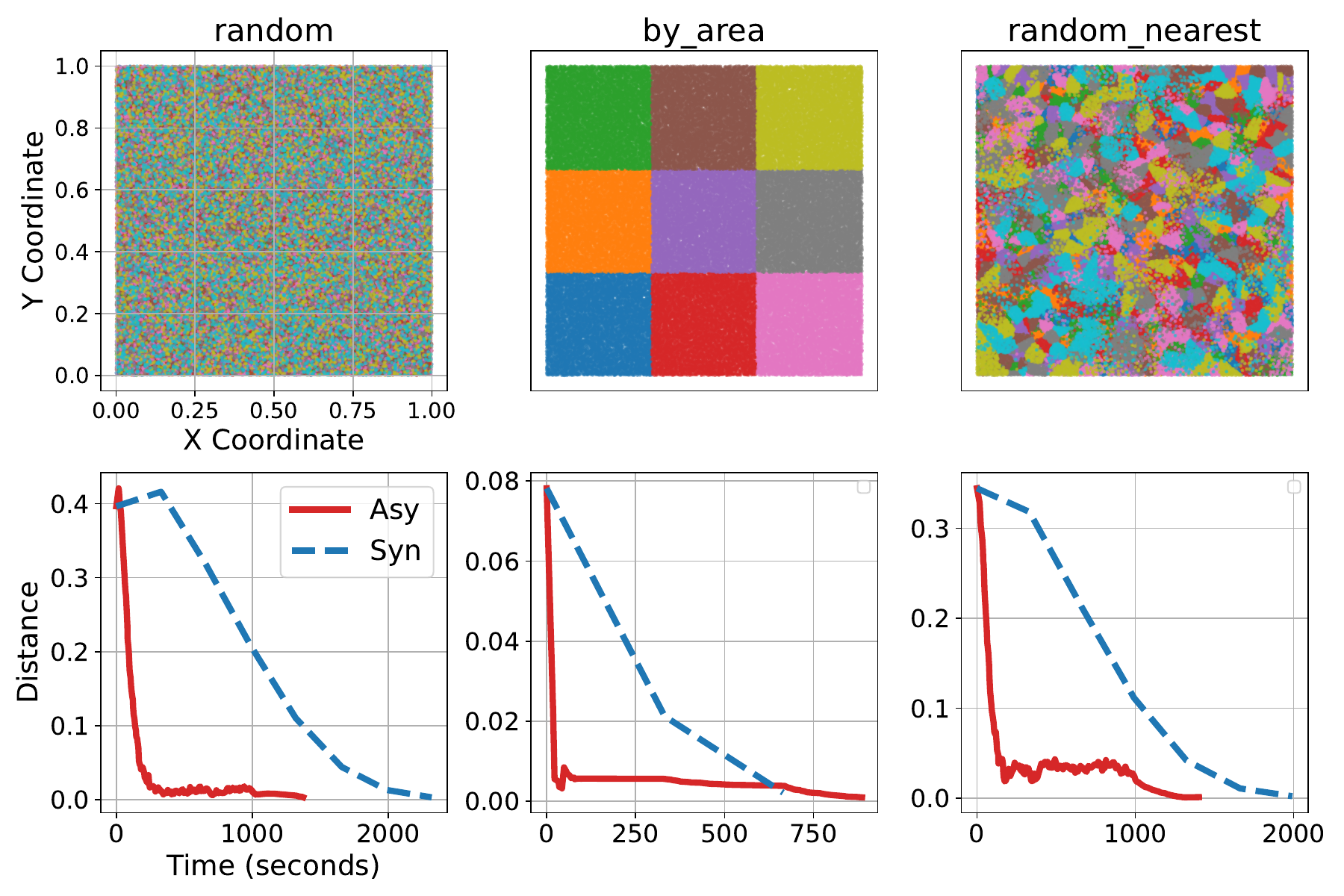}
        \caption{Comparison of asynchronous and synchronous algorithms under different data partitioning ways.}
        \label{fig:partition}
      \end{figure}

      In summary, our simulation results show that the proposed asynchronous algorithm achieves comparable or slightly worse performance when computational resources are well balanced. However, in scenarios with significant computational imbalance, whether caused by heterogeneous processing power or unequal local sample sizes, the asynchronous algorithm consistently outperforms its synchronous counterpart. This performance advantage holds across various parameter settings, sample sizes, and data partitioning schemes, highlighting the robustness and practical benefits of the proposed asynchronous algorithm.

  \section{Conclusion}

  In this article, we present an asynchronous federated modeling framework for spatial data using low-rank Gaussian process approximations. By adopting a block-wise optimization scheme and addressing key challenges of asynchronous updates, including gradient inconsistencies, staleness, and update instability, our method enables efficient and scalable inference in distributed and heterogeneous environments.
  We established theoretical guarantees for the proposed algorithm, showing that it achieves linear convergence with an explicit characterization of its dependence on staleness. This result not only supports the practical implementation of asynchronous federated modeling but also provides insights that may be of potential standalone theoretical significance in optimization theory.
  Extensive numerical experiments demonstrate that the asynchronous algorithm performs comparably to the synchronous approach under balanced computational resources and consistently outperforms it in scenarios with heterogeneous processing power or unequal data partitioning. These findings highlight the robustness, scalability, and practical utility of the proposed method for distributed spatial inference. {Specifically,} our analysis presumes bounded staleness across workers; linear convergence holds when the step size is chosen appropriately, i.e., scaling inversely with the maximum staleness $\tau$, and the rate improves as this bound tightens. In practice, our asynchronous recipe, i.e., local gradient correction, staleness-aware aggregation, and a moving average of parameters, stabilizes updates and maintains accuracy. A step size proportional to $1/\tau$ is a robust default in our setting. Practically, the method matches synchronous baselines on balanced hardware and wins under heterogeneity, reflecting the compute/communication trade-off that favors asynchronous when stragglers are present.

  Future work may proceed in several directions to further strengthen the proposed asynchronous federated modeling framework. First, a precise theoretical analysis is needed to quantify the gains of the improved asynchronous algorithm over the basic version, particularly in terms of convergence rate and stability. Second, extending the framework to multi-resolution low-rank models \citep{nychka2015multiresolution,katzfuss2017multi} would enable the capture of spatial dependencies across multiple scales more effectively, thereby improving approximation accuracy. {Third}, exploring fully decentralized communication architectures \citep{gabrielli2023survey,shi2025decentralized} would remove the reliance on a central server, thereby enhancing scalability, robustness, and applicability in distributed spatial inference. {Fourth, while we currently place knots on a predefined grid, adapting data-dependent methods such as variational optimization \citep{titsias2009variational} or support points \citep{song2025large} to the federated setting could increase the approximation accuracy of the low-rank model. Fifth, adopting hierarchical or tile low-rank structures \citep{mondal2023tile,salvana2022parallel,abdulah2018parallel} for local covariance matrices could further reduce computation costs on individual workers while maintaining accuracy. }

\acks{
This work is based upon work supported by King Abdullah University of Science and
Technology Research Funding (KRF) under Award No. ORFS-2022-CRG11-5069.}

  \appendix
  \section{}\label{sec:appendix}
\renewcommand{\thesubsection}{A.\arabic{subsection}}
      \setcounter{figure}{0}
\renewcommand{\thefigure}{A.\arabic{figure}} 
\setcounter{equation}{0}
\renewcommand{\theequation}{A\arabic{equation}}

    \subsection{Derivation of the objective in Equation \eqref{eq:low-rank-obj}.}\label{sec:appendix:additional_details}
      In this subsection, we provide the derivation of the objective function $f$ in Equation \eqref{eq:low-rank-obj}. The derivation is based on the evidence lower bound (ELBO) in variational inference, specifically, let \( q : \mathbb{R}^m \to \mathbb{R} \) be a density function, the ELBO is defined as a function of  $q$:
      \[
      \text{ELBO} (q) := \int q (\boldsymbol{\eta}) \log \frac{p (\boldsymbol{z}
      | \boldsymbol{\eta}  ) p (\boldsymbol{} \boldsymbol{\eta})}{q
        (\boldsymbol{\eta})} \mathrm{d} \boldsymbol{\eta} =\mathbb{E}_q \log p (\boldsymbol{z} |
        \boldsymbol{\eta}  ) -  \text{KL} (q(\boldsymbol{\eta}) \| p (\boldsymbol{\eta})). \]
        According to Jensen's inequality, $ \text{ELBO}$ is a lower bound of the
        log-likelihood, that is, $\text{ELBO} \leqslant \log p(\boldsymbol{z})$, with equality achieved if and only if $q (\boldsymbol{\eta}) = p (\boldsymbol{}
        \boldsymbol{\eta} | \boldsymbol{z})$. Since $p (
        \boldsymbol{\eta} | \boldsymbol{z})$ is a Gaussian density, $q (\boldsymbol{\eta})$ can be parameterized by a mean vector $\boldsymbol{\mu}$ and a covariance matrix $\boldsymbol{\Sigma}$. Additionally, with  the conditional independence of $\boldsymbol{z}_1,\ldots,\boldsymbol{z}_J$ given $\boldsymbol{\eta}$, we have  $\log p (\boldsymbol{z} |
        \boldsymbol{\eta}  ) = \sum_j \log p (\boldsymbol{z}_j | \boldsymbol{\eta}
        )$. This leads to \vspace{-10pt}
        \begin{equation}\label{eq:ob1}
          \log p(\boldsymbol{z}) = \max_{
            \boldsymbol{\mu}, \boldsymbol{\Sigma}} \left\{ \sum_j \mathbb{E}_{\mathcal{N} (\boldsymbol{\eta} |
            \boldsymbol{\mu}, \boldsymbol{\Sigma}  )} \left\{ \log p (\boldsymbol{z}_j |
            \boldsymbol{\eta}  )\right\}-  \text{KL} (\mathcal{N} (\boldsymbol{\eta} |
            \boldsymbol{\mu}, \boldsymbol{\Sigma}  ) \|  p (\boldsymbol{\eta})) \right\},\vspace{-5pt}
          \end{equation}
          where $\delta=\tau^{-2}$.  The negative ELBO on the right-hand side of Equation \eqref{eq:ob1} is exactly the objective function $f$ in Equation \eqref{eq:low-rank-obj}.

    \subsection{Additional Simulation Results}\label{sec:appendix:additional_simulation_results}

          In this subsection, we provide additional simulation results to supplement the main text. Specifically, Figures~\ref{fig:boxplot_parameter_settings_other}, \ref{fig:boxplot_local_sample_size_other}, and \ref{fig:boxplot_machine_number_other} present boxplots of parameter estimates for additional parameters under varying covariance settings, local sample sizes, and numbers of workers, respectively. For the parameter $\boldsymbol{\gamma}$, we summarize the effect of the five covariates using their average, defined as $\gamma_{\text{avg}} := \sum_{i=1}^5 \gamma_i/5$, allowing the five-dimensional vector to be visualized in a single plot.

\vspace{-10pt}
          \begin{figure}[h]
            \centering
            \includegraphics[width=0.43\textwidth]{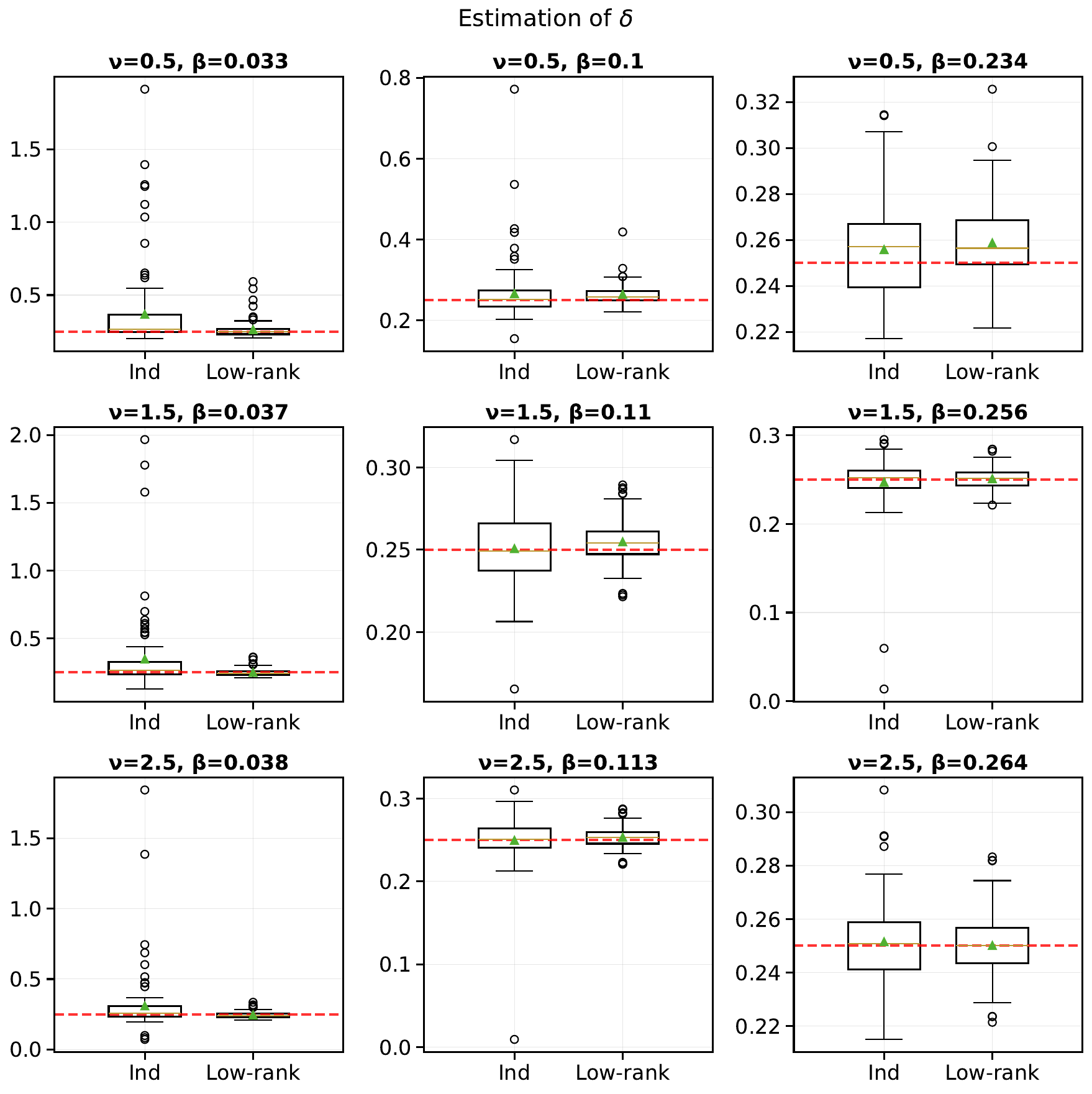}
            \includegraphics[width=0.43\textwidth]{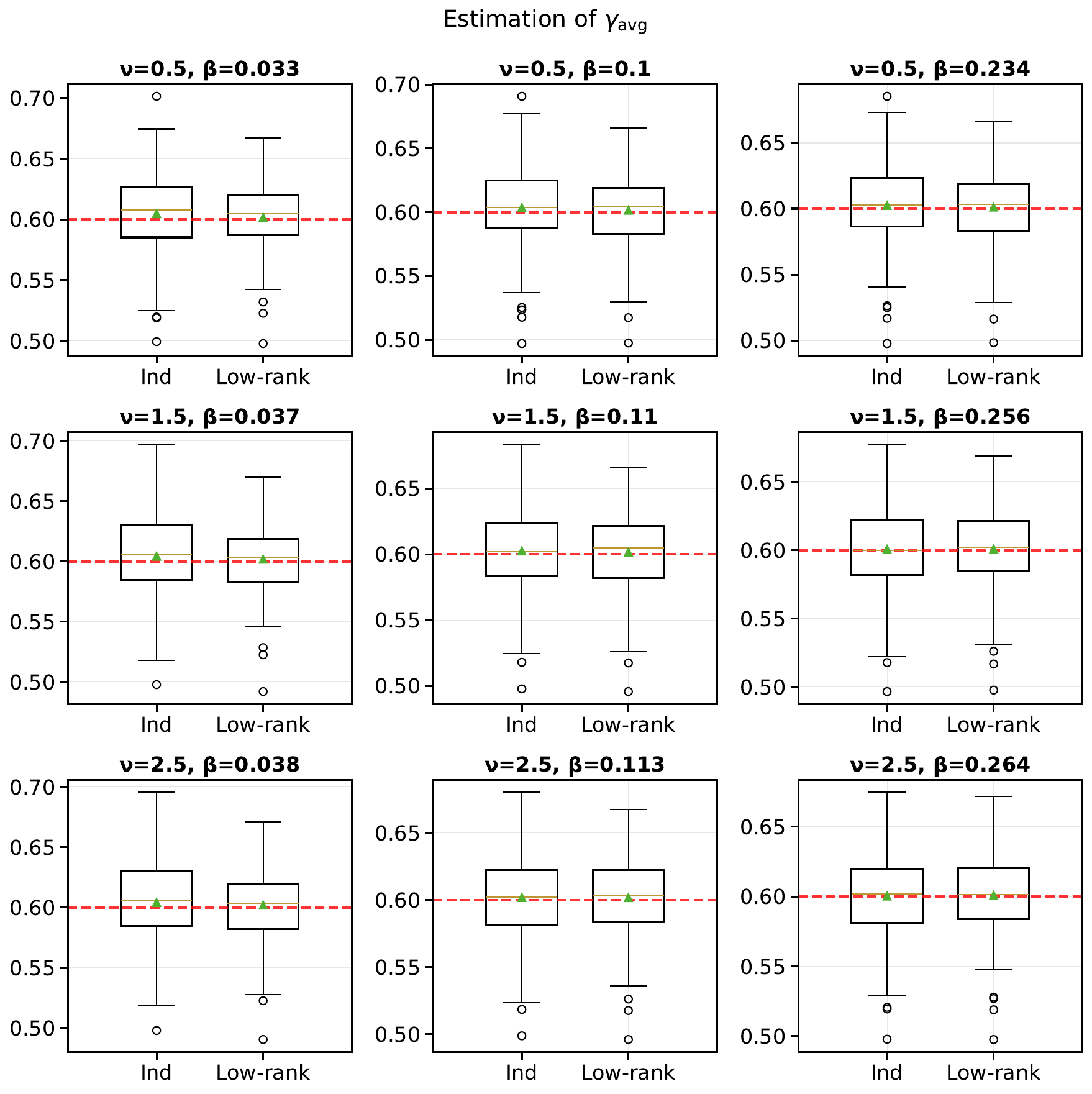}
            \caption{Boxplots of parameter estimates under varying covariance function settings: $\delta$ (left) and $\gamma_{\text{avg}}:=\sum_{i=1}^5 \gamma_i/5$ (right). }
            \label{fig:boxplot_parameter_settings_other}

          \end{figure}

          \begin{figure}[h]
            \centering
            \includegraphics[width=0.43\textwidth]{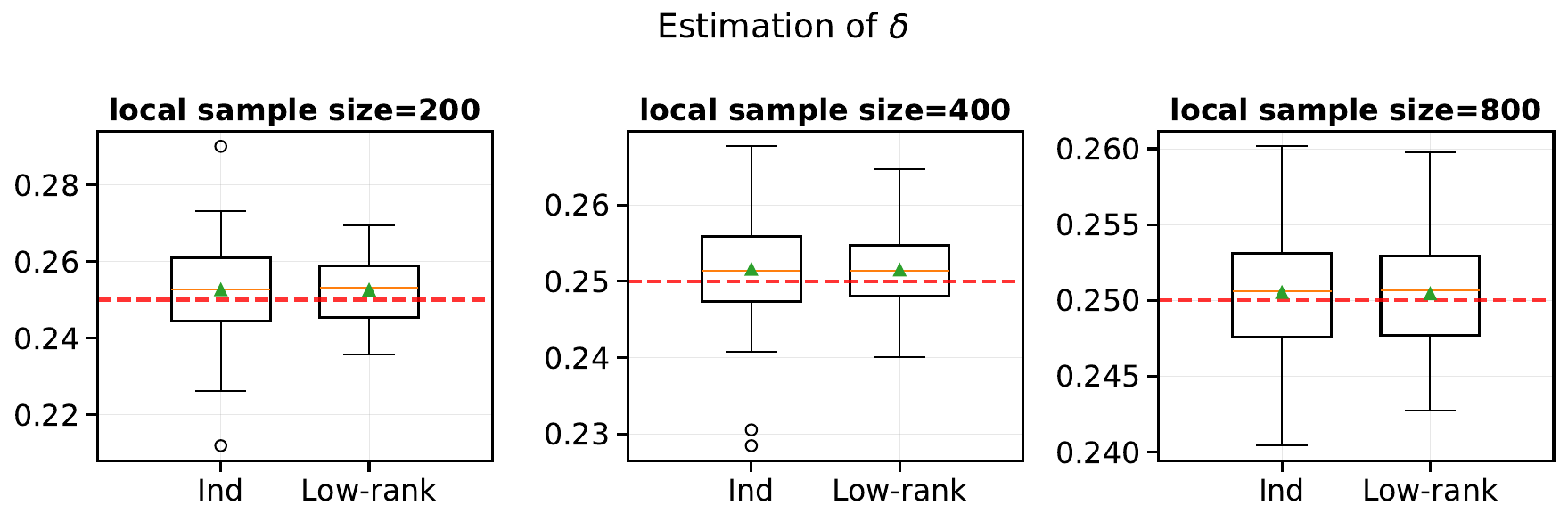}
            \includegraphics[width=0.43\textwidth]{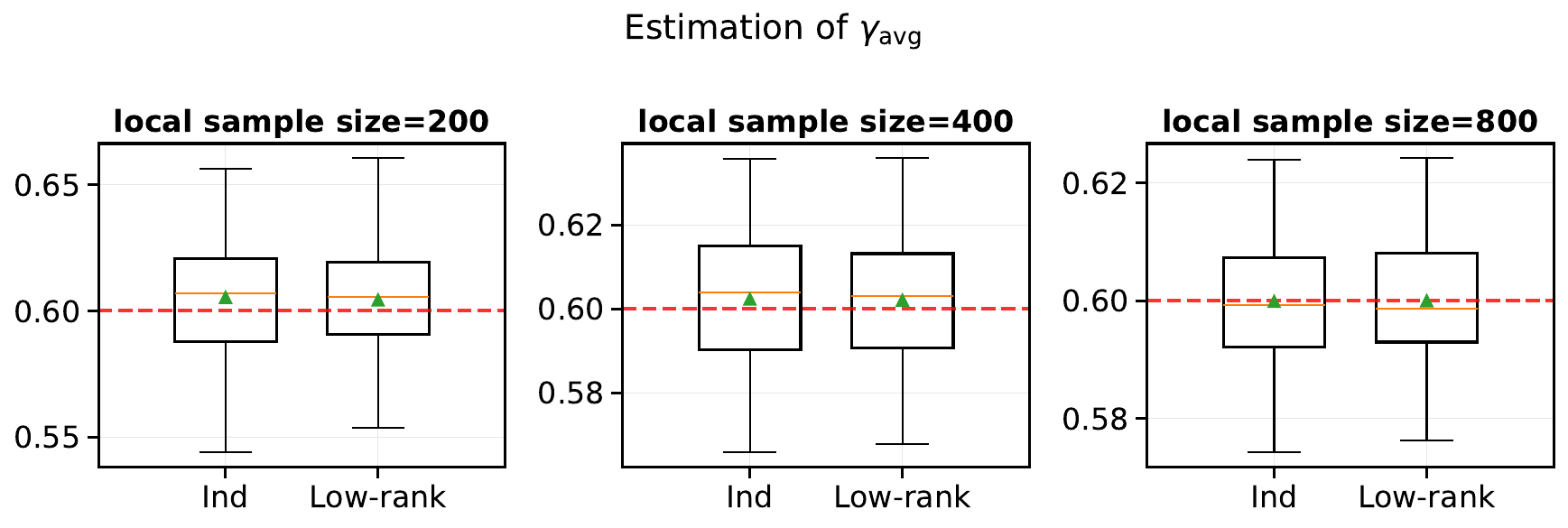}
            \caption{Boxplots of parameter estimates under varying local sample size settings: $\delta$ (left) and $\gamma_{\text{avg}}:=\sum_{i=1}^5 \gamma_i/5$ (right). }
            \label{fig:boxplot_local_sample_size_other}

          \end{figure}

          \begin{figure}[h]
            \centering
            \includegraphics[width=0.43\textwidth]{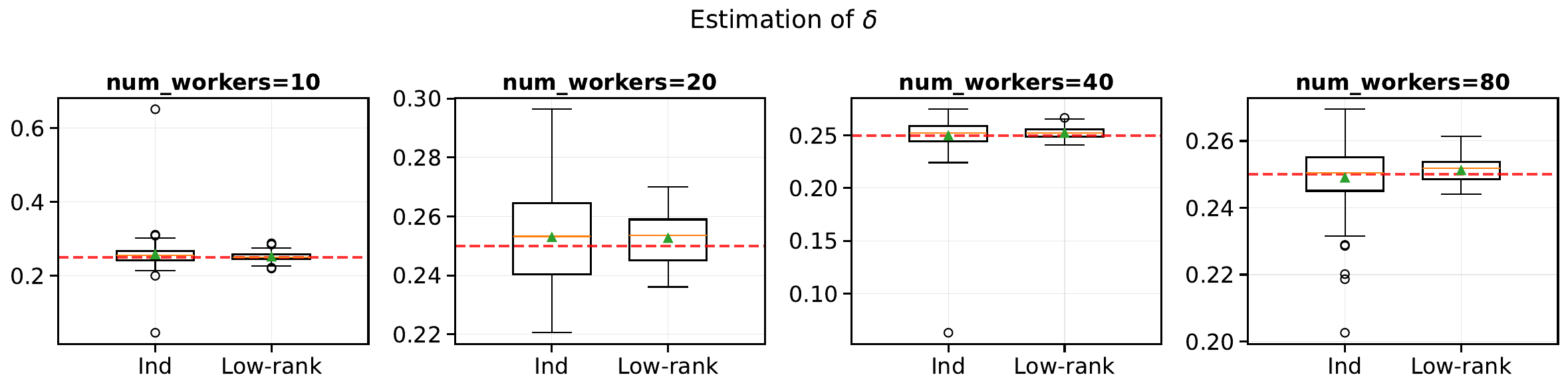}
            \includegraphics[width=0.43\textwidth]{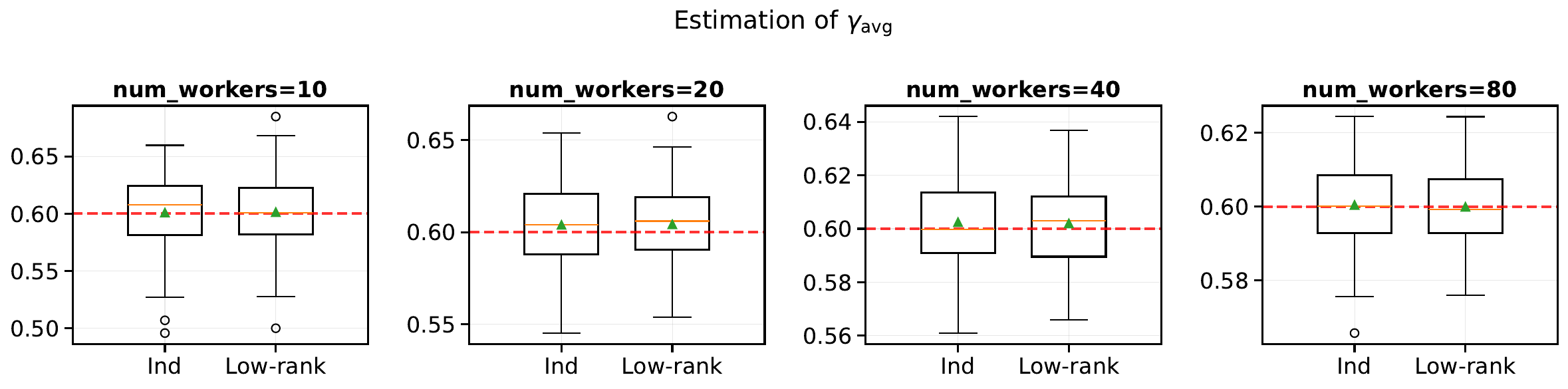}
            \caption{Boxplots of parameter estimates under varying number of workers: $\delta$ (left) and $\gamma_{\text{avg}}:=\sum_{i=1}^5 \gamma_i/5$ (right). }
            \label{fig:boxplot_machine_number_other}

          \end{figure}

          \begin{figure}[h]
            \centering
            \includegraphics[width=0.9\textwidth]{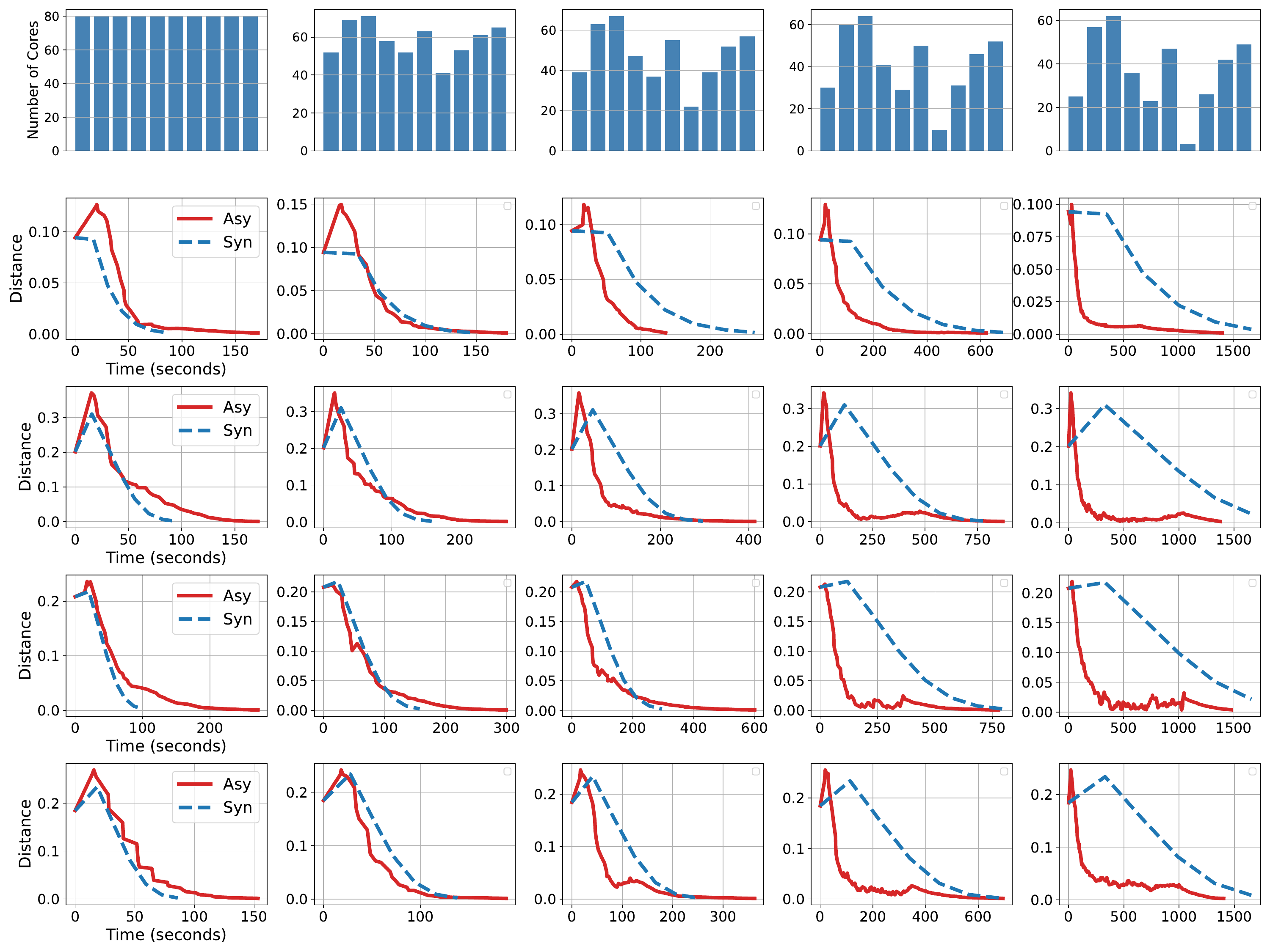}
            \caption{Comparison of asynchronous and synchronous algorithms under varying degrees of core allocation imbalance and different seeds.}
            \label{fig:heterogeneity_comparison_linear_other_replications}
          \end{figure}

    \subsection{Proofs}\label{sec:appendix:proofs}
          This subsection provides proofs of the theorems and propositions from the main text.

      \subsubsection{Proof of Proposition \ref{prop:kl_IDVSLOW}}\label{sec:proof_prop_kl_IDVSLOW}
          We need the following lemma to prove Proposition \ref{prop:kl_IDVSLOW}.
          \begin{lemma}\label{lem:det_sum}
            Suppose that \(\boldsymbol{X}_1, \ldots, \boldsymbol{X}_J\) are positive semidefinite matrices. Then,
            \[
            \det\left(\boldsymbol{I} + \sum_{j=1}^J \boldsymbol{X}_j \right) \leq \prod_{j=1}^J \det(\boldsymbol{I} + \boldsymbol{X}_j).
            \]
          \end{lemma}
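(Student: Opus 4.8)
The plan is to peel off the summands $\boldsymbol{X}_j$ one at a time using the multiplicative structure of the determinant. Write $\boldsymbol{S}_k := \sum_{j=1}^{k}\boldsymbol{X}_j$ with $\boldsymbol{S}_0 = \boldsymbol{0}$, and set $\boldsymbol{M}_{k-1} := (\boldsymbol{I}+\boldsymbol{S}_{k-1})^{-1/2}$; this is well defined since $\boldsymbol{I}+\boldsymbol{S}_{k-1}$ is positive definite, and it satisfies $\boldsymbol{0}\preceq\boldsymbol{M}_{k-1}\preceq\boldsymbol{I}$ because $\boldsymbol{S}_{k-1}\succeq\boldsymbol{0}$. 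Factoring $\boldsymbol{I}+\boldsymbol{S}_{k-1}$ out of $\boldsymbol{I}+\boldsymbol{S}_k = \boldsymbol{I}+\boldsymbol{S}_{k-1}+\boldsymbol{X}_k$ gives
\[
\det(\boldsymbol{I}+\boldsymbol{S}_k) = \det(\boldsymbol{I}+\boldsymbol{S}_{k-1})\,\det\big(\boldsymbol{I}+\boldsymbol{M}_{k-1}\boldsymbol{X}_k\boldsymbol{M}_{k-1}\big).
\]
Hence, once I establish that $\det(\boldsymbol{I}+\boldsymbol{M}\boldsymbol{X}\boldsymbol{M})\leq\det(\boldsymbol{I}+\boldsymbol{X})$ whenever $\boldsymbol{X}\succeq\boldsymbol{0}$ and $\boldsymbol{0}\preceq\boldsymbol{M}\preceq\boldsymbol{I}$, telescoping this identity from $k=J$ down to $k=1$ and using $\det(\boldsymbol{I}+\boldsymbol{S}_0)=1$ will give $\det(\boldsymbol{I}+\sum_{j=1}^{J}\boldsymbol{X}_j)\leq\prod_{j=1}^{J}\det(\boldsymbol{I}+\boldsymbol{X}_j)$, which is the claim.

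\emph{The key two-matrix bound.} To prove $\det(\boldsymbol{I}+\boldsymbol{M}\boldsymbol{X}\boldsymbol{M})\leq\det(\boldsymbol{I}+\boldsymbol{X})$ I would argue at the level of eigenvalues. The matrix $\boldsymbol{M}\boldsymbol{X}\boldsymbol{M} = (\boldsymbol{M}\boldsymbol{X}^{1/2})(\boldsymbol{X}^{1/2}\boldsymbol{M})$ has the same characteristic polynomial as $(\boldsymbol{X}^{1/2}\boldsymbol{M})(\boldsymbol{M}\boldsymbol{X}^{1/2}) = \boldsymbol{X}^{1/2}\boldsymbol{M}^2\boldsymbol{X}^{1/2}$, since $\boldsymbol{A}\boldsymbol{B}$ and $\boldsymbol{B}\boldsymbol{A}$ are cospectral for square matrices $\boldsymbol{A},\boldsymbol{B}$. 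As $\boldsymbol{M}^2\preceq\boldsymbol{I}$, we have $\boldsymbol{X}^{1/2}\boldsymbol{M}^2\boldsymbol{X}^{1/2}\preceq\boldsymbol{X}^{1/2}\boldsymbol{I}\boldsymbol{X}^{1/2}=\boldsymbol{X}$, so Weyl's monotonicity theorem (with eigenvalues arranged in decreasing order) yields $\lambda_i(\boldsymbol{M}\boldsymbol{X}\boldsymbol{M})=\lambda_i(\boldsymbol{X}^{1/2}\boldsymbol{M}^2\boldsymbol{X}^{1/2})\leq\lambda_i(\boldsymbol{X})$ for every $i$, all quantities being nonnegative. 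Therefore $\det(\boldsymbol{I}+\boldsymbol{M}\boldsymbol{X}\boldsymbol{M})=\prod_i\big(1+\lambda_i(\boldsymbol{M}\boldsymbol{X}\boldsymbol{M})\big)\leq\prod_i\big(1+\lambda_i(\boldsymbol{X})\big)=\det(\boldsymbol{I}+\boldsymbol{X})$, which is exactly what the reduction needs.

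I do not anticipate a genuine obstacle: the whole argument is elementary and short. The one step that calls for a little care is the eigenvalue comparison, where one must symmetrize the sandwich $\boldsymbol{M}\boldsymbol{X}\boldsymbol{M}$ into the positive semidefinite form $\boldsymbol{X}^{1/2}\boldsymbol{M}^2\boldsymbol{X}^{1/2}$ before invoking Loewner monotonicity of eigenvalues, and be consistent about the ordering of eigenvalues. An equivalent alternative, if preferred, is a direct induction on $J$: the base case $J=1$ is trivial, and the inductive step is precisely the $J=2$ inequality $\det(\boldsymbol{I}+\boldsymbol{A}+\boldsymbol{B})\leq\det(\boldsymbol{I}+\boldsymbol{A})\det(\boldsymbol{I}+\boldsymbol{B})$ for $\boldsymbol{A},\boldsymbol{B}\succeq\boldsymbol{0}$, which follows from the same factorization-and-eigenvalue computation applied with $\boldsymbol{A}=\boldsymbol{S}_{J-1}$ and $\boldsymbol{B}=\boldsymbol{X}_J$.
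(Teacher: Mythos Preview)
Your proof is correct and follows essentially the same approach as the paper: reduce to the two-matrix case via the factorization $\det(\boldsymbol{I}+\boldsymbol{A}+\boldsymbol{B})=\det(\boldsymbol{I}+\boldsymbol{A})\det\big(\boldsymbol{I}+(\boldsymbol{I}+\boldsymbol{A})^{-1/2}\boldsymbol{B}(\boldsymbol{I}+\boldsymbol{A})^{-1/2}\big)$, and then compare eigenvalues through the Loewner order. Your write-up is in fact more explicit than the paper's on the eigenvalue step, spelling out the cospectrality of $\boldsymbol{M}\boldsymbol{X}\boldsymbol{M}$ and $\boldsymbol{X}^{1/2}\boldsymbol{M}^2\boldsymbol{X}^{1/2}$ before invoking Weyl's monotonicity, whereas the paper simply appeals to ``the Loewner order and properties of positive semidefinite matrices.''
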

          \begin{proof}
            It suffices to prove the case \( J = 2 \), i.e.,
            $\det(\boldsymbol{I} + \boldsymbol{X}_1 + \boldsymbol{X}_2) \leq \det(\boldsymbol{I} + \boldsymbol{X}_1) \det(\boldsymbol{I} + \boldsymbol{X}_2).$
            This inequality follows from the identity \vspace{-5pt}
            \[
            \det(\boldsymbol{I} + \boldsymbol{X}_1 + \boldsymbol{X}_2) = \det(\boldsymbol{I} + \boldsymbol{X}_1) \det\left( \boldsymbol{I} + (\boldsymbol{I} + \boldsymbol{X}_1)^{-\frac{1}{2}} \boldsymbol{X}_2 (\boldsymbol{I} + \boldsymbol{X}_1)^{-\frac{1}{2}} \right).\vspace{-5pt}
            \]
            and the fact that
            $
            0 \leq \lambda_i \left( (\boldsymbol{I} + \boldsymbol{X}_1)^{-\frac{1}{2}} \boldsymbol{X}_2 (\boldsymbol{I} + \boldsymbol{X}_1)^{-\frac{1}{2}} \right) \leq \lambda_i(\boldsymbol{X}_2),
            $
            where \(\lambda_i(\cdot)\) denotes the \(i\)-th largest eigenvalue, which follows from the Loewner order and properties of positive semidefinite matrices.
          \end{proof}

\vspace{-10pt}

          \begin{proof}[Proof of Proposition \ref{prop:kl_IDVSLOW}]\label{proof:kl_IDVSLOW}
            Denote the covariance matrices of the low-rank model with full local covariance and the independent model by \( \boldsymbol{C}_1 \) and \( \boldsymbol{C}_2 \), respectively. For each pair \( (i,j) \), define
           $ \boldsymbol{C}_{ij} = \boldsymbol{C}_{\boldsymbol{\theta}}(\mathcal{S}_i, \mathcal{S}_j), 
            \widetilde{\boldsymbol{C}}_{ij} = \boldsymbol{C}_{\boldsymbol{\theta}}(\mathcal{S}_i, \mathcal{S}^*) \boldsymbol{C}_{\boldsymbol{\theta}}^{-1}(\mathcal{S}^*, \mathcal{S}^*) \boldsymbol{C}_{\boldsymbol{\theta}}(\mathcal{S}^*, \mathcal{S}_j)$,
            where \( \mathcal{S}_i \) is the set of spatial locations in the \( i \)-th worker, and \( \mathcal{S}^* \) is the set of knots. Then the block structures of the covariance matrices are
            \[
            (\boldsymbol{C}_1)_{ij} =
            \begin{cases}
              \boldsymbol{C}_{ii}, & \text{if } i = j, \\
              \widetilde{\boldsymbol{C}}_{ij}, & \text{if } i \neq j,
            \end{cases}
            \quad
            (\boldsymbol{C}_2)_{ij} =
            \begin{cases}
              \boldsymbol{C}_{ii}, & \text{if } i = j, \\
              \boldsymbol{0}, & \text{if } i \neq j.
            \end{cases}
            \]
            The original covariance matrix \( \boldsymbol{C} \) has entries \( \boldsymbol{C}_{ij} = \boldsymbol{C}_{\boldsymbol{\theta}}(\mathcal{S}_i, \mathcal{S}_j) \).

            The KL divergences from the original \( P \) to the low-rank models \( P_1 \) and \( P_2 \) are
            \[
            \mathrm{KL}(P \,\|\, P_1) = \frac{1}{2} \left( \mathrm{tr}(\boldsymbol{C}_1^{-1} \boldsymbol{C}) - N + \log \frac{\det \boldsymbol{C}_1}{\det \boldsymbol{C}} \right),
            \quad
            \mathrm{KL}(P \,\|\, P_2) = \frac{1}{2} \left( \mathrm{tr}(\boldsymbol{C}_2^{-1} \boldsymbol{C}) - N + \log \frac{\det \boldsymbol{C}_2}{\det \boldsymbol{C}} \right).
            \]
            To show that \( \mathrm{KL}(P \,\|\, P_1) \leq \mathrm{K}(P \,\|\, P_2) + m \), it suffices to prove the following:
            \[
            \mathrm{tr}(\boldsymbol{C}_1^{-1} \boldsymbol{C}) \leq \mathrm{tr}(\boldsymbol{C}_2^{-1} \boldsymbol{C}) + m,
            \quad
            \det \boldsymbol{C}_1 \leq \det \boldsymbol{C}_2.
            \]
            In the following, we let \( \boldsymbol{R}_i = \boldsymbol{C}_{ii} - \widetilde{\boldsymbol{C}}_{ii} \), \( \boldsymbol{U}_i = \boldsymbol{C}_{\boldsymbol{\theta}}(\mathcal{S}_i, \mathcal{S}^*) \in \mathbb{R}^{n_i \times m} \), and \( \boldsymbol{C}^* = \boldsymbol{C}_{\boldsymbol{\theta}}(\mathcal{S}^*, \mathcal{S}^*) \). Define
            \[
            \boldsymbol{R} = \mathrm{diag}(\boldsymbol{R}_1, \ldots, \boldsymbol{R}_J) \in \mathbb{R}^{N \times N},
            \quad
            \boldsymbol{U} = \begin{bmatrix}
              \boldsymbol{U}_1^\top & \cdots & \boldsymbol{U}_J^\top
            \end{bmatrix}^\top \in \mathbb{R}^{N \times m}.
            \]
            Then, we can express the full local covariance matrix and the local covariance matrix of the \( i \)-th worker as
            \[
            \boldsymbol{C}_1 = \boldsymbol{R} + \boldsymbol{U} \boldsymbol{C}^{*-1} \boldsymbol{U}^\top,
            \quad
            \boldsymbol{C}_{ii} = \boldsymbol{R}_i + \boldsymbol{U}_i \boldsymbol{C}^{*-1} \boldsymbol{U}_i^\top.
            \]

            First, we show that the trace inequality holds.
            Note that \( \mathrm{tr}(\boldsymbol{C}_2^{-1} \boldsymbol{C}) = N = \mathrm{tr}(\boldsymbol{C}_1^{-1} \boldsymbol{C}_1) \), and hence:
            \[
            \mathrm{tr}(\boldsymbol{C}_1^{-1} \boldsymbol{C}) - \mathrm{tr}(\boldsymbol{C}_2^{-1} \boldsymbol{C})
            = \mathrm{tr}(\boldsymbol{C}_1^{-1} (\boldsymbol{C} - \boldsymbol{C}_1)).
            \]
            Using the Woodbury matrix identity:
            \[
            \vspace{-10pt}
            \boldsymbol{C}_1^{-1} = \boldsymbol{R}^{-1} - \boldsymbol{R}^{-1} \boldsymbol{U} \left( \boldsymbol{C}^* + \boldsymbol{U}^\top \boldsymbol{R}^{-1} \boldsymbol{U} \right)^{-1} \boldsymbol{U}^\top \boldsymbol{R}^{-1},
            \vspace{-10pt}
            \]
            we obtain:
            \begin{align*}
              \mathrm{tr}(\boldsymbol{C}_1^{-1} (\boldsymbol{C} - \boldsymbol{C}_1))
               & = \mathrm{tr} \left( \boldsymbol{U}^\top \boldsymbol{R}^{-1} (\boldsymbol{C}_1 - \boldsymbol{C}) \boldsymbol{R}^{-1} \boldsymbol{U} \left( \boldsymbol{C}^* + \boldsymbol{U}^\top \boldsymbol{R}^{-1} \boldsymbol{U} \right)^{-1} \right) \\
               & = \mathrm{tr} \left( \boldsymbol{U}^\top \boldsymbol{R}^{-1} (\boldsymbol{U} \boldsymbol{C}^{*-1} \boldsymbol{U}^\top - \boldsymbol{C}) \boldsymbol{R}^{-1} \boldsymbol{U} \left( \boldsymbol{C}^* + \boldsymbol{U}^\top \boldsymbol{R}^{-1} \boldsymbol{U} \right)^{-1} \right) \\
               & \quad + \mathrm{tr} \left( \boldsymbol{U}^\top \boldsymbol{R}^{-1} \boldsymbol{U} \left( \boldsymbol{C}^* + \boldsymbol{U}^\top \boldsymbol{R}^{-1} \boldsymbol{U} \right)^{-1} \right).
            \end{align*}
            Since \( \boldsymbol{C} \succeq \boldsymbol{U} \boldsymbol{C}^{*-1} \boldsymbol{U}^\top \), the first term is non-positive, and we have:
            \[
            \mathrm{tr}(\boldsymbol{C}_1^{-1} \boldsymbol{C}) - \mathrm{tr}(\boldsymbol{C}_2^{-1} \boldsymbol{C}) \leq \mathrm{tr} \left( \boldsymbol{U}^\top \boldsymbol{R}^{-1} \boldsymbol{U} \left( \boldsymbol{C}^* + \boldsymbol{U}^\top \boldsymbol{R}^{-1} \boldsymbol{U} \right)^{-1} \right).
            \]
            Let \( \boldsymbol{A} = \boldsymbol{C}^{*-1/2} \boldsymbol{U}^\top \boldsymbol{R}^{-1} \boldsymbol{U} \boldsymbol{C}^{*-1/2} \). Then the above becomes
            $
            \mathrm{tr}(\boldsymbol{A} (\boldsymbol{I} + \boldsymbol{A})^{-1}) \leq m,
            $, 
            as \(\boldsymbol{A} (\boldsymbol{I} + \boldsymbol{A})^{-1}\) has eigenvalues in \( [0,1) \).

            Second, we show the determinant inequality holds using the matrix determinant lemma:
            \[
            \vspace{-10pt}
            \det(\boldsymbol{C}_1) = \det(\boldsymbol{R}) \cdot \det \left( \boldsymbol{I} + \boldsymbol{C}^{*-1/2} \boldsymbol{U}^\top \boldsymbol{R}^{-1} \boldsymbol{U} \boldsymbol{C}^{*-1/2} \right),
            \]
            \[
            \det(\boldsymbol{C}_{ii}) = \det(\boldsymbol{R}_i) \cdot \det \left( \boldsymbol{I} + \boldsymbol{C}^{*-1/2} \boldsymbol{U}_i^\top \boldsymbol{R}_i^{-1} \boldsymbol{U}_i \boldsymbol{C}^{*-1/2} \right).
            \vspace{-10pt}
            \]
            Therefore,  \vspace{-10pt}
            \[
            \det(\boldsymbol{C}_2) = \prod_{i=1}^J \det(\boldsymbol{C}_{ii}) = \det(\boldsymbol{R}) \cdot \prod_{i=1}^J \det \left( \boldsymbol{I} + \boldsymbol{C}^{*-1/2} \boldsymbol{U}_i^\top \boldsymbol{R}_i^{-1} \boldsymbol{U}_i \boldsymbol{C}^{*-1/2} \right).
            \]
            \vspace{-10pt}
            By Lemma \ref{lem:det_sum}, we have
            \[
            \det \left( \boldsymbol{I} + \sum_{i=1}^J \boldsymbol{C}^{*-1/2} \boldsymbol{U}_i^\top \boldsymbol{R}_i^{-1} \boldsymbol{U}_i \boldsymbol{C}^{*-1/2} \right)
            \leq \prod_{i=1}^J \det \left( \boldsymbol{I} + \boldsymbol{C}^{*-1/2} \boldsymbol{U}_i^\top \boldsymbol{R}_i^{-1} \boldsymbol{U}_i \boldsymbol{C}^{*-1/2} \right),
            \]
            which implies \( \det(\boldsymbol{C}_1) \leq \det(\boldsymbol{C}_2) \).
          \end{proof}

      \subsubsection{Proof of results in Section \ref{sec:theory}}\label{sec:proof_results_sec_theory}

          Since each $f_j$ is $L_j$-smooth and $h$ is $L_h$-smooth, the aggregate
          $
          f := \frac{1}{J}\sum_{j=1}^J f_j + h
          $
          is $L$-smooth with
          $
          L \;\le\; \frac{1}{J}\sum_{j=1}^J L_j \;+\; L_h.
          $ Likewise, if each $\nabla^2 f_j$ is $H_j$-Lipschitz and $\nabla^2 h$ is $H_h$-Lipschitz, then $\nabla^2 f$ is $H$-Lipschitz with
          $
          H \;\le\; \frac{1}{J}\sum_{j=1}^J H_j \;+\; H_h.
          $

          For notational convenience, write $\nabla f := \partial f/\partial \boldsymbol{x}$ and
          $\nabla^2 f := \partial^2 f/\partial \boldsymbol{x}^2$, with the block gradients/Hessians
          \vspace{-10pt}
          \[
          \nabla_1 f := \frac{\partial f}{\partial \boldsymbol{x}_1},\quad
          \nabla_2 f := \frac{\partial f}{\partial \boldsymbol{x}_2},\quad
          \nabla_{11} f := \frac{\partial^2 f}{\partial \boldsymbol{x}_1^2},\quad
          \nabla_{22} f := \frac{\partial^2 f}{\partial \boldsymbol{x}_2^2},
          \vspace{-10pt}
          \]
          and analogously for each $f_j$ and for $h$. For any positive definite matrix $\boldsymbol{H}$ and vectors
          $\boldsymbol{a},\boldsymbol{b}$, denote the $\boldsymbol{H}$-inner product by
          $\langle \boldsymbol{a},\boldsymbol{b}\rangle_{\boldsymbol{H}} := \boldsymbol{a}^\top \boldsymbol{H}\boldsymbol{b}$.

          \begin{proof}[Proof of Proposition \ref{prop:Lyapunov_inequality}]
            Let
           $ \widehat{\boldsymbol{x}}_1^{\,t+1} = \arg\min_{\boldsymbol{x}_1} f(\boldsymbol{x}_1, \boldsymbol{x}_2^{\,t})$,
            and decompose the one-step decrease in function value as
            \begin{align*}
               & f(\boldsymbol{x}_1^t, \boldsymbol{x}_2^t) - f(\boldsymbol{x}_1^{\,t+1}, \boldsymbol{x}_2^{\,t+1}) \\
              = & \underbrace{f(\boldsymbol{x}_1^t, \boldsymbol{x}_2^t) - f(\widehat{\boldsymbol{x}}_1^{t+1}, \boldsymbol{x}_2^{t})}_{\text{First term}}
              + \underbrace{f(\widehat{\boldsymbol{x}}_1^{t+1}, \boldsymbol{x}_2^{t}) - f({\boldsymbol{x}}_1^{t+1}, \boldsymbol{x}_2^{t})}_{\text{Second term}}  + \underbrace{f(\boldsymbol{x}_1^{t+1}, \boldsymbol{x}_2^{t}) - f(\boldsymbol{x}_1^{t+1}, \boldsymbol{x}_2^{t+1})}_{\text{Third term}}.
            \end{align*}
            We bound each term separately.

            \textbf{First Term Bound.}
            By the definition of $\widehat{\boldsymbol{x}}_1^{t+1}$ and Lemma \ref{lem:gradient_gap_inequality_lower}, we have \vspace{-10pt}
            \begin{equation}\label{eq:first_term_bound}
              f(\boldsymbol{x}_1^t, \boldsymbol{x}_2^t) - f(\widehat{\boldsymbol{x}}_1^{t+1}, \boldsymbol{x}_2^{t}) \geq  \frac{1}{2\kappa L}\|\nabla_1 f(\boldsymbol{x}_1^{t}, \boldsymbol{x}_2^{t})\|^2.
              \vspace{-10pt}
            \end{equation}
            where we define the condition number $\kappa := \frac{L}{\mu}$.

            \textbf{Second Term Bound.} By Lemma \ref{lem:block_descent_inequality_upper} and the optimal condition: $\nabla_1f (\widehat{\boldsymbol{x}}_1^{t + 1}, \boldsymbol{x}_2^t)=\boldsymbol{0}$ ,\vspace{-10pt}
            \[ f (\widehat{\boldsymbol{x}}_1^{t + 1}, \boldsymbol{x}_2^t) - f
               (\boldsymbol{x}_1^{t + 1}, \boldsymbol{x}_2^t) \geqslant - \frac{L }{2} \|
               \widehat{\boldsymbol{x}}_1^{t + 1} - \boldsymbol{x}_1^{t + 1} \|^2 \vspace{-10pt}\]
            Since \vspace{-10pt}
            \begin{equation*}
                \begin{gathered}
                \hat{\boldsymbol{x}}_1^{t+1}=\arg \min _{\boldsymbol{x}_1} f\left(\boldsymbol{x}_1, \boldsymbol{x}_2^t\right), \\
                \boldsymbol{x}_1^{t+1}=\arg \min _{\boldsymbol{x}_1} J^{-1} \sum_{j=1}^J\left[f_j\left(\boldsymbol{x}_1, \boldsymbol{x}_2^{t-\tau_{t, j}^1}\right)\right]+h\left(\boldsymbol{x}_1, \boldsymbol{x}_2^t\right),
                \end{gathered}\vspace{-10pt}
            \end{equation*}
            Lemma \ref{lem:block_minimizer_sensitivity_2} yields \vspace{-10pt}
            \[ \| \widehat{\boldsymbol{x}}_1^{t + 1} - \boldsymbol{x}_1^{t + 1} \| \leqslant
               \mu^{- 1} \frac{1}{J} \sum_{j = 1}^J L_j \| \boldsymbol{x}_2^t -
               \boldsymbol{x}_2^{t - \tau_{t, j}^1} \|\vspace{-10pt} \]
            Thus, with  ${L^{\mathrm{msq}}} := \frac{1}{J} \sum_{j=1}^J L_j^2$,\vspace{-10pt}
            $$
            f\left(\hat{\boldsymbol{x}}_1^{t+1}, \boldsymbol{x}_2^{t+1}\right)-f\left(\boldsymbol{x}_1^{t+1}, \boldsymbol{x}_2^{t+1}\right) \geqslant-\frac{1}{2} \mu^{-2} L L^{\mathrm{msq}} \frac{1}{J} \sum_j\left\|\boldsymbol{x}_2^t-\boldsymbol{x}_2^{t-\tau_{t, j}^1}\right\|^2.\vspace{-10pt}
            $$

            To control $\frac{1}{J} \sum_j \| \boldsymbol{x}_2^t - \boldsymbol{x}_2^{t -
            \tau_{t, j}^1} \|^2$, note that
        \begin{equation}\label{eq:iteration_difference_second_term}
                \begin{aligned}
                & \frac{1}{J} \sum_j\left\|\boldsymbol{x}_2^t-\boldsymbol{x}_2^{t-\tau_{t, j}^1}\right\|^2 
                \stackrel{(a)}{\leqslant}  \frac{1}{J} \sum_j \tau_{t, j}^1 \sum_{s=t-\tau}^{t-1}\left\|\boldsymbol{x}_2^{s+1}-\boldsymbol{x}_2^s\right\|^2 \\
                \stackrel{(b)}{\leqslant} & \alpha^2 \overline{\tau} \tau \underline{\lambda}^{-1} \sum_{s=t-\tau}^{t-1}\left\|J^{-1} \sum_{j=1}^J \nabla_2 f_j\left(\boldsymbol{x}_1^{s+1-\tau_{s, j}^{21}}, \boldsymbol{x}_2^{s-\tau_{s, j}^{22}}\right)+\nabla_2 h\left(\boldsymbol{x}_1^{s+1}, \boldsymbol{x}_2^s\right)\right\|^2 \\
                \stackrel{(c)}{\leqslant} & 2 \alpha^2 \overline{\tau} \underline{\lambda}^{-1} \sum_{s=t-\tau}^{t-1}\left(J^{-1} \sum_{j=1}^J L_j\left(\sqrt{\left\|\boldsymbol{x}_1^{s+1-\tau_{s, j}^{21}}-\boldsymbol{x}_1^*\right\|^2+\left\|\boldsymbol{x}_2^{s+1-\tau_{s, j}^{22}}-\boldsymbol{x}_2^*\right\|^2}\right)\right)^2 \\
                & +2 \alpha^2 \overline{\tau} \underline{\lambda}^{-1} L_h \sum_{s=t-\tau}^{t-1}\left(\left\|\boldsymbol{x}_1^{s+1}-\boldsymbol{x}_1^*\right\|^2+\left\|\boldsymbol{x}_2^s-\boldsymbol{x}_2^*\right\|^2\right) \\
                \stackrel{(d)}{\leqslant} & 4 \alpha^2 \tau \overline{\tau} \underline{\lambda}^{-1}\left({L^{\mathrm{msq}}}+L_h^2\right) \max _{s \in[t-2 \tau, t]}\left\{\left\|\boldsymbol{x}_1^s-\boldsymbol{x}_1^*\right\|^2+\left\|\boldsymbol{x}_2^s-\boldsymbol{x}_2^*\right\|^2\right\} \\
                \stackrel{(e)}{\leqslant} & 4 \alpha^2 \overline{\tau} \tau \mu^{-1}\underline{\lambda}^{-1}\left({L^{\mathrm{msq}}}+L_h^2\right) \max _{s \in[t-2 \tau, t]}\left[f\left(\boldsymbol{x}_1^s, \boldsymbol{x}_2^s\right)-\widehat{f}\right]
                \end{aligned}
            \end{equation}
            where $\widehat{f}$  is the minimal value and  $ \frac{1}{J} \sum_{j=1}^J \tau_{t, j}^1\le \overline{\tau}$ for any $t$.
            In the above Equation \eqref{eq:iteration_difference_second_term}, the inequalities $(a)$ follows from the Cauchy--Schwarz inequality, $(b)$ follows from that $mod(H) \succeq \underline{\lambda}^{-1}\boldsymbol{I}$, $(c)$ follows from the Lipschitz continuity of $f_j$ and $h$, $(d)$ follows from the Cauchy--Schwarz inequality and $(e)$ follows the strong convexity of $f$.

            Defining $B:=2 \underline{\lambda}^{-1} \mu^{-3} L L^{\mathrm{msq}}\left({L^{\mathrm{msq}}}+L_h^2\right)$, we obtain
            \begin{equation}\label{eq:first_term_bound_final}
            f\left(\hat{\boldsymbol{x}}_1^{t+1}, \boldsymbol{x}_2^{t+1}\right)-f\left(\boldsymbol{x}_1^{t+1}, \boldsymbol{x}_2^{t+1}\right) \geqslant-\alpha^2 \tau \overline{\tau}^1 B \max _{s \in[t-2 \tau, t]}\left[f\left(\boldsymbol{x}_1^s, \boldsymbol{x}_2^s\right)-\widehat{f}\right]
            \end{equation}

            \textbf{Third Term Bound.}
            For notational convenience, we define
            \[
            \boldsymbol{H}_2^t:=\boldsymbol{H}_2^{\{t, 0, 0\}}, \quad \boldsymbol{G}_2^t := \boldsymbol{G}_2^{\{t, 0, 0\}}, \quad \boldsymbol{a} := [\boldsymbol{H}_2^t]^{-1} \boldsymbol{G}_2^t,
            \quad
            \boldsymbol{b} := (\boldsymbol{H}_2^{\{t, \tau_{t,j}^{21}, \tau_{t,j}^{22}\}})^{-1}
            \boldsymbol{G}_2^{\{t, \tau_{t,j}^{21}, \tau_{t,j}^{22}\}}.
            \]
            The third term is lower bounded according to either Lemma \ref{lem:block_descent_inequality_upper}  or Lemma \ref{lem:block_descent_inequality_Hessian}.
            
            \textbf{Case 1:} When the inequality is from Lemma \ref{lem:block_descent_inequality_upper}, we have \vspace{-10pt}
            \begin{equation}\label{eq:third_term_bound_1}
                f\left(\boldsymbol{x}_1^t, \boldsymbol{x}_2^t\right)-f\left(\boldsymbol{x}_1^t, \boldsymbol{x}_2^{t+1}\right) \geqslant \alpha\left\langle\boldsymbol{G}_2^t, \boldsymbol{b}\right\rangle-\alpha^2 \frac{L}{2}\|\boldsymbol{b}\|^2 \vspace{-5pt}
            \end{equation}
Let \vspace{-5pt}
$$\tilde{\boldsymbol{H}}_2:=\boldsymbol{H}_2^{\left\{t, \tau_{t, j}^{21}, \tau_{t, j}^{22}\right\}} \quad \tilde{\boldsymbol{G}}_2:=\boldsymbol{G}_2^{\left\{t, \tau_{t, j}^{21} \tau_{t, j}^{22}\right\}}.\vspace{-5pt} $$ 
then we have  $$\left\langle\boldsymbol{G}_2^t, \boldsymbol{b}\right\rangle=\left\langle\boldsymbol{G}_2^t, \tilde{\boldsymbol{G}}_2\right\rangle_{\left(\tilde{\boldsymbol{H}}_2\right)^{-1 }} \quad \|\boldsymbol{b}\|^2 \leqslant \underline{\lambda}^{-1}\left\|\tilde{\boldsymbol{G}}_2\right\|_{\left(\tilde{\boldsymbol{H}}_2\right)^{-1 }},$$ we have, if $\alpha \leqslant L^{-1} \underline{\lambda}$,
where the later inequality is due to $\underline{\lambda} \boldsymbol{I} \preceq \tilde{\boldsymbol{H}}_2$. Thus,
\begin{equation}\label{eq:third_term_bound_2}
    \alpha\left\langle\boldsymbol{G}_2^t, \boldsymbol{b}\right\rangle-\alpha^2 \frac{L}{2}\|\boldsymbol{b}\|^2 \geqslant \frac{\alpha}{2}\left(\left\|\boldsymbol{G}_2^t\right\|_{\tilde{\boldsymbol{H}}_2^{-1}}^2-\left\|\boldsymbol{G}_2^t-\tilde{\boldsymbol{G}}_2\right\|_{\tilde{\boldsymbol{H}}_2^{-1}}^2\right) \geqslant \frac{\alpha}{2} \overline{\lambda}^{-1}\left\|\boldsymbol{G}_2^t\right\|^2-\frac{\alpha}{2}(\underline{\lambda})^{-1}\left\|\boldsymbol{G}_2^t-\tilde{\boldsymbol{G}}_2\right\|^2 .
\end{equation}
where the later inequality is due to $\underline{\lambda} \boldsymbol{I} \preceq \tilde{\boldsymbol{H}}_2 \preceq \overline{\lambda} \boldsymbol{I}$.

By the smoothness of each $f_j$, we have \vspace{-10pt}
\begin{equation}\label{eq:gradient_difference_bound}
\left\|\boldsymbol{G}_2^t-\tilde{\boldsymbol{G}}_2\right\|^2 \leqslant L^{\mathrm{msq}}\left(J^{-1} \sum_{j=1}^J\left\|\boldsymbol{x}_1^{t+1}-\boldsymbol{x}_1^{t+1-\tau_{t, j}^{21}}\right\|^2+J^{-1} \sum_{j=1}^J\left\|\boldsymbol{x}_2^t-\boldsymbol{x}_2^{t-\tau_{t, j}^{22}}\right\|^2\right) \vspace{-10pt}
\end{equation}
and it is sufficient to bound the two terms on the right-hand side.

According to the proof of Equation \eqref{eq:iteration_difference_second_term}, the second term in the right hand side of Equation \eqref{eq:gradient_difference_bound} is bounded by \vspace{-5pt}
\begin{equation}\label{eq:gradient_difference_bound_x2}
\frac{1}{J} \sum_j\left\|\boldsymbol{x}_2^t-\boldsymbol{x}_2^{t-\tau_{t, j}^1}\right\|^2 \leqslant 4 \alpha^2 \tau \overline{\tau}^{-1} \mu^{-1} \underline{\lambda}^{-1}\left(L^{\mathrm{msq}}+L_h^2\right) \max _{s \in[t-2 \tau, t]}\left[f\left(\boldsymbol{x}_1^s, \boldsymbol{x}_2^s\right)-\widehat{f}\right] \vspace{-5pt}
\end{equation}

As for the first term in the right-hand side of Equation \eqref{eq:gradient_difference_bound}. Let $t_j:=t-\tau_{t, j}^{21}$, recall that \vspace{-5pt}
$$
\begin{gathered}
\boldsymbol{x}_1^{t+1}=\arg \min _{\boldsymbol{x}_1} J^{-1} \sum_{l=1}^J f_j\left(\boldsymbol{x}_1, \boldsymbol{x}_2^{t-\tau_{t, l}^1}\right)+h\left(\boldsymbol{x}_1, \boldsymbol{x}_2^{t}\right), \\
\boldsymbol{x}_1^{t+1-\tau_{t, j}^{21}}=\arg \min _{\boldsymbol{x}_1} J^{-1} \sum_{l=1}^J f_l\left(\boldsymbol{x}_1, \boldsymbol{x}_2^{t_j-\tau_{t_j, l}^1}\right)+h\left(\boldsymbol{x}_1, \boldsymbol{x}_2^{t_j}\right),
\end{gathered} \vspace{-5pt}
$$
Lemma \ref{lem:block_minimizer_sensitivity_2} and Young's inequality yield
\begin{equation}\label{eq:iteration_difference_first_term_bound1}
    \begin{aligned}
        \left\|\boldsymbol{x}_1^{t+1}-\boldsymbol{x}_1^{t+1-\tau_{t, i}^{21}}\right\|^2 \leqslant & 3 \mu^{-2}\left(\frac{1}{J} \sum_{l=1}^J L_l\left\|\boldsymbol{x}_2^{t_j}-\boldsymbol{x}_2^{t_j-\tau_{t_j, l}^1}\right\|\right)^2 \\
        & +3 \mu^{-2}\left(\frac{1}{J} \sum_{l=1}^J L_l\left\|\boldsymbol{x}_2^{t}-\boldsymbol{x}_2^{t-\tau_{t, l}^1}\right\|\right)^2+3 \mu^{-2}\left(L_h+\overline{L}\right)^2\left\|\boldsymbol{x}_2^{t}-\boldsymbol{x}_2^{t_j}\right\|^2
        \end{aligned}
\end{equation}
For the three terms in the right-hand side of Equation \eqref{eq:iteration_difference_first_term_bound1}, by the triangle and Cauchy-Swarchtz inequality (Recall that $L^{\mathrm{msq}}=\frac{1}{J} \sum_{l=1}^J L_l^2$.), we have
$$
\begin{gathered}
    \left(\frac{1}{J} \sum_{l=1}^J L_l\left\|\boldsymbol{x}_2^{t_j}-\boldsymbol{x}_2^{t_j-\tau_{t_j-1, l}^1}\right\|\right)^2 \leqslant L^{\mathrm{msq}} \overline{\tau} \sum_{s=t-2 \tau}^{t-1}\left\|\boldsymbol{x}_2^{s+1}-\boldsymbol{x}_2^s\right\|^2 \\
    \left(\frac{1}{J} \sum_{l=1}^J L_l\left\|\boldsymbol{x}_2^{t}-\boldsymbol{x}_2^{t-\tau_{t, l}^1}\right\|\right) \leqslant L^{\mathrm{msq}} \overline{\tau} \sum_{s=t-2 \tau}^{t-1}\left\|\boldsymbol{x}_2^{s+1}-\boldsymbol{x}_2^s\right\|^2 \\
    \left\|\boldsymbol{x}_2^{t}-\boldsymbol{x}_2^{t_j}\right\|^2 \leqslant \tau_{t, j}^{21} \sum_{s=t-\tau}^{t-1}\left\|\boldsymbol{x}_2^{s+1}-\boldsymbol{x}_2^s\right\|^2
    \end{gathered}
$$
Thus, \vspace{-5pt}
$$
\frac{1}{J} \sum_j\left\|\boldsymbol{x}_1^{t+1}-\boldsymbol{x}_1^{t+1-\tau_{t, j}^{21}}\right\|^2 \leqslant 3 \overline{\tau} \mu^{-2}\left[2 L^{\mathrm{msq}}+\left(L_h+\overline{L}\right)^2\right] \sum_{s=t-2 \tau}^{t-1}\left\|\boldsymbol{x}_2^{s+1}-\boldsymbol{x}_2^s\right\|^2 \vspace{-5pt}
$$
According to the proof of Equation \eqref{eq:iteration_difference_second_term},
$$
\sum_{s=t-2 \tau}^{t-1}\left\|\boldsymbol{x}_2^{s+1}-\boldsymbol{x}_2^s\right\|^2 \leqslant 8 \eta^2 \tau \mu^{-1} \underline{\lambda}^{-1}\left(L^{\mathrm{msq}}+L_h^2\right) \max_{s \in[t-3 \tau, t]}\left[f\left(\boldsymbol{x}_1^s, \boldsymbol{x}_2^s\right)-\hat{f}\right]
$$
Thus, \vspace{-5pt}
\begin{equation}\label{eq:gradient_difference_bound_x1}
    \begin{aligned}
        & \frac{1}{J} \sum_j\left\|\boldsymbol{x}_1^t-\boldsymbol{x}_1^{t-\tau_{t, j}^{21}}\right\|^2 \\
        & \leqslant 24 \alpha^2 \tau \overline{\tau} \mu^{-3} \underline{\lambda}^{-1}\left(L^{\mathrm{msq}}+L_h^2\right)\left[2 L^{\mathrm{msq}}+\left(L_h+\overline{L}\right)^2\right]\max_{s \in[t-3 \tau, t]}\left[f\left(\boldsymbol{x}_1^s, \boldsymbol{x}_2^s\right)-\hat{f}\right]
        \end{aligned} \vspace{-5pt}
\end{equation}
Therefore, by combining Equations \eqref{eq:gradient_difference_bound}, \eqref{eq:gradient_difference_bound_x2} and   \eqref{eq:gradient_difference_bound_x1}, 
\begin{equation}\label{eq:gradient_difference_bound_final}
    \begin{aligned}
        & \left\|\boldsymbol{G}_2^t-\tilde{\boldsymbol{G}}_2\right\|^2 \\
        & \leqslant 4 \alpha^2 \tau \overline{\tau} \mu^{-1} \underline{\lambda}^{-1} L^{\mathrm{msq}}\left(L^{\mathrm{msq}}+L_h^2\right)\left[1+6 \mu^{-2}\left[2 L^{\mathrm{msq}}+\left(L_h+\overline{L}\right)^2\right]\right] \max _{s \in[t-3 \tau, t]}\left[f\left(\boldsymbol{x}_1^s, \boldsymbol{x}_2^s\right)-\hat{f}\right]
        \end{aligned}
\end{equation}

We then bound $\left\|\boldsymbol{G}_2^t\right\|^2$ as follows. By the Cauchy-Schwarz inequality and Young's inequality,
$$
\left\|\nabla_2 f\left(\boldsymbol{x}_1^t, \boldsymbol{x}_2^t\right)\right\|^2 \leqslant 3\left\|\nabla_2 f\left(\boldsymbol{x}_1^t, \boldsymbol{x}_2^t\right)-\nabla_2 f\left(\hat{\boldsymbol{x}}_1^{t+1}, \boldsymbol{x}_2^t\right)\right\|^2+3\left\|\nabla_2 f\left(\hat{\boldsymbol{x}}_1^{t+1}, \boldsymbol{x}_2^t\right)-\boldsymbol{G}_2^t\right\|^2+3\left\|\boldsymbol{G}_2^t\right\|^2
$$
The smoothness and strong convexity of $f$ yields
$$
\begin{gathered}
\left\|\nabla_2 f\left(\boldsymbol{x}_1^t, \boldsymbol{x}_2^t\right)-\nabla_2 f\left(\hat{\boldsymbol{x}}_1^{t+1}, \boldsymbol{x}_2^t\right)\right\|^2 \leqslant L^2\left\|\boldsymbol{x}_1^t-\hat{\boldsymbol{x}}_1^{t+1}\right\|^2 \leqslant \kappa^2\left\|\nabla_1 f\left(\boldsymbol{x}_1^t, \boldsymbol{x}_2^t\right)\right\| \\
\left\|\nabla_2 f\left(\hat{\boldsymbol{x}}_1^{t+1}, \boldsymbol{x}_2^t\right)-\boldsymbol{G}_2^t\right\|^2 \leqslant L\left\|\hat{\boldsymbol{x}}_1^{t+1}-\boldsymbol{x}_1^{t+1}\right\|^2
\end{gathered}
$$
Recall that we have proved that (refer to Equation \eqref{eq:first_term_bound_final})
$$
\frac{L}{2}\left\|\hat{\boldsymbol{x}}_1^{t+1}-\boldsymbol{x}_1^{t+1}\right\|^2 \leqslant \alpha^2 \tau \overline{\tau}^1 B \max _{s \in[t-2 \tau, t]}\left[f\left(\boldsymbol{x}_1^s, \boldsymbol{x}_2^s\right)-\widehat{f}\right]
$$
Therefore,
$$
\left\|\boldsymbol{G}_2^t\right\|^2 \geqslant \frac{1}{3}\left\|\nabla_2 f\left(\boldsymbol{x}_1^t, \boldsymbol{x}_2^t\right)\right\|^2-3 \kappa^2\left\|\nabla_1 f\left(\boldsymbol{x}_1^t, \boldsymbol{x}_2^t\right)\right\|-2 \alpha^2 \tau \overline{\tau}^1 B \max _{s \in[t-2 \tau, t]}\left[f\left(\boldsymbol{x}_1^s, \boldsymbol{x}_2^s\right)-\widehat{f}\right]
$$
Applying Lemma \ref{lem:linear_combination_lower_bound} yields
\begin{equation}\label{eq:combining_bound}
    \begin{aligned}
        & \frac{1}{2} \alpha \overline{\lambda}^{-1}\left\|\boldsymbol{G}_2^t\right\|^2+\frac{1}{2 \kappa L}\left\|\nabla_1 f\left(\boldsymbol{x}_1^t, \boldsymbol{x}_2^t\right)\right\|^2 \geqslant A(\alpha) \alpha\left\|\nabla f\left(\boldsymbol{x}_1^t, \boldsymbol{x}_2^t\right)\right\|^2-\alpha^3 \tau \overline{\tau}^1 B \overline{\lambda}^{-1} \max _{s \in[t-2 \tau, t]}\left[f\left(\boldsymbol{x}_1^s, \boldsymbol{x}_2^s\right)-\right. \\
        & \left.\widehat{f}\right]
        \end{aligned}        
\end{equation}
where $A(\alpha)$ defined as
$$
A(\alpha):= \begin{cases}{\left[\kappa L\left(18 \kappa^2+1\right)\right]^{-1} \alpha^{-1},} & \text { if } \alpha \geqslant \overline{\lambda}\left(6 L \kappa^3\right)^{-1} \\ \frac{1}{6} \overline{\lambda}^{-1}, & \text { otherwise }\end{cases}
$$
which is bounded below by $A:=\min \left\{\left[\kappa L\left(18 \kappa^2+1\right)\right]^{-1}, \frac{1}{6} \overline{\lambda}^{-1}\right\}
$

Moreover, Lemma \ref{lem:gradient_gap_inequality_upper} gives
\begin{equation}\label{eq:gradient_gap_inequality_upper_1}
    \left\|\nabla f\left(\boldsymbol{x}_1^t, \boldsymbol{x}_2^t\right)\right\|^2 \geqslant 2 \mu\left(f\left(\boldsymbol{x}_1^t, \boldsymbol{x}_2^t\right)-\hat{f}\right) .
\end{equation}

\textbf{Summarizing} all the related bounds, i.e., Equations \eqref{eq:first_term_bound}, \eqref{eq:first_term_bound_final}, \eqref{eq:third_term_bound_1}, \eqref{eq:third_term_bound_2}, \eqref{eq:gradient_difference_bound_final}, \eqref{eq:combining_bound}, \eqref{eq:gradient_gap_inequality_upper_1}, gives: if $\alpha \leqslant L^{-1} \underline{\lambda}$, then
$$
\begin{aligned}
& f\left(\boldsymbol{x}_1^t, \boldsymbol{x}_2^t\right)-f\left(\boldsymbol{x}_1^{t+1}, \boldsymbol{x}_2^{t+1}\right) \\
\geqslant & \alpha C_0\left(f\left(\boldsymbol{x}_1^t, \boldsymbol{x}_2^t\right)-\hat{f}\right)-\alpha^3 \tau \overline{\tau} C_1 \max _{s \in[t-3 \tau, t]}\left[f\left(\boldsymbol{x}_1^s, \boldsymbol{x}_2^s\right)-\hat{f}\right]-\alpha^2 \tau \overline{\tau} C_2 \max _{s \in[t-2 \tau, t]}\left[f\left(\boldsymbol{x}_1^s, \boldsymbol{x}_2^s\right)-\hat{f}\right]
\end{aligned}
$$
where $C_0, C_1, C_2$ are constants defined as follows:
\begin{equation}\label{eq:constants1}
    \begin{gathered}
        C_0:=\mu \min \left\{2\left[\kappa L\left(18 \kappa^2+1\right)\right]^{-1}, \frac{1}{3} \overline{\lambda}^{-1}\right\} \\
        C_1:=2 \mu^{-1} \underline{\lambda}^{-2} L^{\mathrm{msq}}\left(L^{\mathrm{msq}}+L_h^2\right)\left[1+6 \mu^{-2}\left[2 L^{\mathrm{msq}}+\left(L_h+\overline{L}\right)^2\right]\right] \\
        C_2:=2(\overline{\lambda}^{-1}+1) \underline{\lambda}^{-1} \mu^{-2} L L^{\mathrm{msq}}\left(L^{\mathrm{msq}}+L_h^2\right)
        \end{gathered} 
\end{equation}
Namely, with $V_t:=f\left(\boldsymbol{x}_1^t, \boldsymbol{x}_2^t\right)-\hat{f}$,
\begin{equation}\label{eq:Liypnov1}
V_{t+1} \leqslant\left(1-\alpha C_0\right) V_t+\alpha^3 \tau \overline{\tau} C_1 \max _{s \in[t-3 \tau, t]} V_s+\alpha^2 \tau \overline{\tau} C_2 \max _{s \in[t-2 \tau, t]} V_s .
\end{equation}

            \textbf{Case 2:} If the inequality is from Lemma \ref{lem:block_descent_inequality_Hessian}, then we have
            \begin{equation}\label{eq:third_term_case_2_bound_1}
                \begin{aligned}
                  f(\boldsymbol{x}_1^{t+1}, \boldsymbol{x}_2^t) - f(\boldsymbol{x}_1^{t+1}, \boldsymbol{x}_2^{\,t+1})
                   & \overset{(a)}{\geq} \alpha \langle \boldsymbol{a}, \boldsymbol{b} \rangle_{\boldsymbol{H}_2^t}
                  - \frac{1}{2} \alpha^2 \|\boldsymbol{b}\|^2_{\boldsymbol{H}_2^t}
                  - \frac{1}{6} \alpha^3 H^3 \mu^{-3/2} \|\boldsymbol{b}\|^3_{\boldsymbol{H}_2^t} \\
                   & \overset{(b)}{\geq} \frac{\alpha}{2} \Big( \|\boldsymbol{a}\|^2_{\boldsymbol{H}_2^t}
                  - \|\boldsymbol{b}-\boldsymbol{a}\|^2_{\boldsymbol{H}_2^t} \Big)
                  - \frac{2}{3} \alpha^3 H^3 \mu^{-3/2} \Big( \|\boldsymbol{a}\|^3_{\boldsymbol{H}_2^t}
                  + \|\boldsymbol{b}-\boldsymbol{a}\|^3_{\boldsymbol{H}_2^t} \Big).
                \end{aligned}
              \end{equation}
            In Equation \eqref{eq:third_term_case_2_bound_1}, inequality $(a)$ follows from Equation \eqref{eq:block_descent_inequality_Hessian2} in Lemma \ref{lem:block_descent_inequality_Hessian}, inequality $(b)$ follows from Lemmas \ref{lem:quadratic_surrogate} and \ref{lem:cubic_norm}.

            Thus, whenever (this condition will be satisfied eventually due to the convergence), for any given $\epsilon<1$,
$$
\frac{2}{3} \eta^2 H^3 \sqrt{\mu}^{-3}\|\boldsymbol{a}\|_{\boldsymbol{H}_2^t} \leqslant \frac{\epsilon}{2}, \frac{2}{3} \eta^2 H^3 \sqrt{\mu}^{-3}\|\boldsymbol{b}-\boldsymbol{a}\|_{\boldsymbol{H}_2^t}^2 \leqslant \frac{\epsilon}{2},
$$
we have
$$
f\left(\boldsymbol{x}_1^t, \boldsymbol{x}_2^t\right)-f\left(\boldsymbol{x}_1^t, \boldsymbol{x}_2^{t+1}\right) \geqslant \frac{\eta}{2}(1-\epsilon)\left(\|\boldsymbol{a}\|_{\boldsymbol{H}_2^t}^2-\|\boldsymbol{b}-\boldsymbol{a}\|_{\boldsymbol{H}_2^t}^2\right) .
$$

The term $\|\boldsymbol{a}\|_{\boldsymbol{H}_2^t}^2$ is lower bounded by
$$
\|\boldsymbol{a}\|_{\boldsymbol{H}_2^t}^2=\left(\boldsymbol{G}_2^t\right)^{\top}\left[\boldsymbol{H}_2^t\right]^{-1} \boldsymbol{G}_2^t \geqslant L^{-1}\left\|\boldsymbol{G}_2^t\right\|^2
$$

The term $\|\boldsymbol{b}-\boldsymbol{a}\|_{\boldsymbol{H}_2^t}^2$ is caused by delays; we relate it to the delays. Since $\boldsymbol{b}-\boldsymbol{a}$ can be decomposed as
$$
\boldsymbol{b}-\boldsymbol{a}=\left[\tilde{\boldsymbol{H}}_2^{-1}-\left[\boldsymbol{H}_2^t\right]^{-1}\right]\left[\tilde{\boldsymbol{G}}_2-\boldsymbol{G}_2^t\right]+\left[\boldsymbol{H}_2^t\right]^{-1}\left[\tilde{\boldsymbol{G}}_2-\boldsymbol{G}_2^t\right]+\left[\tilde{\boldsymbol{H}}_2^{-1}-\left[\boldsymbol{H}_2^t\right]^{-1}\right] \boldsymbol{G}_2^t
$$
and $\left[\tilde{\boldsymbol{H}}_2^{-1}-\left[\boldsymbol{H}_2^t\right]^{-1}\right]=\tilde{\boldsymbol{H}}_2^{-1}\left(\tilde{\boldsymbol{H}}_2-\boldsymbol{H}_2^t\right)\left[\boldsymbol{H}_2^t\right]^{-1}$, we have for any given $0<\epsilon \leqslant 1$,
$$
\begin{aligned}
    \|\boldsymbol{b}-\boldsymbol{a}\|_{\boldsymbol{H}_2^t}^2 \leqslant & \left(2+3 \epsilon^{-1}\right) \underline{\lambda}^{-2} \mu^{-1}\left\|\tilde{\boldsymbol{H}}_2-\boldsymbol{H}_2^t\right\|^2\left\|\tilde{\boldsymbol{G}}_2-\boldsymbol{G}_2^t\right\|^2 \\
    & \left(1+\frac{\epsilon}{3}\right) \mu^{-1}\left\|\tilde{\boldsymbol{G}}_2-\boldsymbol{G}_2^t\right\|^2+\left(2+3 \epsilon^{-1}\right) \underline{\lambda}^{-2} \mu^{-3}\left\|\tilde{\boldsymbol{H}}_2-\boldsymbol{H}_2^t\right\|^2\left\|\boldsymbol{G}_2^t\right\|^2
    \end{aligned}
$$
Thus, whenever (this condition will be satisfied eventually due to the convergence)
$$
\left(2+3 \epsilon^{-1}\right) \underline{\lambda}^{-2} \mu^{-1}\left\|\tilde{\boldsymbol{H}}_2-\boldsymbol{H}_2^t\right\|^2 \leqslant \frac{\epsilon}{3} \text { and }\left(2+3 \epsilon^{-1}\right) \underline{\lambda}^{-2} \mu^{-2}\left\|\boldsymbol{G}_2^t\right\|^2 \leqslant \frac{\epsilon}{3}\left(H^{\mathrm{msq}}\right)^{-1} L^{\mathrm{msq}},
$$
we have
$$
\|\boldsymbol{b}-\boldsymbol{a}\|_{\boldsymbol{H}_2^t}^2 \leqslant\left(1+\frac{2 \epsilon}{3}\right) \mu^{-1}\left\|\tilde{\boldsymbol{G}}_2-\boldsymbol{G}_2^t\right\|^2+\mu^{-1}\left(H^{\mathrm{msq}}\right)^{-1} L^{\mathrm{msq}}\left\|\tilde{\boldsymbol{H}}_2-\boldsymbol{H}_2^t\right\|^2 .
$$
The bound $\left\|\tilde{\boldsymbol{G}}_2-\boldsymbol{G}_2^t\right\|^2$ are already given in Equation. The bound of $\left\|\tilde{\boldsymbol{H}}_2-\boldsymbol{H}_2^t\right\|^2$ can be established similarly, which is
$$
\begin{aligned}
& \left\|\tilde{\boldsymbol{H}}_2-\boldsymbol{H}_2^t\right\|^2 \\
& \leqslant 4 \alpha^2 \tau \overline{\tau} \mu^{-1} \underline{\lambda}^{-1} H^{\mathrm{msq}}\left(L^{\mathrm{msq}}+L_h^2\right)\left[1+6 \mu^{-2}\left[2 L^{\mathrm{msq}}+\left(L_h+\overline{L}\right)^2\right]\right] \max_{s \in[t-3 \tau-1, t]} \left[f\left(\boldsymbol{x}_1^s, \boldsymbol{x}_2^s\right)-\hat{f}\right]
\end{aligned}
$$
Thus,
$$
\begin{aligned}
& \|\boldsymbol{b}-\boldsymbol{a}\|_{\boldsymbol{H}_2^t}^2 \leqslant(1+\epsilon) 4 \alpha^2 \tau \bar{\tau} \mu^{-1} \underline{\lambda}^{-1} L^{\mathrm{msq}}\left(L^{\mathrm{msq}}+L_h^2\right)\left[1+6 \mu^{-2}\left[2 L^{\mathrm{msq}}+\left(L_h+\bar{L}\right)^2\right]\right] \max _{s \in[t-3 \tau, t]}\left[f\left(\boldsymbol{x}_1^s, \boldsymbol{x}_2^s\right)-\right. \\
& \hat{f}]
\end{aligned}
$$
Similarly, we have
$$
\begin{gathered}
\frac{1}{2} \alpha L^{-1}\left|\boldsymbol{G}_2^t\right|^2+\frac{1}{2 \kappa L}\left|\nabla_1 f\left(\boldsymbol{x}_1^t, \boldsymbol{x}_2^t\right)\right|^2 \geqslant A \alpha\left|\nabla f\left(\boldsymbol{x}_1^t, \boldsymbol{x}_2^t\right)\right|^2-\alpha^3 \tau \bar{\tau}^1 B L^{-1} \max _{s \in[t-2 \tau, t]}\left[f\left(\boldsymbol{x}_1^s, \boldsymbol{x}_2^s\right)-f^*\right] \\
A:=\min \left\{\left[\kappa L\left(18 \kappa^2+1\right)\right]^{-1}, \frac{1}{6} L^{-1}\right\}
\end{gathered}
$$

In this case, similarly,  summarizing all the related bounds gives: if $\alpha \leqslant 1$, then
$$
V_{t+1} \leqslant\left(1-\alpha C_0^{\prime}\right) V_t+\alpha^3 \tau \bar{\tau} C_1^{\prime} \max _{s \in[t-3 \tau, t]} V_s+\alpha^2 \tau \bar{\tau} C_2^{\prime} \max _{s \in[t-2 \tau, t]} V_s
$$
where $C_0^{\prime}, C_1^{\prime}, C_2^{\prime}$ are constants defined as follows:
$$
\begin{gathered}
C_0^{\prime}:=\mu(1-\epsilon) \min \left\{2\left[\kappa L\left(18 \kappa^2+1\right)\right]^{-1}, \frac{1}{3} L^{-1}\right\} \\
C_1^{\prime}:=\left(1-\epsilon^2\right) \mu^{-1} \underline{\lambda} C_1 \\
C_2^{\prime}:=2\left(L^{-1}+1\right) \underline{\lambda}^{-1} \mu^{-2} L L^{\mathrm{msq}}\left(L^{\mathrm{msq}}+L_h^2\right)
\end{gathered}
$$

If $L=\bar{\lambda}$ and $\underline{\lambda} \leqslant \mu$, then $C_0^{\prime}=(1-\epsilon) C_0, C_1^{\prime}=(1-\epsilon^2) \mu^{-1} \underline{\lambda} C_1, C_2^{\prime}=C_2$. Note also that in the second case, $\alpha \leqslant 1$, and in the first case $\alpha \leqslant L^{-1} \underline{\lambda} \leqslant \kappa^{-1}$, then the convergence rate of the first case can be faster as long as $\epsilon$ is sufficiently small.

          \end{proof}
\vspace{-10pt}
          \begin{proof}[Proof of Lemma \ref{lem:Lyapunov_contraction}]
            If $\tau = 0$, we get $V_t \leqslant \left( {1 - \alpha C_0}  \right)^t V_0$
            immediately.
            
            Now, we suppose that $\tau \geqslant 1$. Assume that $V_s/V_0 \leqslant v^s , s
            \leqslant t$ for some $0 < v < 1$, then
            \begin{eqnarray*}
              V_{t + 1}/V_0 & \leqslant & (1 - C_0 \alpha) v^t + C_1 \alpha^3 \tau
              \overline{\tau} v^{t - 3 \tau} +  C_2 \alpha^2 \tau \overline{\tau} v^{t -
              2 \tau}\\
              & = & v^t \left[ 1 - C_0 \alpha + C_1 \alpha^3 \tau \overline{\tau} v^{-
              3 \tau} +  C_2 \alpha^2 \tau \overline{\tau} v^{- 2 \tau} \right]
            \end{eqnarray*}
            Define function
            \[ \chi (\alpha, v) \:= 1 - C_0 \alpha + C_1 \alpha^3 \tau
               \overline{\tau} v^{- 3 \tau} +  C_2 \alpha^2 \tau \overline{\tau} v^{- 2
               \tau} \]
            For each fixed $\alpha$, $\chi (v, \alpha)$ is decreasing with $v$, and
            \[ \lim_{v \rightarrow 0} \chi (v, \alpha) = \infty, \lim_{v \rightarrow 1}
               \chi (v, \alpha) = 1 - C_0 \alpha + C_1 \alpha^3 \tau \overline{\tau} + 
               C_2 \alpha^2 \tau \overline{\tau} . \]
            Thus, the set
            \[ \mathcal{V} (\alpha) : = \{ v \in (0, 1), \chi (v, \alpha) \leqslant v
               \} \]
            is not empty as long as $\chi (1, \alpha) < 1$, namely, $C_1 \alpha^2 \tau
            \overline{\tau} +  C_2 \alpha  \tau \overline{\tau} < C_0$, which is
            equivalent to
            \[ \alpha < \alpha_{\max} \:= \frac{2 C_0}{C_2 \tau \overline{\tau} +
               \sqrt{\left( C_2 \tau \overline{\tau} \right)^2 + 4 C_1 C_0 \tau
               \overline{\tau}}}  \]
            Here, $\text{} \alpha_{\max} $ solves $C_1 \alpha^2 \tau \overline{\tau} + 
            C_2 \alpha  \tau \overline{\tau} = C_0$. Therefore, for $\alpha \leqslant
            \alpha_{\max}$ and $v \in \mathcal{V} (\alpha)$, we have
            \[ V_{t + 1}/V_0 \leqslant v^t \chi (v, \alpha) \leqslant v^{t + 1} \]
            We can simplify $\alpha_{\max}$ through the following equation:
            \[ \frac{2 C_0}{C_2 \tau \overline{\tau} + \sqrt{\left( C_2 \tau
               \overline{\tau} \right)^2 + 4 C_1 C_0 \tau \overline{\tau}}} > \frac{2
               C_0}{C_2 \tau \overline{\tau} + C_2 \tau \overline{\tau} + 2 \sqrt{C_1
               C_0 \tau \overline{\tau}}} > \frac{C_0}{C_2 + \sqrt{C_1 C_0}}
               \frac{1}{\tau \overline{\tau}} \]
            Next, we find the optimal $v \in \mathcal{V} (\alpha)$ for fixed $\alpha <
            \alpha_{\max}$, which is
            $ v^{\star} (\alpha) : = \min  \{ \mathcal{V} (\alpha) \}, $
            Let $v = 1 - \xi$, then, $\chi (v, \alpha) \leqslant v$ gives
            \begin{equation}\label{eq:lem_lhs}
                \xi + C_1 \alpha^3 \tau \overline{\tau}  \left( 1 + \frac{\xi}{1 - \xi}
               \right)^{3 \tau} +  C_2 \alpha^2 \tau \overline{\tau}  \left( 1 +
               \frac{\xi}{1 - \xi} \right)^{2 \tau} \leqslant C_0 \alpha
            \end{equation}
            Note that for any $a, x > 0$
            \[ (1 + a)^x \leqslant e^{a x} \leqslant 1 + (e - 1) a x < 1 + 2 a x \]
            as long as $a x \leqslant 1$. Here, the first inequality is due to $(1 + a)
            \leqslant e^a$, and the second inequality is due to that $\frac{e^x - 1}{x}$
            is increaseing in for positive $x$, which implies that $\frac{e^{a x} - 1}{a
            x} \leqslant \frac{e^1 - 1}{1}$ if $a x \leqslant 1.$
            
            Thus, whenever $3 \tau \xi (1 - \xi)^{- 1} \leqslant 1$, namely $(3 \tau +
            1) \xi \leqslant 1$,
            \[ \left( 1 + \frac{\xi}{1 - \xi} \right)^{3 \tau} \leqslant 1 + \frac{6
               \tau \xi}{1 - \xi} \leqslant 1 + 8 \tau \xi, \left( 1 + \frac{\xi}{1 -
               \xi} \right)^{2 \tau} < 1 + \frac{4 \tau \xi}{1 - \xi} \leqslant 1 + 6
               \tau \xi \]
            In this case, this implies that the left-hand side of Equation \eqref{eq:lem_lhs} is less
            than $\xi + C_1 \alpha^3 \tau \overline{\tau}  (1 + 8 \tau \xi) +  C_2 \alpha^2
               \tau \overline{\tau} (1 + 6 \tau \xi),$ 
            which further implies that it suffices that
            \[ \left( 1 + 8 C_1 \alpha^3 \tau  \overline{\tau} \tau + 6 C_2 \alpha^2
               \tau \overline{\tau} \tau \right) \xi \leqslant C_0 \alpha - C_1 \alpha^3
               \tau \overline{\tau} - C_2 \alpha^2 \tau \overline{\tau}, \]
            namely,
            \[ \xi \leqslant \xi_{\max} \:= \frac{C_0 - C_1 \alpha^2 \tau
               \overline{\tau} - C_2 \alpha  \tau \overline{\tau}}{1 + 8 C_1 \alpha^3
               \tau  \overline{\tau} \tau + 6 C_2 \alpha^2 \tau \overline{\tau} \tau}
               \alpha \]
            In summary, when $\alpha < \min \left\{ \alpha_{\max}, \overline{\alpha}
            \right\}$, $\xi_{\max}$ is positive and
            \[ V_t/V_0 \leqslant (1 - \min \{ \xi_{\max}, (3 \tau + 1)^{- 1} \})^t . \]

    Note that
  $ \alpha_{\max} \geqslant C_0 \left( C_2 + \sqrt{C_1 C_0} \right)^{- 1}
     \frac{1}{\tau \overline{\tau}}, \geqslant A_0 \overline{\tau}^{- 1} \tau^{- 1} \geqslant
     A_0 \left( \overline{\tau} + 1 \right)^{- 1} (\tau + 1)^{- 1} .$ Thus, if $\alpha = \beta A_0 \left( \overline{\tau} + 1 \right)^{- 1} (\tau
  + 1)^{- 1}$, then $\alpha \leqslant \beta \alpha_{\max}$, which implies,
  \[ C_1 \alpha^2 \tau \overline{\tau} + C_2 \alpha  \tau \overline{\tau} <
     \beta \left[ C_1 \alpha^2_{\max} \tau \overline{\tau} + C_2 \alpha_{\max}
     \tau \overline{\tau} \right] = \beta C_0 \]
  Thus,
  \[ C_0 - C_1 \alpha^2 \tau \overline{\tau} - C_2 \alpha  \tau
     \overline{\tau} > (1 - \beta) C_0, 8 C_1 \alpha^3 \tau  \overline{\tau}
     \tau + 6 C_2 \alpha^2 \tau \overline{\tau} \tau < 8 \beta C_0  \alpha
     (\tau + 1) . \]
  From this, we have
  \[ \xi_{\max} > (\tau + 1)^{- 1} \frac{(1 - \beta) C_0 \alpha (\tau + 1)}{1
     + 8 \beta C_0 \alpha  (\tau + 1)} \geqslant (\tau + 1)^{- 1} \left(
     \overline{\tau} + 1 \right)^{- 1} \frac{(1 - \beta) \beta }{C_0^{- 1}
     A_0^{- 1} + 8 \beta^2} \]
  Moreover, since (note that $C_0  \leqslant 1$)
  \[ \frac{(1 - \beta) \beta }{C_0^{- 1} A_0^{- 1} + 8 \beta^2} < \frac{1}{4},
  \]
  we have
  \[ \frac{1}{3 \tau + 1} \geqslant (\tau + 1)^{- 1} \frac{1}{4} > (\tau +
     1)^{- 1} \left( \overline{\tau} + 1 \right)^{- 1} \frac{(1 - \beta) \beta
     }{C_0^{- 1} A_0^{- 1} + 8 \beta^2} . \]
  Therefore,
  \[ \min \{ \xi_{\max}, (3 \tau + 1)^{- 1} \} \geqslant (\tau + 1)^{- 1}
     \left( \overline{\tau} + 1 \right)^{- 1} \frac{(1 - \beta) \beta }{C_0^{-
     1} A_0^{- 1} + 8 \beta^2}, \]
     which completes the proof.
          \end{proof}

          \begin{proof}[Proof of Corollary \ref{cor:improved_asynchronous_algorithm}]
            Let $T_C = T_c$. According to the proof of and the convexity of $f$, we
            have, for $t \geq T_C$ and \ some constants ${C_0} , C_1, C_0$ independent
            of staleness.
            \[ V_{t + 1} \leqslant \frac{\tilde{V}_{t + 1} + \omega   V_t + \cdots +
               \omega ^M V_{t - M}}{1 + \omega   + \cdots + \omega ^M} \]
            \[ \tilde{V}_{t + 1} \leqslant \left( {1 - \alpha C_0}  \right) V_t +
               \alpha^3 \overline{\tau} \tau C_1 \max_{t \in [t - 3 \tau, t]} V_s +
               \alpha^2 \overline{\tau} \tau C_2 \max_{t \in [t - 2 \tau, t]} V_s \]
            Now suppose that $V_s / V_0 \leqslant c^s$ for some $0 < c < 1$ and $s
            \leqslant t$, then
            \[ V_{t + 1} / V_0 \leqslant c^{t + 1} c^{- M - 1} \gamma \]
            where (recall $\chi (\alpha, c) : = \left( {1 - \alpha C_0}  \right) +
            \alpha^3 \overline{\tau} \tau C_1 c^{- 3 \tau} + \alpha^2 \overline{\tau}
            \tau C_1 c^{- 2 \tau}$)
            \[ \gamma := \frac{\chi (\alpha, c) + \omega   + \cdots + \omega ^M}{1
               + \omega   + \cdots + \omega ^M}, \]
            For given $\alpha$, let $v$ and $c (v)$ be
            \[ \chi (\alpha, v) \leqslant v, c (v) : = {\left( \frac{v + \omega   +
               \cdots + \omega ^M}{1 + \omega   + \cdots + \omega ^M} \right)^{1 / (M +
               1)}}  \]
            Then, $c (v) \geqslant \gamma^{1 / (M + 1)}$, we have $V_{t + 1} / V_0
            \leqslant c^{t + 1}$. Since if $v = 1 - \xi$, we have
            \[ {\left( \frac{v + \omega   + \cdots + \omega ^M}{1 + \omega   + \cdots +
               \omega ^M} \right)^{1 / (M + 1)}}  \leqslant 1 - \frac{\xi}{(M + 1) (1 +
               \omega   + \cdots + \omega ^M)} \]
            The left-hand side term of the equation bounds the convergence rate. 
            The proof is done. 
          \end{proof}
      \subsubsection{Technical Lemmas}
          The following lemmas are used in the proof of Theorem.
          \begin{lemma}\label{lem:young_cauchy}
            For any $0<\lambda<1$ and vectors $\boldsymbol{a},\boldsymbol{b}$ of the same dimension,
            \[
            2\lvert \langle \boldsymbol{a},\boldsymbol{b}\rangle \rvert
            \;\le\; \lambda \|\boldsymbol{a}\|^2 + \lambda^{-1}\|\boldsymbol{b}\|^2,
            \qquad
            \|\boldsymbol{a}+\boldsymbol{b}\|^2
            \;\ge\; (1-\lambda)\|\boldsymbol{a}\|^2 - (\lambda^{-1}-1)\|\boldsymbol{b}\|^2 .
            \]
          \end{lemma}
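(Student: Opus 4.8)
The plan is to derive both inequalities from the Cauchy--Schwarz inequality combined with the elementary scalar bound $2xy \le x^2 + y^2$ (equivalently, $(x-y)^2 \ge 0$).

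For the first inequality, I would begin with $|\langle \boldsymbol{a}, \boldsymbol{b}\rangle| \le \|\boldsymbol{a}\|\,\|\boldsymbol{b}\|$ by Cauchy--Schwarz, and then rescale the two factors: writing $\|\boldsymbol{a}\|\,\|\boldsymbol{b}\| = \bigl(\sqrt{\lambda}\,\|\boldsymbol{a}\|\bigr)\bigl(\lambda^{-1/2}\|\boldsymbol{b}\|\bigr)$ and applying $2xy \le x^2 + y^2$ with $x = \sqrt{\lambda}\,\|\boldsymbol{a}\|$ and $y = \lambda^{-1/2}\|\boldsymbol{b}\|$ gives $2|\langle \boldsymbol{a},\boldsymbol{b}\rangle| \le \lambda\|\boldsymbol{a}\|^2 + \lambda^{-1}\|\boldsymbol{b}\|^2$. (This step in fact holds for any $\lambda > 0$.)

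For the second inequality, I would expand the square $\|\boldsymbol{a}+\boldsymbol{b}\|^2 = \|\boldsymbol{a}\|^2 + 2\langle \boldsymbol{a},\boldsymbol{b}\rangle + \|\boldsymbol{b}\|^2$, then lower-bound the cross term using $2\langle\boldsymbol{a},\boldsymbol{b}\rangle \ge -2|\langle\boldsymbol{a},\boldsymbol{b}\rangle|$ and substitute the first inequality, obtaining $\|\boldsymbol{a}+\boldsymbol{b}\|^2 \ge (1-\lambda)\|\boldsymbol{a}\|^2 + (1-\lambda^{-1})\|\boldsymbol{b}\|^2$, which is the claimed bound after rewriting $1-\lambda^{-1} = -(\lambda^{-1}-1)$.

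There is no genuine obstacle here; the only thing to keep track of is the sign bookkeeping, where the assumption $\lambda \in (0,1)$ ensures $1-\lambda > 0$ and $\lambda^{-1}-1 > 0$, so that the resulting inequality is a meaningful quadratic lower bound (a convex combination penalized by the $\boldsymbol{b}$ term). I would simply present these two short computations in sequence.
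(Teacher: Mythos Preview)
Your proposal is correct and follows essentially the same approach as the paper: apply Cauchy--Schwarz together with the scalar Young inequality $2xy \le x^2+y^2$ (with the $\sqrt{\lambda}$ rescaling) for the first bound, then expand $\|\boldsymbol{a}+\boldsymbol{b}\|^2$ and lower-bound the cross term via the first inequality for the second.
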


          \begin{proof}
            By Young's inequality  and Cauchy--Schwarz inequality,
            $
            2\lvert \langle \boldsymbol{a},\boldsymbol{b}\rangle \rvert
            \le \lambda \|\boldsymbol{a}\|^2 + \lambda^{-1}\|\boldsymbol{b}\|^2,
            $
            which proves the first claim. For the second, expand and then lower-bound the cross term using
            $2\langle \boldsymbol{a},\boldsymbol{b}\rangle \ge -\big(\lambda \|\boldsymbol{a}\|^2 + \lambda^{-1}\|\boldsymbol{b}\|^2\big)$:
            \[
            \|\boldsymbol{a}+\boldsymbol{b}\|^2
            = \|\boldsymbol{a}\|^2 + \|\boldsymbol{b}\|^2 + 2\langle \boldsymbol{a},\boldsymbol{b}\rangle
            \;\ge\; (1-\lambda)\|\boldsymbol{a}\|^2 + \big(1-\lambda^{-1}\big)\|\boldsymbol{b}\|^2
            = (1-\lambda)\|\boldsymbol{a}\|^2 - (\lambda^{-1}-1)\|\boldsymbol{b}\|^2.
            \]
          \end{proof}
\vspace{-10pt}

          \begin{lemma}\label{lem:quadratic_surrogate}
            Let $\boldsymbol{a},\boldsymbol{b}$ be vectors of the same dimension and $C>0$.
            If $\alpha \leq C^{-1}$, then
            \[
            \alpha \langle \boldsymbol{a}, \boldsymbol{b} \rangle - \tfrac{C}{2}\alpha^2 \| \boldsymbol{b} \|^2
            \;\leq\; \tfrac{\alpha}{2}\Big( \|\boldsymbol{a}\|^2 - \|\boldsymbol{b}-\boldsymbol{a}\|^2 \Big).
            \]
            Moreover, for any positive definite matrix $\boldsymbol{H}$, the same bound holds with respect to the $\boldsymbol{H}$-inner product:
            \[
            \alpha \langle \boldsymbol{a}, \boldsymbol{b} \rangle_{\boldsymbol{H}}
            - \tfrac{C}{2}\alpha^2 \|\boldsymbol{b}\|_{\boldsymbol{H}}^2
            \;\leq\; \tfrac{\alpha}{2}\Big( \|\boldsymbol{a}\|_{\boldsymbol{H}}^2 - \|\boldsymbol{b}-\boldsymbol{a}\|_{\boldsymbol{H}}^2 \Big).
            \]
          \end{lemma}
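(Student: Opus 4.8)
The plan is to collapse the asserted inequality to a single scalar comparison via the polarization identity, which holds in any inner-product space and so handles the Euclidean and the $\boldsymbol{H}$-metric versions simultaneously. First I would expand the right-hand side: using $\|\boldsymbol{b}-\boldsymbol{a}\|^2 = \|\boldsymbol{b}\|^2 - 2\langle\boldsymbol{a},\boldsymbol{b}\rangle + \|\boldsymbol{a}\|^2$ gives $\|\boldsymbol{a}\|^2 - \|\boldsymbol{b}-\boldsymbol{a}\|^2 = 2\langle\boldsymbol{a},\boldsymbol{b}\rangle - \|\boldsymbol{b}\|^2$, so the right-hand side equals $\alpha\langle\boldsymbol{a},\boldsymbol{b}\rangle - \tfrac{\alpha}{2}\|\boldsymbol{b}\|^2$. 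The cross term $\alpha\langle\boldsymbol{a},\boldsymbol{b}\rangle$ is then common to both sides and cancels.

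After cancellation, the claimed ``$\le$'' is equivalent to $-\tfrac{C}{2}\alpha^2\|\boldsymbol{b}\|^2 \le -\tfrac{\alpha}{2}\|\boldsymbol{b}\|^2$, i.e.\ to the purely scalar statement $\tfrac{\alpha}{2}(C\alpha-1)\|\boldsymbol{b}\|^2 \ge 0$. Since $\alpha>0$ and $\|\boldsymbol{b}\|^2\ge 0$, the sign is governed entirely by the factor $C\alpha-1$; hence the printed ``$\le$'' holds exactly when $C\alpha\ge 1$, that is, when $\alpha\ge C^{-1}$, with equality when $\boldsymbol{b}=\boldsymbol{0}$.

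The main obstacle is therefore not computational but a sign/hypothesis mismatch in the statement as worded. Under the \emph{stated} hypothesis $\alpha\le C^{-1}$ one has $C\alpha-1\le 0$, so the scalar quantity is nonpositive and the inequality in fact runs the other way, namely $\alpha\langle\boldsymbol{a},\boldsymbol{b}\rangle-\tfrac{C}{2}\alpha^2\|\boldsymbol{b}\|^2 \ge \tfrac{\alpha}{2}(\|\boldsymbol{a}\|^2-\|\boldsymbol{b}-\boldsymbol{a}\|^2)$. This ``$\ge$'' form, specialized to $C=1$, is precisely what is invoked at step $(b)$ of the Case~2 bound in the proof of Proposition~\ref{prop:Lyapunov_inequality}. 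Consequently the pairing of ``$\le$'' with $\alpha\le C^{-1}$ is internally inconsistent: to establish the ``$\le$'' direction exactly as worded one must instead assume $\alpha\ge C^{-1}$, and I would record this as a correction to the hypothesis (equivalently, the inequality sign should be reversed so as to match both the printed hypothesis and its downstream use).

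For the weighted version no additional argument is needed: since $\boldsymbol{H}\succ 0$ endows $\langle\boldsymbol{a},\boldsymbol{b}\rangle_{\boldsymbol{H}}=\boldsymbol{a}^\top\boldsymbol{H}\boldsymbol{b}$ with all inner-product properties and $\|\cdot\|_{\boldsymbol{H}}$ is its induced norm, the expansion, cancellation, and scalar reduction above carry over verbatim with every Euclidean inner product and norm replaced by its $\boldsymbol{H}$-counterpart, yielding the same condition $C\alpha\ge 1$.
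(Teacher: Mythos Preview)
Your reduction is correct, and your diagnosis of the sign mismatch is on point. Expanding $\|\boldsymbol{b}-\boldsymbol{a}\|^2$ and cancelling the cross term collapses the claim to $-\tfrac{C}{2}\alpha^2\|\boldsymbol{b}\|^2 \le -\tfrac{\alpha}{2}\|\boldsymbol{b}\|^2$, which holds iff $C\alpha\ge 1$; under the stated hypothesis $\alpha\le C^{-1}$ one instead gets the reverse inequality. You are also right that the reverse (``$\ge$'') direction is what the proof of Proposition~\ref{prop:Lyapunov_inequality} actually invokes, both at \eqref{eq:third_term_bound_2} in Case~1 (with $C=L\underline{\lambda}^{-1}$ in the $\tilde{\boldsymbol{H}}_2^{-1}$-metric, where the condition ``$\alpha\le L^{-1}\underline{\lambda}$'' is exactly $C\alpha\le 1$) and at step~$(b)$ of \eqref{eq:third_term_case_2_bound_1} in Case~2 (with $C=1$). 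So the lemma should read ``$\ge$'' under $\alpha\le C^{-1}$, and your argument proves that corrected version cleanly.

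Compared with the paper's route, yours is strictly simpler. The paper expands $\|\boldsymbol{b}\|^2=\|\boldsymbol{b}-\boldsymbol{a}\|^2+2\langle\boldsymbol{a},\boldsymbol{b}-\boldsymbol{a}\rangle+\|\boldsymbol{a}\|^2$, rewrites the left side as $(\alpha-\tfrac{C}{2}\alpha^2)\|\boldsymbol{a}\|^2-\tfrac{C}{2}\alpha^2\|\boldsymbol{b}-\boldsymbol{a}\|^2+(\alpha-C\alpha^2)\langle\boldsymbol{a},\boldsymbol{b}-\boldsymbol{a}\rangle$, and then appeals to Young's inequality on the cross term. That detour is unnecessary: the cross term $\alpha\langle\boldsymbol{a},\boldsymbol{b}\rangle$ is common to both sides and cancels identically, as you show, so no Young-type slack is needed and the result is an exact identity plus a scalar sign check. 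Moreover, if one actually carries out the paper's Young step (with any $\lambda\in(0,1)$) and compares coefficients, the resulting upper bound matches $\tfrac{\alpha}{2}\|\boldsymbol{a}\|^2-\tfrac{\alpha}{2}\|\boldsymbol{b}-\boldsymbol{a}\|^2$ only when $C\alpha\ge 1$, confirming that the paper's own argument, read carefully, yields the same corrected direction you identified. Your polarization approach has the further advantage of transferring to the $\boldsymbol{H}$-inner product with no extra work.
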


          \begin{proof}
            Expanding $\|\boldsymbol{b}\|^2 = \|\boldsymbol{b}-\boldsymbol{a}\|^2 + 2\langle \boldsymbol{a},\boldsymbol{b}-\boldsymbol{a}\rangle + \|\boldsymbol{a}\|^2$, we obtain
            \begin{align*}
              \alpha \langle \boldsymbol{a}, \boldsymbol{b} \rangle - \tfrac{C}{2}\alpha^2 \|\boldsymbol{b}\|^2
               & = \Big(\alpha - \tfrac{C}{2}\alpha^2\Big)\|\boldsymbol{a}\|^2
              - \tfrac{C}{2}\alpha^2 \|\boldsymbol{b}-\boldsymbol{a}\|^2
              + (\alpha - C\alpha^2)\langle \boldsymbol{a}, \boldsymbol{b}-\boldsymbol{a}\rangle.
            \end{align*}
            Applying Lemma~\ref{lem:young_cauchy} with parameter $\lambda=\tfrac{1}{2}$ to bound the last term and using $\alpha \le C^{-1}$, we find
            \[
            \alpha \langle \boldsymbol{a}, \boldsymbol{b} \rangle - \tfrac{C}{2}\alpha^2 \|\boldsymbol{b}\|^2
            \;\le\; \tfrac{\alpha}{2}\|\boldsymbol{a}\|^2 - \tfrac{\alpha}{2}\|\boldsymbol{b}-\boldsymbol{a}\|^2,
            \]
            which is the desired inequality. The extension to the $\boldsymbol{H}$-inner product follows by replacing norms and inner products accordingly.
          \end{proof}
\vspace{-10pt}
          \begin{lemma}\label{lem:cubic_norm}
            For any two vectors $\boldsymbol{a}$ and $\boldsymbol{b}$ of the same dimension,
            $
            \|\boldsymbol{a} + \boldsymbol{b}\|^3 \leq 4 \bigl( \|\boldsymbol{a}\|^3 + \|\boldsymbol{b}\|^3 \bigr).
            $
            Similarly, for any positive definite matrix $\boldsymbol{H}$, the same bound holds with respect to the $\boldsymbol{H}$-norm:
            \[
                \|\boldsymbol{a} + \boldsymbol{b}\|^3_{\boldsymbol{H}} \leq 4 \bigl( \|\boldsymbol{a}\|^3_{\boldsymbol{H}} + \|\boldsymbol{b}\|^3_{\boldsymbol{H}} \bigr).
                \]
          \end{lemma}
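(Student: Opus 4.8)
The plan is to reduce both claims to an elementary scalar inequality by combining the triangle inequality with the convexity of the map $x \mapsto x^{3}$ on $[0,\infty)$. First I would apply the triangle inequality for the Euclidean norm, $\|\boldsymbol{a}+\boldsymbol{b}\| \le \|\boldsymbol{a}\| + \|\boldsymbol{b}\|$; since both sides are nonnegative, cubing preserves the inequality, giving $\|\boldsymbol{a}+\boldsymbol{b}\|^{3} \le \big(\|\boldsymbol{a}\| + \|\boldsymbol{b}\|\big)^{3}$. It then suffices to show the purely scalar bound $(s+t)^{3} \le 4\,(s^{3}+t^{3})$ for all $s,t \ge 0$.

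To prove this scalar bound, I would note that $\phi(x) = x^{3}$ is convex on $[0,\infty)$, so by Jensen's inequality $\phi\!\big(\tfrac{s+t}{2}\big) \le \tfrac{1}{2}\big(\phi(s)+\phi(t)\big)$, i.e.\ $\tfrac{(s+t)^{3}}{8} \le \tfrac{s^{3}+t^{3}}{2}$, which rearranges exactly to $(s+t)^{3}\le 4(s^{3}+t^{3})$. (Equivalently, one can expand and reduce to $st(s+t)\le (s+t)(s^{2}-st+t^{2})$, which follows from $(s-t)^{2}\ge 0$.) Taking $s=\|\boldsymbol{a}\|$ and $t=\|\boldsymbol{b}\|$ yields the first inequality.

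For the $\boldsymbol{H}$-norm version, I would observe that since $\boldsymbol{H}$ is positive definite, $\|\boldsymbol{x}\|_{\boldsymbol{H}} = (\boldsymbol{x}^{\top}\boldsymbol{H}\boldsymbol{x})^{1/2} = \|\boldsymbol{H}^{1/2}\boldsymbol{x}\|$ is a genuine norm; substituting $\boldsymbol{a}\mapsto \boldsymbol{H}^{1/2}\boldsymbol{a}$ and $\boldsymbol{b}\mapsto \boldsymbol{H}^{1/2}\boldsymbol{b}$ in the Euclidean case gives the stated bound verbatim. There is no real obstacle here—the argument is entirely routine; the only point worth noting is that the constant $4$ is already sharp (attained when $\|\boldsymbol{a}\|=\|\boldsymbol{b}\|$ and $\boldsymbol{a},\boldsymbol{b}$ are parallel), so no sharper constant is available or needed.
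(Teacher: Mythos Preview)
Your proof is correct and essentially identical to the paper's: both apply the triangle inequality, cube, and then reduce to the scalar bound $(s+t)^3 \le 4(s^3+t^3)$, with the $\boldsymbol{H}$-norm case handled by the same substitution. The only cosmetic difference is that the paper justifies the scalar inequality by direct expansion (using $3st(s+t)\le 3(s^3+t^3)$), whereas you invoke convexity of $x\mapsto x^3$; you even mention the expansion route parenthetically, so the two arguments coincide.
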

          \begin{proof}
            By the triangle inequality,
            $
            \|\boldsymbol{a} + \boldsymbol{b}\| \leq \|\boldsymbol{a}\| + \|\boldsymbol{b}\|
            $,
            so
            $
            \|\boldsymbol{a} + \boldsymbol{b}\|^3 \leq (\|\boldsymbol{a}\| + \|\boldsymbol{b}\|)^3.
            $
            Expanding and using the inequality $3\|\boldsymbol{a}\|\|\boldsymbol{b}\|(\|\boldsymbol{a}\| + \|\boldsymbol{b}\|) \leq 3(\|\boldsymbol{a}\|^3 + \|\boldsymbol{b}\|^3)$ gives
            $(\|\boldsymbol{a}\| + \|\boldsymbol{b}\|)^3 \leq 4(\|\boldsymbol{a}\|^3 + \|\boldsymbol{b}\|^3)$,
            which completes the proof. The extension to the $\boldsymbol{H}$-norm follows by replacing norms accordingly.
          \end{proof}

          \begin{lemma}\label{lem:linear_combination_lower_bound}
            Suppose $a, b, c, \beta, \varsigma, \alpha_1, \alpha_2$ are all positive numbers. If
            $a \geq \beta b - \varsigma c,$
            then
            \[
            \alpha_1 a + \alpha_2 c \geq
            \begin{cases}
              \frac{\alpha_2 \beta}{2 (\varsigma + \beta)} (b + c), & \text{if } \frac{\alpha_2}{2} \leq \alpha_1 \varsigma,\\[2mm]
              \min\{\alpha_1 \beta, \frac{\alpha_2}{2}\} (b + c), & \text{otherwise}.
            \end{cases}
            \]
          \end{lemma}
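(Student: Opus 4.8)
The plan is to use one elementary trick: interpolate between the hypothesis $a \ge \beta b - \varsigma c$ and the trivial bound $a > 0$. For any weight $\theta \in [0,1]$, writing $a = \theta a + (1-\theta)a$ and bounding $\theta a \ge \theta(\beta b - \varsigma c)$ together with $(1-\theta)a \ge 0$ yields $a \ge \theta\beta b - \theta\varsigma c$, hence
\[
\alpha_1 a + \alpha_2 c \;\ge\; \theta\alpha_1\beta\, b + (\alpha_2 - \theta\alpha_1\varsigma)\, c .
\]
The remainder of the argument is just a good choice of $\theta$ making both coefficients nonnegative; once that holds, positivity of $b$ and $c$ gives $\alpha_1 a + \alpha_2 c \ge \min\{\theta\alpha_1\beta,\ \alpha_2 - \theta\alpha_1\varsigma\}\,(b+c)$.

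In the case $\alpha_2/2 \le \alpha_1\varsigma$, I would take $\theta = \alpha_2/(2\alpha_1\varsigma)$, which lies in $(0,1]$ precisely by the case hypothesis. Then the coefficient of $c$ equals $\alpha_2/2$ and that of $b$ equals $\alpha_2\beta/(2\varsigma)$, so $\alpha_1 a + \alpha_2 c \ge \tfrac{\alpha_2\beta}{2\varsigma}\,b + \tfrac{\alpha_2}{2}\,c$. Finishing then only requires noting that both $\tfrac{\alpha_2\beta}{2\varsigma}$ and $\tfrac{\alpha_2}{2}$ are at least $\tfrac{\alpha_2\beta}{2(\varsigma+\beta)}$ (the former because $\varsigma < \varsigma+\beta$, the latter because $\beta < \varsigma+\beta$), which gives the stated bound. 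In the complementary case $\alpha_2/2 > \alpha_1\varsigma$, I would take $\theta = 1$, so that $\alpha_1 a + \alpha_2 c \ge \alpha_1\beta\, b + (\alpha_2 - \alpha_1\varsigma)\,c$; the case hypothesis forces $\alpha_2 - \alpha_1\varsigma > \alpha_2/2 > 0$, so the coefficient of $c$ may be lowered to $\alpha_2/2$, and then $\alpha_1 a + \alpha_2 c \ge \min\{\alpha_1\beta,\ \alpha_2/2\}\,(b+c)$.

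I do not expect a genuine obstacle here; the points to keep straight are that $\theta$ must remain in $[0,1]$ in the first case — which is exactly what the case split is arranged to ensure — and that positivity of $b$ and $c$ is what licenses replacing a weighted sum by $\min\{\cdot,\cdot\}\,(b+c)$. It is worth recording that both hypotheses on $a$ are genuinely used: dropping $a>0$ would allow the choice $\theta=1$ to produce a negative coefficient on $c$, so the interpolation between the two bounds on $a$ is essential.
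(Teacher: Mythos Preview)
Your proof is correct and in fact cleaner than the paper's. The paper handles the case $\alpha_2/2 \le \alpha_1\varsigma$ by splitting on the sign of $\beta b - \varsigma c$: when it is nonpositive they use only $a>0$ and the inequality $\frac{\beta}{\varsigma+\beta}(b+c)\le c$; when it is positive they use only $a\ge \beta b - \varsigma c$ and then split once more according to whether $\alpha_2\le\alpha_1\varsigma$ or $\alpha_1\varsigma\le\alpha_2\le 2\alpha_1\varsigma$. Your single interpolation parameter $\theta=\alpha_2/(2\alpha_1\varsigma)$ replaces this entire nested case analysis with one line, because the convex combination automatically selects the right mixture of the two hypotheses on $a$. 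The second case ($\alpha_2/2>\alpha_1\varsigma$, $\theta=1$) is handled identically in both proofs. As a small bonus, your argument actually yields the sharper intermediate bound $\min\{\alpha_2\beta/(2\varsigma),\,\alpha_2/2\}$ in the first case before relaxing to $\alpha_2\beta/(2(\varsigma+\beta))$ to match the stated conclusion.
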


          \begin{proof}
            Suppose that $\alpha_2 \leq 2 \alpha_1 \varsigma$. There are two cases.

            \textbf{Case 1:} $\beta b - \varsigma c \leq 0$. Then $\frac{\beta}{\varsigma + \beta}(b + c) \leq c$, so
            $\alpha_1 a + \alpha_2 c \geq \alpha_2 c \geq \frac{\alpha_2 \beta}{\varsigma + \beta} (b + c)$,

            \textbf{Case 2:} $\beta b - \varsigma c > 0$. Then $\frac{\varsigma}{\varsigma + \beta} (b+c) < b$ and $\frac{\beta}{\varsigma + \beta} (b+c) > c$, so
            \[
            \begin{aligned}
              \alpha_1 a + \alpha_2 c & \geq \alpha_1 \beta b - (\alpha_1 \varsigma - \alpha_2)_+ c \\
               & \geq
              \begin{cases}
                \frac{\alpha_2 \beta}{\varsigma + \beta} (b+c), & \text{if } \alpha_2 \leq \alpha_1 \varsigma,\\
                \frac{\alpha_1 \beta \varsigma}{\varsigma + \beta} (b+c), & \text{if } \alpha_1 \varsigma \leq\alpha_2 \leq 2\alpha_1 \varsigma.
              \end{cases}
            \end{aligned}
            \]

            In both cases, we have
$            \alpha_1 a + \alpha_2 c \geq \frac{\alpha_2 \beta}{2(\varsigma + \beta)} (b + c).$
            Now suppose that $\alpha_2 > 2 \alpha_1 \varsigma$, then the following inequalities complete the proof
            \[
            \alpha_1 a + \alpha_2 c \geq \alpha_1 \beta b + (\alpha_2 - \alpha_1 \varsigma) c \geq \min\{\alpha_1 \beta, \alpha_2 - \alpha_1 \varsigma\} (b+c)\geq \min\{\alpha_1 \beta, \frac{\alpha_2}{2}\} (b+c).
            \]
            \end{proof}

\vspace{-10pt}
          \begin{lemma}\label{lem:gradient_gap_inequality_upper}
            Suppose $F : \mathcal{X} \to \mathbb{R}$ is $\mu_F$-strongly convex with minimizer $\boldsymbol{x}^\ast$ over a convex set $\mathcal{X}$. Then for any $\boldsymbol{x} \in \mathcal{X}$,
           $ 
            F(\boldsymbol{x}) - F(\boldsymbol{x}^\ast) \leq \frac{1}{2 \mu_F} \|\nabla F(\boldsymbol{x})\|^2.
            $
          \end{lemma}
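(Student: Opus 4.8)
The plan is to use the standard quadratic lower bound furnished by strong convexity, evaluated at the minimizer, and then complete the square. Since $F$ is $\mu_F$-strongly convex on the convex set $\mathcal{X}$ and differentiable, for all $\boldsymbol{x},\boldsymbol{y}\in\mathcal{X}$ we have
\[
F(\boldsymbol{y}) \;\ge\; F(\boldsymbol{x}) + \langle \nabla F(\boldsymbol{x}),\,\boldsymbol{y}-\boldsymbol{x}\rangle + \frac{\mu_F}{2}\|\boldsymbol{y}-\boldsymbol{x}\|^2 .
\]
The first step is to instantiate this inequality with $\boldsymbol{y}=\boldsymbol{x}^\ast$, which is legitimate because $\boldsymbol{x}^\ast\in\mathcal{X}$ by definition of the minimizer. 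Rearranging yields
\[
F(\boldsymbol{x}) - F(\boldsymbol{x}^\ast) \;\le\; \langle \nabla F(\boldsymbol{x}),\,\boldsymbol{x}-\boldsymbol{x}^\ast\rangle - \frac{\mu_F}{2}\|\boldsymbol{x}-\boldsymbol{x}^\ast\|^2 .
\]

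The second step is to bound the right-hand side by viewing it as a concave quadratic in the displacement vector $\boldsymbol{v}:=\boldsymbol{x}-\boldsymbol{x}^\ast$. Completing the square,
\[
\langle \nabla F(\boldsymbol{x}),\,\boldsymbol{v}\rangle - \frac{\mu_F}{2}\|\boldsymbol{v}\|^2
= -\frac{\mu_F}{2}\Big\|\boldsymbol{v} - \tfrac{1}{\mu_F}\nabla F(\boldsymbol{x})\Big\|^2 + \frac{1}{2\mu_F}\|\nabla F(\boldsymbol{x})\|^2
\;\le\; \frac{1}{2\mu_F}\|\nabla F(\boldsymbol{x})\|^2 .
\]
Chaining the two displays gives $F(\boldsymbol{x}) - F(\boldsymbol{x}^\ast) \le \tfrac{1}{2\mu_F}\|\nabla F(\boldsymbol{x})\|^2$, which is the claim.

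There is essentially no hard step here; the argument is routine. The only point requiring a small amount of care is that the maximization over $\boldsymbol{v}$ in the second step ranges over all of $\mathbb{R}^{\dim}$ rather than only over feasible displacements $\boldsymbol{x}-\mathcal{X}$; since we are upper-bounding $F(\boldsymbol{x})-F(\boldsymbol{x}^\ast)$, enlarging the feasible set of $\boldsymbol{v}$ can only weaken the bound, so the inequality remains valid on a constrained domain. No first-order optimality condition for $\boldsymbol{x}^\ast$ is needed, only that $\boldsymbol{x}^\ast\in\mathcal{X}$ so that the strong-convexity inequality may be applied.
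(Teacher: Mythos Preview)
Your proof is correct and follows essentially the same approach as the paper: apply the strong-convexity lower bound at $\boldsymbol{y}=\boldsymbol{x}^\ast$, then bound the resulting concave quadratic in $\boldsymbol{x}-\boldsymbol{x}^\ast$ by its unconstrained maximum (the paper phrases this as maximizing $\phi(\boldsymbol{h})$, you phrase it as completing the square, which is the same computation).
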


          \begin{proof}
            By strong convexity of $F$, for any $\boldsymbol{x}$ we have
            $
            F(\boldsymbol{x}^\ast) \geq F(\boldsymbol{x}) + \langle \nabla F(\boldsymbol{x}), \boldsymbol{x}^\ast - \boldsymbol{x} \rangle + \frac{\mu_F}{2} \|\boldsymbol{x}^\ast - \boldsymbol{x}\|^2.
            $
            Rearranging gives
            \begin{equation}\label{eq:gap_inequality_proof}
              F(\boldsymbol{x}) - F(\boldsymbol{x}^\ast) \leq \langle \nabla F(\boldsymbol{x}), \boldsymbol{x} - \boldsymbol{x}^\ast \rangle - \frac{\mu_F}{2} \|\boldsymbol{x} - \boldsymbol{x}^\ast\|^2.
            \end{equation}
            Consider the right-hand side of (\ref{eq:gap_inequality_proof}) as a function of $\boldsymbol{h} := \boldsymbol{x} - \boldsymbol{x}^\ast$:
            $
            \phi(\boldsymbol{h}) := \langle \nabla F(\boldsymbol{x}), \boldsymbol{h} \rangle - \frac{\mu_F}{2} \|\boldsymbol{h}\|^2.
            $
            This is a concave quadratic in $\boldsymbol{h}$, which achieves its maximum at $\boldsymbol{h} = \frac{1}{\mu_F} \nabla F(\boldsymbol{x})$, with value
            $
            \max_{\boldsymbol{h}} \phi(\boldsymbol{h}) = \frac{1}{2 \mu_F} \|\nabla F(\boldsymbol{x})\|^2.
            $
 Hence,
            $
            F(\boldsymbol{x}) - F(\boldsymbol{x}^\ast) \leq \frac{1}{2 \mu_F} \|\nabla F(\boldsymbol{x})\|^2.
            $
          \end{proof}

\vspace{-10pt}
          \begin{lemma}\label{lem:gradient_gap_inequality_lower}
            Suppose that $F : \mathcal{X} \to \mathbb{R}$ is $\mu_F$-strongly convex and $L_F$-smooth over a convex set $\mathcal{X}$, with condition number $\kappa = L_F / \mu_F$. Then for any $\boldsymbol{x} \in \mathcal{X}$,
            $
            F(\boldsymbol{x}) - F(\boldsymbol{x}^\ast) \geq \frac{1}{2 \kappa L_F} \|\nabla F(\boldsymbol{x})\|^2.
            $
          \end{lemma}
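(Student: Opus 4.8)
The plan is to derive the bound by connecting the two hypotheses through the squared distance $\|\boldsymbol{x}-\boldsymbol{x}^\ast\|^2$, in a manner dual to the proof of Lemma~\ref{lem:gradient_gap_inequality_upper}.

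First I would apply $\mu_F$-strong convexity: since $\boldsymbol{x}^\ast\in\mathcal{X}$ minimizes $F$ over $\mathcal{X}$ and $\boldsymbol{x}\in\mathcal{X}$, the inequality $F(\boldsymbol{x}^\ast)\ge F(\boldsymbol{x})+\langle\nabla F(\boldsymbol{x}),\boldsymbol{x}^\ast-\boldsymbol{x}\rangle+\tfrac{\mu_F}{2}\|\boldsymbol{x}-\boldsymbol{x}^\ast\|^2$, combined with the optimality of $\boldsymbol{x}^\ast$ (which, for the interior minimizers relevant here, forces $\nabla F(\boldsymbol{x}^\ast)=\boldsymbol{0}$, equivalently makes the inner-product term vanish), rearranges to
\[
F(\boldsymbol{x}) - F(\boldsymbol{x}^\ast) \;\ge\; \frac{\mu_F}{2}\,\|\boldsymbol{x}-\boldsymbol{x}^\ast\|^2 .
\]
Next I would use $L_F$-smoothness together with $\nabla F(\boldsymbol{x}^\ast)=\boldsymbol{0}$ to bound the gradient,
\[
\|\nabla F(\boldsymbol{x})\| = \|\nabla F(\boldsymbol{x})-\nabla F(\boldsymbol{x}^\ast)\| \le L_F\,\|\boldsymbol{x}-\boldsymbol{x}^\ast\| ,
\]
so that $\|\nabla F(\boldsymbol{x})\|^2 \le L_F^2\,\|\boldsymbol{x}-\boldsymbol{x}^\ast\|^2$. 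Substituting this into the previous display and recalling $\kappa L_F = L_F^2/\mu_F$ yields
\[
F(\boldsymbol{x}) - F(\boldsymbol{x}^\ast) \;\ge\; \frac{\mu_F}{2}\cdot\frac{\|\nabla F(\boldsymbol{x})\|^2}{L_F^2} \;=\; \frac{1}{2\kappa L_F}\,\|\nabla F(\boldsymbol{x})\|^2 ,
\]
which is exactly the claimed inequality.

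I do not expect any genuine obstacle here — the estimate is just a pairing of the quadratic lower model (strong convexity) with the linear upper model of the gradient (Lipschitzness). The only point that needs care is the appeal to $\nabla F(\boldsymbol{x}^\ast)=\boldsymbol{0}$: in the present application the minimization over $\boldsymbol{x}_1=(\boldsymbol{\mu},\boldsymbol{\Sigma})$ is effectively unconstrained, so the minimizer is interior and this holds automatically. If instead one wished to treat a truly constrained minimizer, the cleaner route is to apply the descent lemma at $\boldsymbol{x}-L_F^{-1}\nabla F(\boldsymbol{x})$ (when that point lies in $\mathcal{X}$) to obtain the even sharper bound $F(\boldsymbol{x})-F(\boldsymbol{x}^\ast)\ge\tfrac{1}{2L_F}\|\nabla F(\boldsymbol{x})\|^2$, or to replace $\nabla F$ by the gradient mapping; in all of these cases the stated inequality follows a fortiori.
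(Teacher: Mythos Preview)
The paper does not actually give a proof of this lemma; it simply cites \cite{shi2025decentralized} and omits the argument. Your overall strategy---lower-bound $F(\boldsymbol{x})-F(\boldsymbol{x}^\ast)$ by $\tfrac{\mu_F}{2}\|\boldsymbol{x}-\boldsymbol{x}^\ast\|^2$ via strong convexity, then upper-bound $\|\nabla F(\boldsymbol{x})\|^2$ by $L_F^2\|\boldsymbol{x}-\boldsymbol{x}^\ast\|^2$ via smoothness---is correct and is the standard route.

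There is, however, a slip in your first step. You wrote the strong-convexity inequality with base point $\boldsymbol{x}$,
\[
F(\boldsymbol{x}^\ast)\ge F(\boldsymbol{x})+\langle\nabla F(\boldsymbol{x}),\boldsymbol{x}^\ast-\boldsymbol{x}\rangle+\tfrac{\mu_F}{2}\|\boldsymbol{x}-\boldsymbol{x}^\ast\|^2,
\]
and then claimed that $\nabla F(\boldsymbol{x}^\ast)=\boldsymbol{0}$ ``makes the inner-product term vanish.'' But the inner product here involves $\nabla F(\boldsymbol{x})$, not $\nabla F(\boldsymbol{x}^\ast)$, so optimality of $\boldsymbol{x}^\ast$ does not kill it; as written, this display rearranges to an \emph{upper} bound on $F(\boldsymbol{x})-F(\boldsymbol{x}^\ast)$. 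The fix is to apply strong convexity with base point $\boldsymbol{x}^\ast$ instead:
\[
F(\boldsymbol{x})\ge F(\boldsymbol{x}^\ast)+\langle\nabla F(\boldsymbol{x}^\ast),\boldsymbol{x}-\boldsymbol{x}^\ast\rangle+\tfrac{\mu_F}{2}\|\boldsymbol{x}-\boldsymbol{x}^\ast\|^2,
\]
after which $\nabla F(\boldsymbol{x}^\ast)=\boldsymbol{0}$ genuinely eliminates the inner product and yields the desired lower bound $F(\boldsymbol{x})-F(\boldsymbol{x}^\ast)\ge\tfrac{\mu_F}{2}\|\boldsymbol{x}-\boldsymbol{x}^\ast\|^2$. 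With this correction the rest of your argument goes through unchanged.
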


          The proof of Lemma~\ref{lem:gradient_gap_inequality_lower} can be found in \cite{shi2025decentralized}, and is omitted here.

          \begin{lemma}\label{lem:block_descent_inequality_upper}
            Assume that $F: \mathcal{X} \to \mathbb{R}$ is $L_F$-smooth. Then for any $\boldsymbol{x}, \boldsymbol{y} \in \mathbb{R}^d$,
            \[
            F(\boldsymbol{y}) - F(\boldsymbol{x}) \leq \langle \nabla F(\boldsymbol{x}), \boldsymbol{y} - \boldsymbol{x} \rangle + \frac{L_F}{2} \|\boldsymbol{y} - \boldsymbol{x}\|^2.
            \]
          \end{lemma}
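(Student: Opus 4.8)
The statement is the classical ``descent lemma'': it says that $L_F$-smoothness (i.e.\ Lipschitz continuity of $\nabla F$ with constant $L_F$, as in Assumption~\ref{assumption:lipchitz_continuous_gradient}) upper-bounds $F$ by its first-order Taylor expansion at $\boldsymbol{x}$ plus a quadratic term. The plan is to reduce to a one-dimensional computation along the segment joining $\boldsymbol{x}$ and $\boldsymbol{y}$ and then bound the remainder by the Lipschitz property of the gradient. Since $\mathcal{X}=\mathbb{R}^d$ here (or any convex set containing the segment), the segment $\{\boldsymbol{x}+t(\boldsymbol{y}-\boldsymbol{x}):t\in[0,1]\}$ stays in the domain, so this reduction is legitimate.

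First I would introduce $\psi(t):=F\big(\boldsymbol{x}+t(\boldsymbol{y}-\boldsymbol{x})\big)$ for $t\in[0,1]$, which is continuously differentiable with $\psi'(t)=\langle \nabla F(\boldsymbol{x}+t(\boldsymbol{y}-\boldsymbol{x})),\,\boldsymbol{y}-\boldsymbol{x}\rangle$. By the fundamental theorem of calculus,
\[
F(\boldsymbol{y})-F(\boldsymbol{x})=\psi(1)-\psi(0)=\int_0^1 \langle \nabla F(\boldsymbol{x}+t(\boldsymbol{y}-\boldsymbol{x})),\,\boldsymbol{y}-\boldsymbol{x}\rangle\,\mathrm{d}t .
\]
Subtracting the constant-in-$t$ quantity $\langle \nabla F(\boldsymbol{x}),\boldsymbol{y}-\boldsymbol{x}\rangle=\int_0^1\langle \nabla F(\boldsymbol{x}),\boldsymbol{y}-\boldsymbol{x}\rangle\,\mathrm{d}t$ gives
\[
F(\boldsymbol{y})-F(\boldsymbol{x})-\langle \nabla F(\boldsymbol{x}),\,\boldsymbol{y}-\boldsymbol{x}\rangle
=\int_0^1 \big\langle \nabla F(\boldsymbol{x}+t(\boldsymbol{y}-\boldsymbol{x}))-\nabla F(\boldsymbol{x}),\,\boldsymbol{y}-\boldsymbol{x}\big\rangle\,\mathrm{d}t .
\]

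Next I would bound the integrand: by the Cauchy--Schwarz inequality the integrand is at most $\|\nabla F(\boldsymbol{x}+t(\boldsymbol{y}-\boldsymbol{x}))-\nabla F(\boldsymbol{x})\|\,\|\boldsymbol{y}-\boldsymbol{x}\|$, and by $L_F$-Lipschitz continuity of the gradient this is at most $L_F\,t\,\|\boldsymbol{y}-\boldsymbol{x}\|^2$. Integrating, $\int_0^1 L_F t\,\|\boldsymbol{y}-\boldsymbol{x}\|^2\,\mathrm{d}t=\tfrac{L_F}{2}\|\boldsymbol{y}-\boldsymbol{x}\|^2$, which yields the claim. There is essentially no hard step here; the only point worth stating carefully is that ``$L_F$-smooth'' is used in the sense of a Lipschitz gradient (not merely bounded Hessian), which is precisely the hypothesis available from Assumption~\ref{assumption:lipchitz_continuous_gradient}, so no additional differentiability beyond $C^1$ is needed.
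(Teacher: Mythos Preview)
Your argument is the standard, correct proof of the descent lemma: reduce to the segment via the fundamental theorem of calculus, bound the remainder with Cauchy--Schwarz and the $L_F$-Lipschitz gradient, and integrate $\int_0^1 L_F t\,\mathrm{d}t=\tfrac{L_F}{2}$. The paper does not actually supply a proof for this lemma---it is stated as a classical fact alongside Lemma~\ref{lem:block_descent_inequality_Hessian} and used directly---so your write-up is exactly what would fill the gap, and there is nothing to compare against.
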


          \begin{lemma}\label{lem:block_descent_inequality_Hessian}
            Assume that $F: \mathcal{X} \to \mathbb{R}$ is three times continuously differentiable and its Hessian $\nabla^2 F$ is $H_F$-Lipschitz. Then for any $\boldsymbol{x}, \boldsymbol{y} \in \mathcal{X}$,
            \begin{equation*}\label{eq:block_descent_inequality_Hessian}
              F(\boldsymbol{y}) - F(\boldsymbol{x}) \leq \langle \nabla F(\boldsymbol{x}), \boldsymbol{y} - \boldsymbol{x} \rangle + \frac{1}{2} \langle \boldsymbol{y} - \boldsymbol{x}, \nabla^2 F(\boldsymbol{x}) (\boldsymbol{y} - \boldsymbol{x}) \rangle + \frac{1}{6} H_F \|\boldsymbol{y} - \boldsymbol{x}\|^3.
            \end{equation*}
          \end{lemma}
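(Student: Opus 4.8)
The plan is to reduce the multivariate inequality to a one-dimensional Taylor expansion along the segment from $\boldsymbol{x}$ to $\boldsymbol{y}$ and then bound the third-order remainder using the Lipschitz continuity of the Hessian. Set $\boldsymbol{d} := \boldsymbol{y} - \boldsymbol{x}$; since $\mathcal{X}$ is convex, $\boldsymbol{x} + t\boldsymbol{d} \in \mathcal{X}$ for all $t \in [0,1]$, so $\phi(t) := F(\boldsymbol{x} + t\boldsymbol{d})$ is well defined and $C^3$ on $[0,1]$, with $\phi(0)=F(\boldsymbol{x})$, $\phi(1)=F(\boldsymbol{y})$, $\phi'(t)=\langle \nabla F(\boldsymbol{x}+t\boldsymbol{d}),\boldsymbol{d}\rangle$, and $\phi''(t)=\boldsymbol{d}^\top\nabla^2 F(\boldsymbol{x}+t\boldsymbol{d})\,\boldsymbol{d}$.

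First I would apply Taylor's theorem with integral remainder at order one, $\phi(1)=\phi(0)+\phi'(0)+\int_0^1(1-t)\phi''(t)\,\mathrm{d}t$, which in the original variables reads
\[
F(\boldsymbol{y}) = F(\boldsymbol{x}) + \langle \nabla F(\boldsymbol{x}),\boldsymbol{d}\rangle + \int_0^1 (1-t)\,\boldsymbol{d}^\top \nabla^2 F(\boldsymbol{x}+t\boldsymbol{d})\,\boldsymbol{d}\,\mathrm{d}t .
\]
Using $\int_0^1(1-t)\,\mathrm{d}t = \tfrac12$, I would rewrite the quadratic term as $\tfrac12\,\boldsymbol{d}^\top\nabla^2 F(\boldsymbol{x})\boldsymbol{d} = \int_0^1(1-t)\,\boldsymbol{d}^\top\nabla^2 F(\boldsymbol{x})\boldsymbol{d}\,\mathrm{d}t$ and subtract, obtaining the exact remainder identity
\[
F(\boldsymbol{y}) - F(\boldsymbol{x}) - \langle \nabla F(\boldsymbol{x}),\boldsymbol{d}\rangle - \tfrac12\,\boldsymbol{d}^\top\nabla^2 F(\boldsymbol{x})\boldsymbol{d} = \int_0^1 (1-t)\,\boldsymbol{d}^\top\!\big(\nabla^2 F(\boldsymbol{x}+t\boldsymbol{d})-\nabla^2 F(\boldsymbol{x})\big)\boldsymbol{d}\,\mathrm{d}t .
\]

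Finally I would estimate the integrand: by Cauchy--Schwarz and the definition of the operator norm, $\big|\boldsymbol{d}^\top(\nabla^2 F(\boldsymbol{x}+t\boldsymbol{d})-\nabla^2 F(\boldsymbol{x}))\boldsymbol{d}\big| \le \|\nabla^2 F(\boldsymbol{x}+t\boldsymbol{d})-\nabla^2 F(\boldsymbol{x})\|\,\|\boldsymbol{d}\|^2$, and the $H_F$-Lipschitz property of $\nabla^2 F$ in the operator norm (as in Assumption~\ref{assumption:lipchitz_continuous_hessian}) bounds this by $H_F\,\|t\boldsymbol{d}\|\,\|\boldsymbol{d}\|^2 = H_F\, t\,\|\boldsymbol{d}\|^3$. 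Hence the remainder is at most $H_F\|\boldsymbol{d}\|^3\int_0^1 t(1-t)\,\mathrm{d}t = \tfrac16 H_F\|\boldsymbol{d}\|^3$, which is the asserted inequality. This is a standard cubic-regularization estimate and presents no genuine difficulty; the only points needing care are the convexity of $\mathcal{X}$ (so the segment stays in the domain), the validity of the integral-remainder form of Taylor's theorem under $C^3$ smoothness, and consistently using the operator norm for the Hessian difference.
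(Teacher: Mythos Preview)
Your proof is correct and is the standard argument for this cubic-regularization bound. Note, however, that the paper does not actually supply a proof of this lemma: it is stated in the technical lemmas subsection alongside Lemma~\ref{lem:block_descent_inequality_upper} without any proof environment, being treated as a well-known result. There is therefore nothing in the paper to compare against; your Taylor-with-integral-remainder derivation is exactly the textbook route and fills the gap cleanly.
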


          \begin{remark}
            Let $\boldsymbol{H}_F := \nabla^2 F(\boldsymbol{x})$. Then the Hessian version inequality can be expressed as
            \[
            F(\boldsymbol{y}) - F(\boldsymbol{x}) \leq \langle \boldsymbol{H}_F^{-1} \nabla F(\boldsymbol{x}), \boldsymbol{y} - \boldsymbol{x} \rangle_{\boldsymbol{H}_F} + \frac{1}{2} \|\boldsymbol{y} - \boldsymbol{x}\|^2_{\boldsymbol{H}_F} + \frac{1}{6} H_F \|\boldsymbol{y} - \boldsymbol{x}\|^3.
            \]
            Moreover, since
            \[
            \|\boldsymbol{y} - \boldsymbol{x}\|^2 = (\boldsymbol{H}_F^{1/2} (\boldsymbol{y} - \boldsymbol{x}))^\top \boldsymbol{H}_F^{-1} \boldsymbol{H}_F^{1/2} (\boldsymbol{y} - \boldsymbol{x}) \leq \lambda_{\min}^{-1}(\boldsymbol{H}_F) \|\boldsymbol{y} - \boldsymbol{x}\|^2_{\boldsymbol{H}_F},
            \]
            we can further have the following inequality
            \begin{equation}\label{eq:block_descent_inequality_Hessian2}
              F(\boldsymbol{y}) - F(\boldsymbol{x}) \leq \langle \boldsymbol{H}_F^{-1} \nabla F(\boldsymbol{x}), \boldsymbol{y} - \boldsymbol{x} \rangle_{\boldsymbol{H}_F} + \frac{1}{2} \|\boldsymbol{y} - \boldsymbol{x}\|^2_{\boldsymbol{H}_F} + \frac{1}{6} H_F \lambda_{\min}^{-3/2}(\boldsymbol{H}_F) \|\boldsymbol{y} - \boldsymbol{x}\|^3_{\boldsymbol{H}_F}.
            \end{equation}

          \end{remark}

\vspace{-10pt}
          \begin{lemma}\label{lem:block_minimizer_sensitivity_2}
            Suppose each $f_j$ is $L_j$-smooth and $h$ is $L_h$-smooth. Assume further that the global $ f:= \frac{1}{J} \sum_{j=1}^J f_j + h$ is $\mu$-strongly convex.  For sequences $(\boldsymbol{x}_2^j)_{j=0}^J$ and $(\boldsymbol{y}_2^j)_{j=0}^J$, define
            \[
            \boldsymbol{x}_1^{\ast} := \arg\min_{\boldsymbol{x}_1} \Big\{ \frac{1}{J} \sum_{j=1}^J f_j(\boldsymbol{x}_1, \boldsymbol{x}_2^j) + h(\boldsymbol{x}_1, \boldsymbol{x}_2^0) \Big\}, \quad
            \boldsymbol{y}_1^{\ast} := \arg\min_{\boldsymbol{y}_1} \Big\{ \frac{1}{J} \sum_{j=1}^J f_j(\boldsymbol{y}_1, \boldsymbol{y}_2^j) + h(\boldsymbol{y}_1, \boldsymbol{y}_2^0) \Big\},
            \]
            then, with $\overline{L} := J^{-1}\sum_{j=1}^J L_j$,
            \[
            \|\boldsymbol{x}_1^{\ast} - \boldsymbol{y}_1^{\ast}\|
            \le \frac{1}{\mu} \cdot \frac{1}{J} \sum_{j=0}^J
             L_j \Big(\|\boldsymbol{x}_2^j - \boldsymbol{x}_2^0\|
            +  \|\boldsymbol{y}_2^j- \boldsymbol{y}_2^0\| \Big)
            + \frac{1}{\mu} (L_h+\overline{L}) \|\boldsymbol{x}_2^0 - \boldsymbol{y}_2^0\|.
            \]
          \end{lemma}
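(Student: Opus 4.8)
The plan is to derive the bound through a three-step comparison, each step being an instance of the following elementary sensitivity fact, which I would first isolate: \emph{if $G$ is $\mu$-strongly convex with unique minimizer $\bar{\boldsymbol{u}}$, and $\Phi$ is differentiable with a minimizer $\boldsymbol{u}^{\ast}$ satisfying $\sup_{\boldsymbol{u}}\|\nabla\Phi(\boldsymbol{u})-\nabla G(\boldsymbol{u})\|\le\epsilon$, then $\|\boldsymbol{u}^{\ast}-\bar{\boldsymbol{u}}\|\le\epsilon/\mu$.} This follows from the strong monotonicity of $\nabla G$ together with the first-order conditions $\nabla G(\bar{\boldsymbol{u}})=\nabla\Phi(\boldsymbol{u}^{\ast})=\boldsymbol{0}$:
\[
\mu\,\|\boldsymbol{u}^{\ast}-\bar{\boldsymbol{u}}\|\le\|\nabla G(\boldsymbol{u}^{\ast})-\nabla G(\bar{\boldsymbol{u}})\|=\|\nabla G(\boldsymbol{u}^{\ast})-\nabla\Phi(\boldsymbol{u}^{\ast})\|\le\epsilon.
\]
The stationarity conditions are legitimate because $f_j$ and $h$, hence all functions involved, are differentiable and the minimizations are unconstrained, while $\mu$-strong convexity makes $G$ coercive so its minimizer exists and is unique.

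Next I would introduce the two anchor functions $G(\boldsymbol{x}_1):=f(\boldsymbol{x}_1,\boldsymbol{x}_2^0)$ and $\widetilde{G}(\boldsymbol{y}_1):=f(\boldsymbol{y}_1,\boldsymbol{y}_2^0)$, whose unique minimizers I call $\bar{\boldsymbol{x}}_1$ and $\bar{\boldsymbol{y}}_1$; these are restrictions of the jointly $\mu$-strongly convex $f$, hence $\mu$-strongly convex in their argument. I then apply the sensitivity fact three times. (i) $\boldsymbol{x}_1^{\ast}$ minimizes $\Phi(\boldsymbol{x}_1):=\frac1J\sum_{j=1}^J f_j(\boldsymbol{x}_1,\boldsymbol{x}_2^j)+h(\boldsymbol{x}_1,\boldsymbol{x}_2^0)$, and $\nabla\Phi-\nabla G=\frac1J\sum_{j=1}^J\big(\nabla_1 f_j(\cdot,\boldsymbol{x}_2^j)-\nabla_1 f_j(\cdot,\boldsymbol{x}_2^0)\big)$, so $L_j$-smoothness of $f_j$ gives $\|\nabla\Phi-\nabla G\|\le\frac1J\sum_{j=1}^J L_j\|\boldsymbol{x}_2^j-\boldsymbol{x}_2^0\|$ and hence $\|\boldsymbol{x}_1^{\ast}-\bar{\boldsymbol{x}}_1\|\le\mu^{-1}\frac1J\sum_{j=1}^J L_j\|\boldsymbol{x}_2^j-\boldsymbol{x}_2^0\|$. (ii) By the same argument with $\boldsymbol{y}$ in place of $\boldsymbol{x}$, $\|\boldsymbol{y}_1^{\ast}-\bar{\boldsymbol{y}}_1\|\le\mu^{-1}\frac1J\sum_{j=1}^J L_j\|\boldsymbol{y}_2^j-\boldsymbol{y}_2^0\|$. (iii) $\bar{\boldsymbol{y}}_1$ minimizes $\widetilde{G}$ and $\nabla G-\nabla\widetilde{G}=\nabla_1 f(\cdot,\boldsymbol{x}_2^0)-\nabla_1 f(\cdot,\boldsymbol{y}_2^0)$; since $\nabla f$ is Lipschitz with constant $L\le\overline{L}+L_h$ (Assumption~\ref{assumption:lipchitz_continuous_gradient}) and a block of the gradient is no more Lipschitz than the full gradient, $\|\nabla G-\nabla\widetilde{G}\|\le(\overline{L}+L_h)\|\boldsymbol{x}_2^0-\boldsymbol{y}_2^0\|$, so $\|\bar{\boldsymbol{x}}_1-\bar{\boldsymbol{y}}_1\|\le\mu^{-1}(\overline{L}+L_h)\|\boldsymbol{x}_2^0-\boldsymbol{y}_2^0\|$.

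Finally I would combine (i)--(iii) with the triangle inequality $\|\boldsymbol{x}_1^{\ast}-\boldsymbol{y}_1^{\ast}\|\le\|\boldsymbol{x}_1^{\ast}-\bar{\boldsymbol{x}}_1\|+\|\bar{\boldsymbol{x}}_1-\bar{\boldsymbol{y}}_1\|+\|\bar{\boldsymbol{y}}_1-\boldsymbol{y}_1^{\ast}\|$. Since the $j=0$ summands $\|\boldsymbol{x}_2^0-\boldsymbol{x}_2^0\|$ and $\|\boldsymbol{y}_2^0-\boldsymbol{y}_2^0\|$ vanish, the bounds from (i) and (ii) repackage into $\mu^{-1}\frac1J\sum_{j=0}^J L_j\big(\|\boldsymbol{x}_2^j-\boldsymbol{x}_2^0\|+\|\boldsymbol{y}_2^j-\boldsymbol{y}_2^0\|\big)$, which together with the bound from (iii) yields exactly the claimed inequality. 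I do not anticipate a genuine obstacle here; the only points requiring a little care are the justification of existence of the anchor minimizers and of the stationarity conditions used in the sensitivity fact, and the observation that restricting both evaluation points to a common first block turns Lipschitz continuity of $\nabla f$ into Lipschitz continuity of $\nabla_1 f$ in the second block with the same constant.
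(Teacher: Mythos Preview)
Your proof is correct and follows essentially the same approach as the paper: both rely on strong monotonicity of $\nabla_1 f(\cdot,\boldsymbol{x}_2^0)$ to convert gradient differences into minimizer distances, then invoke $L_j$- and $L_h$-smoothness together with the first-order optimality conditions to bound those gradient differences. The only organizational difference is that you introduce auxiliary anchor minimizers $\bar{\boldsymbol{x}}_1,\bar{\boldsymbol{y}}_1$ and triangulate $\|\boldsymbol{x}_1^\ast-\boldsymbol{y}_1^\ast\|$ through them (applying your sensitivity fact three times), whereas the paper applies strong monotonicity once to $\|\boldsymbol{x}_1^\ast-\boldsymbol{y}_1^\ast\|$ directly and then splits the resulting gradient difference $\|\nabla_1 f(\boldsymbol{y}_1^\ast,\boldsymbol{x}_2^0)-\nabla_1 f(\boldsymbol{x}_1^\ast,\boldsymbol{x}_2^0)\|$ into three pieces by the triangle inequality; the ingredients and the final bound are identical.
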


          \begin{proof}
            Due to the strong convexity, we have
            $
            \mu\|\boldsymbol{y}_1^{\ast} - \boldsymbol{x}_1^{\ast}\| \leq \|\nabla_1 f(\boldsymbol{y}_1^{\ast}, \boldsymbol{x}_2^0) - \nabla_1 f(\boldsymbol{x}_1^{\ast}, \boldsymbol{x}_2^0)\|.
            $
            Bounding the right-hand side by the triangle inequality gives
            \begin{equation}\label{eq:gradient_bound_1}
                \|\nabla_1 f(\boldsymbol{y}_1^{\ast}, \boldsymbol{x}_2^0) - \nabla_1 f(\boldsymbol{x}_1^{\ast}, \boldsymbol{x}_2^0)\| \leq \|\nabla_1 f(\boldsymbol{y}_1^{\ast}, \boldsymbol{y}_2^0)\| + \|\nabla_1 f(\boldsymbol{y}_1^{\ast}, \boldsymbol{x}_2^0) - \nabla_1 f(\boldsymbol{y}_1^{\ast}, \boldsymbol{y}_2^0)\| + \|\nabla_1 f(\boldsymbol{x}_1^{\ast}, \boldsymbol{x}_2^0)\|.
            \end{equation}
            Smoothness of $f_j$ and $h$ gives
            \begin{equation}\label{eq:gradient_bound_2}
            \|\nabla_1 f(\boldsymbol{y}_1^{\ast}, \boldsymbol{x}_2^0) - \nabla_1 f(\boldsymbol{y}_1^{\ast}, \boldsymbol{y}_2^0)\| \leq (L_h+\overline{L}) \|\boldsymbol{x}_2^0 - \boldsymbol{y}_2^0\|.
            \end{equation}
            By the optimal condition, 
            \[
            \|\nabla_1 f(\boldsymbol{y}_1^{\ast},\boldsymbol{y}_2^{0})\|
            = \Bigg\|
            \nabla_1 f(\boldsymbol{y}_1^{\ast},\boldsymbol{y}_2^{0})
            -\Bigg\{\frac{1}{J}\sum_{j=1}^J \nabla_1 f_j(\boldsymbol{y}_1^{\ast},\boldsymbol{y}_2^{j})
            +\nabla_1 h(\boldsymbol{y}_1^{\ast},\boldsymbol{y}_2^{0})\Bigg\}
            \Bigg\|
            \]
            which, according to smoothness of \(f_j\), are further bounded by
            \begin{equation}\label{eq:gradient_bound_3}
            \|\nabla_1 f(\boldsymbol{y}_1^{\ast},\boldsymbol{y}_2^{0})\|
            \;\le\; \frac{1}{J}\sum_{j=1}^J L_j \|\boldsymbol{y}_2^{j}-\boldsymbol{y}_2^{0}\|.
            \end{equation}
            Similarly,
            \begin{equation}\label{eq:gradient_bound_4}
            \|\nabla_1 f(\boldsymbol{x}_1^{\ast}, \boldsymbol{x}_2^0)\| \leq \frac{1}{J}\sum_{j=1}^J L_j \|\boldsymbol{x}_2^j-\boldsymbol{x}_2^0\|.
            \end{equation}
            Combining the Equations \ref{eq:gradient_bound_1}--\ref{eq:gradient_bound_4} gives the stated result.

          \end{proof}

  \bibliography{references}  

\begin{thebibliography}{58}
\providecommand{\natexlab}[1]{#1}
\providecommand{\url}[1]{\texttt{#1}}
\expandafter\ifx\csname urlstyle\endcsname\relax
  \providecommand{\doi}[1]{doi: #1}\else
  \providecommand{\doi}{doi: \begingroup \urlstyle{rm}\Url}\fi

\bibitem[Abdulah et~al.(2018)Abdulah, Ltaief, Sun, Genton, and Keyes]{abdulah2018parallel}
S.~Abdulah, H.~Ltaief, Y.~Sun, M.~G. Genton, and D.~E. Keyes.
\newblock Parallel approximation of the maximum likelihood estimation for the prediction of large-scale geostatistics simulations.
\newblock In \emph{2018 IEEE International Conference on Cluster Computing (CLUSTER)}, pages 98--108. IEEE, 2018.

\bibitem[Abramowitz and Stegun(1948)]{abramowitz1948handbook}
M.~Abramowitz and I.~A. Stegun.
\newblock \emph{Handbook of mathematical functions with formulas, graphs, and mathematical tables}, volume~55.
\newblock US Government printing office, 1948.

\bibitem[Achituve et~al.(2021)Achituve, Shamsian, Navon, Chechik, and Fetaya]{achituve2021personalized}
I.~Achituve, A.~Shamsian, A.~Navon, G.~Chechik, and E.~Fetaya.
\newblock Personalized federated learning with gaussian processes.
\newblock \emph{Advances in Neural Information Processing Systems}, 34:\penalty0 8392--8406, 2021.

\bibitem[Banerjee et~al.(2008)Banerjee, Gelfand, Finley, and Sang]{banerjee2008gaussian}
S.~Banerjee, A.~E. Gelfand, A.~O. Finley, and H.~Sang.
\newblock Gaussian predictive process models for large spatial data sets.
\newblock \emph{Journal of the Royal Statistical Society Series B: Statistical Methodology}, 70\penalty0 (4):\penalty0 825--848, 2008.

\bibitem[Bertsekas and Tsitsiklis(2015)]{bertsekas2015parallel}
D.~Bertsekas and J.~Tsitsiklis.
\newblock \emph{Parallel and distributed computation: numerical methods}.
\newblock Athena Scientific, 2015.

\bibitem[Boyd and Vandenberghe(2004)]{boyd2004convex}
S.~P. Boyd and L.~Vandenberghe.
\newblock \emph{Convex optimization}.
\newblock Cambridge University Press, 2004.

\bibitem[Bui et~al.(2017)Bui, Yan, and Turner]{bui2017unifying}
T.~D. Bui, J.~Yan, and R.~E. Turner.
\newblock A unifying framework for gaussian process pseudo-point approximations using power expectation propagation.
\newblock \emph{Journal of Machine Learning Research}, 18\penalty0 (104):\penalty0 1--72, 2017.

\bibitem[Chung and Al~Kontar(2024)]{chung2024federated}
S.~Chung and R.~Al~Kontar.
\newblock Federated multi-output gaussian processes.
\newblock \emph{Technometrics}, 66\penalty0 (1):\penalty0 90--103, 2024.

\bibitem[Cui et~al.(2004)Cui, He, and Ng]{cui2004m}
H.~Cui, X.~He, and K.~W. Ng.
\newblock M-estimation for linear models with spatially-correlated errors.
\newblock \emph{Statistics \& probability letters}, 66\penalty0 (4):\penalty0 383--393, 2004.

\bibitem[Dalcin and Fang(2021)]{dalcin2021mpi4py}
L.~Dalcin and Y.-L.~L. Fang.
\newblock mpi4py: Status update after 12 years of development.
\newblock \emph{Computing in Science \& Engineering}, 23\penalty0 (4):\penalty0 47--54, 2021.

\bibitem[Daly(2006)]{daly2006guidelines}
C.~Daly.
\newblock Guidelines for assessing the suitability of spatial climate data sets.
\newblock \emph{International Journal of Climatology: A Journal of the Royal Meteorological Society}, 26\penalty0 (6):\penalty0 707--721, 2006.

\bibitem[D{\"u}hr and M{\"u}ller(2012)]{duhr2012role}
S.~D{\"u}hr and A.~M{\"u}ller.
\newblock The role of spatial data and spatial information in strategic spatial planning.
\newblock \emph{Regional Studies}, 46\penalty0 (4):\penalty0 423--428, 2012.

\bibitem[Feyzmahdavian and Johansson(2023)]{feyzmahdavian2023asynchronous}
H.~R. Feyzmahdavian and M.~Johansson.
\newblock Asynchronous iterations in optimization: New sequence results and sharper algorithmic guarantees.
\newblock \emph{Journal of Machine Learning Research}, 24\penalty0 (158):\penalty0 1--75, 2023.

\bibitem[Finley et~al.(2009)Finley, Sang, Banerjee, and Gelfand]{finley2009improving}
A.~O. Finley, H.~Sang, S.~Banerjee, and A.~E. Gelfand.
\newblock Improving the performance of predictive process modeling for large datasets.
\newblock \emph{Computational statistics \& data analysis}, 53\penalty0 (8):\penalty0 2873--2884, 2009.

\bibitem[Fraboni et~al.(2023)Fraboni, Vidal, Kameni, and Lorenzi]{fraboni2023general}
Y.~Fraboni, R.~Vidal, L.~Kameni, and M.~Lorenzi.
\newblock A general theory for federated optimization with asynchronous and heterogeneous clients updates.
\newblock \emph{Journal of Machine Learning Research}, 24\penalty0 (110):\penalty0 1--43, 2023.

\bibitem[Gabrielli et~al.(2023)Gabrielli, Pica, and Tolomei]{gabrielli2023survey}
E.~Gabrielli, G.~Pica, and G.~Tolomei.
\newblock A survey on decentralized federated learning.
\newblock \emph{arXiv preprint arXiv:2308.04604}, 2023.

\bibitem[Gal et~al.(2014)Gal, Van Der~Wilk, and Rasmussen]{gal2014distributed}
Y.~Gal, M.~Van Der~Wilk, and C.~E. Rasmussen.
\newblock Distributed variational inference in sparse gaussian process regression and latent variable models.
\newblock \emph{Advances in neural information processing systems}, 27, 2014.

\bibitem[Gao and Chung(2024)]{gao2024federated}
J.~Gao and S.~Chung.
\newblock Federated automatic latent variable selection in multi-output gaussian processes.
\newblock \emph{arXiv preprint arXiv:2407.16935}, 2024.

\bibitem[Geng et~al.(2025)Geng, Abdulah, Sun, Ltaief, Keyes, and Genton]{geng2025gpu}
Z.~Geng, S.~Abdulah, Y.~Sun, H.~Ltaief, D.~E. Keyes, and M.~G. Genton.
\newblock {GPU}-accelerated modified bessel function of the second kind for {Gaussian} processes.
\newblock In \emph{ISC High Performance 2025 Research Paper Proceedings (40th International Conference)}, pages 1--12. Prometeus GmbH, 2025.

\bibitem[Graser et~al.(2022)Graser, Heistracher, and Pruckovskaja]{graser2022role}
A.~Graser, C.~Heistracher, and V.~Pruckovskaja.
\newblock On the role of spatial data science for federated learning.
\newblock In \emph{Spatial Data Science Symposium (SDSS 2022)}, pages 1--8, 2022.

\bibitem[Greengard and O’Neil(2022)]{greengard2022efficient}
P.~Greengard and M.~O’Neil.
\newblock Efficient reduced-rank methods for gaussian processes with eigenfunction expansions.
\newblock \emph{Statistics and Computing}, 32\penalty0 (5):\penalty0 94, 2022.

\bibitem[Guo et~al.(2022)Guo, Wu, and Ma]{guo2022federated}
X.~Guo, D.~Wu, and J.~Ma.
\newblock Federated sparse gaussian processes.
\newblock In \emph{International Conference on Intelligent Computing}, pages 267--276. Springer, 2022.

\bibitem[Haining and Haining(1993)]{haining1993spatial}
R.~P. Haining and R.~Haining.
\newblock \emph{Spatial data analysis in the social and environmental sciences}.
\newblock Cambridge university press, 1993.

\bibitem[Hensman et~al.(2018)Hensman, Durrande, and Solin]{hensman2018variational}
J.~Hensman, N.~Durrande, and A.~Solin.
\newblock Variational fourier features for gaussian processes.
\newblock \emph{Journal of Machine Learning Research}, 18\penalty0 (151):\penalty0 1--52, 2018.

\bibitem[Hoang et~al.(2016)Hoang, Hoang, and Low]{hoang2016distributed}
T.~N. Hoang, Q.~M. Hoang, and B.~K.~H. Low.
\newblock A distributed variational inference framework for unifying parallel sparse gaussian process regression models.
\newblock In \emph{International Conference on Machine Learning}, pages 382--391. PMLR, 2016.

\bibitem[Katzfuss(2017)]{katzfuss2017multi}
M.~Katzfuss.
\newblock A multi-resolution approximation for massive spatial datasets.
\newblock \emph{Journal of the American Statistical Association}, 112\penalty0 (517):\penalty0 201--214, 2017.

\bibitem[Katzfuss and Hammerling(2017)]{katzfuss2017parallel}
M.~Katzfuss and D.~Hammerling.
\newblock Parallel inference for massive distributed spatial data using low-rank models.
\newblock \emph{Statistics and Computing}, 27:\penalty0 363--375, 2017.

\bibitem[Kone{\v{c}}n{\`y} et~al.(2016)Kone{\v{c}}n{\`y}, McMahan, Yu, Richt{\'a}rik, Suresh, and Bacon]{konevcny2016federated}
J.~Kone{\v{c}}n{\`y}, H.~B. McMahan, F.~X. Yu, P.~Richt{\'a}rik, A.~T. Suresh, and D.~Bacon.
\newblock Federated learning: Strategies for improving communication efficiency.
\newblock \emph{arXiv preprint arXiv:1610.05492}, 2016.

\bibitem[Kontoudis and Stilwell(2024)]{kontoudis2024scalable}
G.~P. Kontoudis and D.~J. Stilwell.
\newblock Scalable, federated gaussian process training for decentralized multi-agent systems.
\newblock \emph{IEEE Access}, 2024.

\bibitem[Kumar et~al.(2012)Kumar, Mohri, and Talwalkar]{kumar2012sampling}
S.~Kumar, M.~Mohri, and A.~Talwalkar.
\newblock Sampling methods for the nystr{\"o}m method.
\newblock \emph{The Journal of Machine Learning Research}, 13\penalty0 (1):\penalty0 981--1006, 2012.

\bibitem[Liu and Wright(2015)]{liu2015asynchronous}
J.~Liu and S.~J. Wright.
\newblock Asynchronous stochastic coordinate descent: Parallelism and convergence properties.
\newblock \emph{SIAM Journal on Optimization}, 25\penalty0 (1):\penalty0 351--376, 2015.

\bibitem[McKenna et~al.(2018)McKenna, Gaudion, and Evans]{mckenna2018role}
A.~T. McKenna, A.~C. Gaudion, and J.~L. Evans.
\newblock The role of satellites and smart devices: data surprises and security, privacy, and regulatory challenges.
\newblock \emph{Penn St. L. Rev.}, 123:\penalty0 591, 2018.

\bibitem[McMahan et~al.(2017)McMahan, Moore, Ramage, Hampson, and y~Arcas]{mcmahan2017communication}
B.~McMahan, E.~Moore, D.~Ramage, S.~Hampson, and B.~A. y~Arcas.
\newblock Communication-efficient learning of deep networks from decentralized data.
\newblock In \emph{Artificial intelligence and statistics}, pages 1273--1282. PMLR, 2017.

\bibitem[Miller(1999)]{miller1999potential}
H.~J. Miller.
\newblock Potential contributions of spatial analysis to geographic information systems for transportation (gis-t).
\newblock \emph{Geographical Analysis}, 31\penalty0 (4):\penalty0 373--399, 1999.

\bibitem[Mondal et~al.(2023)Mondal, Abdulah, Ltaief, Sun, Genton, and Keyes]{mondal2023tile}
S.~Mondal, S.~Abdulah, H.~Ltaief, Y.~Sun, M.~G. Genton, and D.~E. Keyes.
\newblock Tile low-rank approximations of non-gaussian space and space-time tukey g-and-h random field likelihoods and predictions on large-scale systems.
\newblock \emph{Journal of Parallel and Distributed Computing}, 180:\penalty0 104715, 2023.

\bibitem[Nguyen et~al.(2022)Nguyen, Malik, Zhan, Yousefpour, Rabbat, Malek, and Huba]{nguyen2022federated}
J.~Nguyen, K.~Malik, H.~Zhan, A.~Yousefpour, M.~Rabbat, M.~Malek, and D.~Huba.
\newblock Federated learning with buffered asynchronous aggregation.
\newblock In \emph{International conference on artificial intelligence and statistics}, pages 3581--3607. PMLR, 2022.

\bibitem[Nychka et~al.(2015)Nychka, Bandyopadhyay, Hammerling, Lindgren, and Sain]{nychka2015multiresolution}
D.~Nychka, S.~Bandyopadhyay, D.~Hammerling, F.~Lindgren, and S.~Sain.
\newblock A multiresolution gaussian process model for the analysis of large spatial datasets.
\newblock \emph{Journal of Computational and Graphical Statistics}, 24\penalty0 (2):\penalty0 579--599, 2015.

\bibitem[Peng et~al.(2016)Peng, Xu, Yan, and Yin]{peng2016arock}
Z.~Peng, Y.~Xu, M.~Yan, and W.~Yin.
\newblock Arock: an algorithmic framework for asynchronous parallel coordinate updates.
\newblock \emph{SIAM Journal on Scientific Computing}, 38\penalty0 (5):\penalty0 A2851--A2879, 2016.

\bibitem[Salva{\~n}a et~al.(2022)Salva{\~n}a, Abdulah, Ltaief, Sun, Genton, and Keyes]{salvana2022parallel}
M.~L.~O. Salva{\~n}a, S.~Abdulah, H.~Ltaief, Y.~Sun, M.~G. Genton, and D.~E. Keyes.
\newblock Parallel space-time likelihood optimization for air pollution prediction on large-scale systems.
\newblock In \emph{Proceedings of the Platform for Advanced Scientific Computing Conference}, pages 1--11, 2022.

\bibitem[Shi et~al.(2025)Shi, Abdulah, Sun, and Genton]{shi2025decentralized}
J.~Shi, S.~Abdulah, Y.~Sun, and M.~G. Genton.
\newblock Decentralized inference for distributed geospatial data using low-rank models.
\newblock \emph{arXiv preprint arXiv:2502.00309}, 2025.

\bibitem[Song et~al.(2025)Song, Dai, and Genton]{song2025large}
Y.~Song, W.~Dai, and M.~G. Genton.
\newblock Large-scale low-rank gaussian process prediction with support points.
\newblock \emph{Journal of the American Statistical Association}, 120\penalty0 (550):\penalty0 1189--1200, 2025.

\bibitem[Stein(1999)]{stein1999interpolation}
M.~L. Stein.
\newblock \emph{Interpolation of spatial data: some theory for kriging}.
\newblock Springer Science \& Business Media, 1999.

\bibitem[Takekawa(2022)]{takekawa2022fast}
T.~Takekawa.
\newblock Fast parallel calculation of modified bessel function of the second kind and its derivatives.
\newblock \emph{SoftwareX}, 17:\penalty0 100923, 2022.

\bibitem[Titsias(2009)]{titsias2009variational}
M.~Titsias.
\newblock Variational learning of inducing variables in sparse gaussian processes.
\newblock In \emph{Artificial intelligence and statistics}, pages 567--574. PMLR, 2009.

\bibitem[Vanli et~al.(2018)Vanli, Gurbuzbalaban, and Ozdaglar]{vanli2018global}
N.~D. Vanli, M.~Gurbuzbalaban, and A.~Ozdaglar.
\newblock Global convergence rate of proximal incremental aggregated gradient methods.
\newblock \emph{SIAM Journal on Optimization}, 28\penalty0 (2):\penalty0 1282--1300, 2018.

\bibitem[Waller and Gotway(2004)]{waller2004applied}
L.~A. Waller and C.~A. Gotway.
\newblock \emph{Applied spatial statistics for public health data}.
\newblock John Wiley \& Sons, 2004.

\bibitem[Wang et~al.(2023)Wang, Abdulah, Sun, and Genton]{wang2023parameterization}
K.~Wang, S.~Abdulah, Y.~Sun, and M.~G. Genton.
\newblock Which parameterization of the {Mat{\'e}rn} covariance function?
\newblock \emph{Spatial Statistics}, 58:\penalty0 100787, 2023.

\bibitem[Wang et~al.(2024)Wang, Cao, Wu, Chen, and Chen]{wang2024tackling}
Y.~Wang, Y.~Cao, J.~Wu, R.~Chen, and J.~Chen.
\newblock Tackling the data heterogeneity in asynchronous federated learning with cached update calibration.
\newblock In \emph{12th International Conference on Learning Representations, ICLR 2024}, 2024.

\bibitem[Wu(2007)]{wu2007m}
W.~B. Wu.
\newblock M-estimation of linear models with dependent errors.
\newblock \emph{The Annals of Statistics}, pages 495--521, 2007.

\bibitem[Wu et~al.(2023)Wu, Liu, Magn{\'u}sson, and Johansson]{wu2023delay}
X.~Wu, C.~Liu, S.~Magn{\'u}sson, and M.~Johansson.
\newblock Delay-agnostic asynchronous coordinate update algorithm.
\newblock In \emph{International Conference on Machine Learning}, pages 37582--37606. PMLR, 2023.

\bibitem[Xie et~al.(2019)Xie, Koyejo, and Gupta]{xie2019asynchronous}
C.~Xie, S.~Koyejo, and I.~Gupta.
\newblock Asynchronous federated optimization.
\newblock \emph{arXiv preprint arXiv:1903.03934}, 2019.

\bibitem[Xu et~al.(2023)Xu, Qu, Xiang, and Gao]{xu2023asynchronous}
C.~Xu, Y.~Qu, Y.~Xiang, and L.~Gao.
\newblock Asynchronous federated learning on heterogeneous devices: A survey.
\newblock \emph{Computer Science Review}, 50:\penalty0 100595, 2023.

\bibitem[Xu et~al.(2024)Xu, Zhang, Di, Liu, Alharthi, and Cao]{xu2024fedfa}
H.~Xu, Z.~Zhang, S.~Di, B.~Liu, K.~A. Alharthi, and J.~Cao.
\newblock {FedFa}: A fully asynchronous training paradigm for federated learning.
\newblock In \emph{33rd International Joint Conference on Artificial Intelligence, IJCAI 2024}, pages 5281--5288, 2024.

\bibitem[Yin et~al.(2020)Yin, Lin, Kong, Xu, Li, Theodoridis, and Cui]{yin2020fedloc}
F.~Yin, Z.~Lin, Q.~Kong, Y.~Xu, D.~Li, S.~Theodoridis, and S.~R. Cui.
\newblock {FedLoc}: Federated learning framework for data-driven cooperative localization and location data processing.
\newblock \emph{IEEE Open Journal of Signal Processing}, 1:\penalty0 187--215, 2020.

\bibitem[Yu et~al.(2022)Yu, Guo, Karami, Chen, Zhang, and Poupart]{yu2022federated}
H.~Yu, K.~Guo, M.~Karami, X.~Chen, G.~Zhang, and P.~Poupart.
\newblock Federated bayesian neural regression: A scalable global federated gaussian process.
\newblock \emph{arXiv preprint arXiv:2206.06357}, 2022.

\bibitem[Yue and Kontar(2024)]{yue2024federated}
X.~Yue and R.~Kontar.
\newblock Federated gaussian process: Convergence, automatic personalization and multi-fidelity modeling.
\newblock \emph{IEEE Transactions on Pattern Analysis and Machine Intelligence}, 46\penalty0 (6):\penalty0 4246--4261, 2024.

\bibitem[Zakerinia et~al.(2024)Zakerinia, Talaei, Nadiradze, and Alistarh]{zakerinia2024communication}
H.~Zakerinia, S.~Talaei, G.~Nadiradze, and D.~Alistarh.
\newblock Communication-efficient federated learning with data and client heterogeneity.
\newblock In \emph{International Conference on Artificial Intelligence and Statistics}, pages 3448--3456. PMLR, 2024.

\bibitem[Zang et~al.(2024)Zang, Xue, Ou, Chu, Du, and Long]{zang2024efficient}
Y.~Zang, Z.~Xue, S.~Ou, L.~Chu, J.~Du, and Y.~Long.
\newblock Efficient asynchronous federated learning with prospective momentum aggregation and fine-grained correction.
\newblock In \emph{Proceedings of the AAAI Conference on Artificial Intelligence}, volume~38, pages 16642--16650, 2024.

\end{thebibliography}

\end{document}